\documentclass{CSML}

\def\dOi{11(4:13)2015}
\lmcsheading%
{\dOi}
{1--23}
{}
{}
{Sep.~16, 2014}
{Dec.~22, 2015}
{}

\ACMCCS{[{\bf Theory of computation}]: Models of computation;
  Semantics and reasoning---Program constructs; [{\bf Software and its
    engineering}]: Software notations and tools---General programming
  languages---Language features} 
\subjclass{F.1.2, F.3.2, F.3.3, D.3.3}

\usepackage{amsmath,amssymb,amsthm}
\usepackage{stmaryrd}
\usepackage{url,hyperref}
\usepackage{xcolor}
\usepackage{soul}
\usepackage{etoolbox}
\usepackage{mathpartir}
\usepackage{upgreek}
\usepackage{prooftree}
\usepackage{xspace}
\usepackage{tikz}
\usetikzlibrary{calc}


\newif\ifdraft
\drafttrue
\draftfalse


\newcommand{\ie}{\emph{i.e.}}
\newcommand{\eg}{\emph{e.g.}}

\newcommand{\emphdef}[1]{\emph{#1}}

\newcommand{\set}[1]{\{#1\}}
\newcommand{\subst}[2]{\{#1/#2\}}

\newcommand{\treeref}[1]{\text{\upshape(\textsc{D}#1)}}

\newcommand{\mkkeyword}[1]{\mathtt{\KeywordStyle#1}}

\newcommand{\proofcase}[1]{\vskip1ex\noindent\fbox{#1}}
\newcommand{\rproofcase}[1]{\proofcase{\refrule{#1}}}
\newcommand{\eoe}{\hfill{$\blacksquare$}}

\ifdraft
\setlength\marginparwidth{2cm}
\newcommand{\Luca}[1]{
  \marginpar{\CommentStyle\tiny\textbf{Luca}:~#1}
}
\else
\newcommand{\Luca}[1]{}
\fi

\newcommand{\ttparens}[1]{\ttlparen#1\ttrparen}

\newcommand{\ttbar}{\texttt{\upshape|}}
\newcommand{\ttunderscore}{\texttt{\upshape\char`_}}
\newcommand{\ttlparen}{\texttt{\upshape(}}
\newcommand{\ttrparen}{\texttt{\upshape)}}

\newcommand{\ttqmark}{\texttt{\upshape?}}
\newcommand{\ttemark}{\texttt{\upshape!}}
\newcommand{\ttstar}{\texttt{\upshape*}}
\newcommand{\ttdot}{\texttt{\upshape.}}
\newcommand{\ttcomma}{\texttt{\upshape,}}

\newcommand{\ttlbrack}{\texttt{\upshape[}}
\newcommand{\ttrbrack}{\texttt{\upshape]}}
\newcommand{\ttplus}{\texttt{\upshape+}}

\newenvironment{lines}[1][t]{
  \begin{array}[#1]{@{}l@{}}
}{
  \end{array}
}

\newcommand{\finalrevision}[1]{#1}
\newcommand{\revised}[1]{#1}

\newcommand{\pt}[1]{\[#1\]}


\newcommand{\rulename}[1]{%
  \text{\small[\textsc{#1}]}%
}

\newcommand{\defrule}[1]{%
  \hypertarget{rule:#1}{%
    \rulename{#1}%
  }%
}

\newcommand{\refrule}[1]{%
  \hyperlink{rule:#1}{%
    \rulename{#1}%
  }%
}


\definecolor{mymagenta}{rgb}{0.5,0,0.5}
\definecolor{mygreen}{rgb}{0,0.4,0}
\definecolor{myblue}{rgb}{0,0,0.6}
\definecolor{myred}{rgb}{0.4,0,0}
\definecolor{hlcolor}{rgb}{1,0.95,0}


\newcommand{\var}{\varX}
\newcommand{\varX}{x}
\newcommand{\varY}{y}
\newcommand{\varZ}{z}

\newcommand{\tvar}{\tvarA}
\newcommand{\tvarA}{\alpha}
\newcommand{\tvarB}{\beta}
\newcommand{\tvarC}{\gamma}
\newcommand{\tvarD}{\delta}

\newcommand{\Process}{\ProcessP}
\newcommand{\ProcessP}{P}
\newcommand{\ProcessQ}{Q}
\newcommand{\ProcessR}{R}

\newcommand{\Channel}{\ChannelA}
\newcommand{\ChannelA}{a}
\newcommand{\ChannelB}{b}
\newcommand{\ChannelC}{c}
\newcommand{\ChannelD}{d}

\newcommand{\Name}{\NameU}
\newcommand{\NameU}{u}
\newcommand{\NameV}{v}

\newcommand{\Term}{\TypeExpr}

\newcommand{\Type}{\TypeT}
\newcommand{\TypeT}{t}
\newcommand{\TypeS}{s}

\newcommand{\Label}{\ell}

\newcommand{\assignment}{substitution\xspace}
\newcommand{\assignments}{substitutions\xspace}
\newcommand{\Solution}{\sigma}


\newcommand{\SessionType}{\SessionTypeT}
\newcommand{\SessionTypeT}{T}
\newcommand{\SessionTypeS}{S}

\newcommand{\stin}[2]{{\ttqmark}{#1}\ttdot#2}
\newcommand{\stout}[2]{{\ttemark}{#1}\ttdot#2}


\newcommand{\Expression}{\ExpressionE}
\newcommand{\ExpressionE}{\mathsf{e}}
\newcommand{\ExpressionF}{\mathsf{f}}

\newcommand{\Value}{\ValueV}
\newcommand{\ValueV}{\mathsf{v}}
\newcommand{\ValueW}{\mathsf{w}}

\newcommand{\Nil}{\text{\KeywordStyle\texttt{Nil}}}
\newcommand{\Cons}{\mathbin{\KeywordStyle\texttt{Cons}}}

\newcommand{\PairX}[2]{#1\ttcomma#2}
\newcommand{\Pair}[2]{\ttparens{\PairX{#1}{#2}}}
\newcommand{\Project}[2]{#1\ifblank{#2}{}{\ttparens{#2}}}
\newcommand{\FirstX}[1]{\Project{\mkkeyword{fst}}{#1}}
\newcommand{\SecondX}[1]{\Project{\mkkeyword{snd}}{#1}}
\newcommand{\First}[1][]{\FirstX{#1}}
\newcommand{\Second}[1][]{\SecondX{#1}}

\newcommand{\Inject}[2]{#1\ifblank{#2}{}{\ttparens{#2}}}
\newcommand{\Left}[1][]{\Inject{\mkkeyword{inl}}{#1}}
\newcommand{\Right}[1][]{\Inject{\mkkeyword{inr}}{#1}}



\newcommand{\EmptyEnv}{\emptyset}


\newcommand{\idle}{\mkkeyword{idle}}
\newcommand{\bang}{{\ttstar}}
\newcommand{\new}[1]{\mkkeyword{new}~#1~\mkkeyword{in}~}
\newcommand{\send}[2]{#1\ttemark#2}
\newcommand{\receive}[3][\ttdot]{
  #2\ttqmark\ttlparen#3\ttrparen#1
}
\newcommand{\parop}{\mathbin\ttbar}

\newcommand{\Split}[4]{\mkkeyword{split}~#1~\mkkeyword{as}~\PairX{#2}{#3}~\mkkeyword{in}~#4}

\newcommand{\CaseV}[2]{
  \begin{array}[t]{@{}l@{~\Rightarrow~}l@{}}
    \multicolumn{2}{@{}l@{}}{
      \mkkeyword{case}~#1~\mkkeyword{of}
    }
    \\
    #2
  \end{array}
}

\newcommand{\CaseShort}[4]{
  \mkkeyword{case}{~#1~}
  \set{
    \Inject{#2}{#3_{#2}}\Rightarrow#4_{#2}
  }_{#2=\Left,\Right}
}
\newcommand{\IfX}[3]{
  \mkkeyword{if}#1\mkkeyword{then}#2\mkkeyword{else}#3
}
\newcommand{\If}[3]{\IfX{~#1~}{~#2~}{~#3}}


\newcommand{\uvar}{\varrho}

\newcommand{\Use}{\kappa}


\newcommand{\tint}{\mkkeyword{int}}
\newcommand{\tbool}{\mkkeyword{bool}}
\newcommand{\tsum}{\oplus}

\newcommand{\tand}{+}

\newcommand{\toper}{\odot}

\newcommand{\tchan}[3]{\ttlbrack#1\ttrbrack^{#2,#3}}


\newcommand{\TypeExpr}{\TypeExprT}
\newcommand{\TypeExprT}{\mathsf{T}}
\newcommand{\TypeExprS}{\mathsf{S}}

\newcommand{\UseExpr}{\UseExprU}
\newcommand{\UseExprU}{\mathsf{U}}
\newcommand{\UseExprV}{\mathsf{V}}



\newcommand{\Env}{\Upgamma}
\newcommand{\EnvX}{\Updelta}



\newcommand{\defined}[2]{\mathsf{def}_{#2}(#1)}
\newcommand{\undefined}[2]{\mathsf{undef}_{#2}(#1)}

\newcommand{\dom}{\mathsf{dom}}
\newcommand{\expr}{\mathsf{expr}}

\newcommand{\fn}{\mathsf{fn}}
\newcommand{\bn}{\mathsf{bn}}
\newcommand{\co}[1]{\overline{#1}}
\newcommand{\enc}[1]{\llbracket#1\rrbracket}

\newcommand{\mktvar}{\mathsf{t}}

\newcommand{\combineop}{\sqcup}
\newcommand{\mergeop}{\sqcap}

\newcommand{\combinenv}[4]{#1 \combineop #2 \leadsto #3; #4}
\newcommand{\mergenv}[4]{#1 \mergeop #2 \leadsto #3; #4}

\newcommand{\combine}{+}

\newcommand{\instance}[2]{
  \mathsf{instance}
  \ifblank{#1#2}{}{(#1,#2)}
}

\newcommand{\crep}[3]{\mathsf{crep}_{#3}(#1,#2)}


\newcommand{\Constraint}{\varphi}
\newcommand{\Constraints}{\mathcal{C}}

\newcommand{\con}[1]{\mathrel{\hat{#1}}}
\newcommand{\can}[2][\Constraints]{#2_{#1}}

\newcommand{\ceq}{\mathrel{\hat=}}


\newcommand{\eqdef}{\stackrel{\text{\tiny\sffamily\upshape def}}{=}}

\newcommand{\lred}[1][]{\stackrel{#1}{\longrightarrow}}

\newcommand{\atype}[1]{#1_{\mathit{type}}}

\newcommand{\rrel}{\mathrel{\mathcal{R}}}

\newcommand{\crel}{\atype{\mathcal{C}}}
\newcommand{\orel}{\sqsubseteq}
\newcommand{\prel}{\leq}

\newcommand{\eval}{\mathrel{\downarrow}}

\newcommand{\sle}{\preccurlyeq}
\newcommand{\seq}{\equiv}

\newcommand{\wte}[4][]{#2 \vdash\ifblank{#1}{}{_{#1}} #3 : #4}
\newcommand{\wtp}[3][]{#2 \vdash\ifblank{#1}{}{_{#1}} #3}

\newcommand{\rte}[4]{#1 : #2 \blacktriangleright #3; #4}
\newcommand{\rtp}[3]{#1 \blacktriangleright #2; #3}

\newcommand{\ded}[2]{#1 \Vdash #2}

\newcommand{\unlimited}[1]{\mathsf{un}(#1)}

\newcommand{\meq}{=}

\newcommand{\mall}{\sim}


\makeatletter
\newtheorem*{rep@theorem}{\rep@title}
\newcommand{\newreptheorem}[2]{%
\newenvironment{rep#1}[1]{%
 \def\rep@title{#2 \ref{##1}}%
 \begin{rep@theorem}}%
 {\end{rep@theorem}}}
\makeatother

\theoremstyle{definition}
\newtheorem{definition}{Definition}[section]
\newtheorem{example}[definition]{Example}

\theoremstyle{plain}
\newtheorem{proposition}[definition]{Proposition}
\newtheorem{lemma}[definition]{Lemma}
\newtheorem{theorem}[definition]{Theorem}
\newreptheorem{theorem}{Theorem}

\theoremstyle{remark}
\newtheorem{remark}[definition]{Remark}


\newcommand{\KeywordStyle}{\color{myblue}}
\newcommand{\CommentStyle}{\color{magenta}}

\begin{document}

\title[Type Reconstruction for the Linear $\pi$-Calculus]{
  Type Reconstruction for the Linear $\pi$-Calculus
  \\
  with Composite Regular Types\rsuper*
}

\author[L.~Padovani]{Luca Padovani}
\address{Dipartimento di Informatica, Universit\`a di Torino, Italy}
\email{luca.padovani@di.unito.it}
\thanks{This work has been supported by ICT COST Action IC1201 BETTY, MIUR
project CINA, Ateneo/CSP project SALT, and the bilateral project
RS13MO12 DART}

\keywords{linear pi-calculus, composite \revised{regular} types, shared access to
  data structures with linear values, type reconstruction}

\titlecomment{{\lsuper*}A preliminary version of this paper~\cite{Padovani14A}
  appears in the proceedings of the 17th International Conference on
  Foundations of Software Science and Computation Structures
  (FoSSaCS'14).}

\begin{abstract}
  We extend the linear $\pi$-calculus with composite regular types
  in such a way that data containing linear values can be shared among several
  processes, if there is no overlapping access to such values.
  We describe a type reconstruction algorithm for the extended type system and
  discuss some practical aspects of its implementation.
\end{abstract}

\maketitle

\section{Introduction}
\label{sec:introduction}

The linear $\pi$-calculus~\cite{KobayashiPierceTurner99} is a formal model of
communicating processes that distinguishes between \emph{unlimited} and
\emph{linear} channels. Unlimited channels can be used without restrictions,
whereas linear channels can be used for one communication only. Despite this
seemingly severe restriction, there is evidence that a significant portion of
communications in actual systems take place on linear
channels~\cite{KobayashiPierceTurner99}. It has also been shown that
structured communications can be encoded using linear channels and a
continuation-passing style~\cite{Kobayashi02b,DardhaGiachinoSangiorgi12}.  The
interest in linear channels has solid motivations: linear channels are
efficient to implement, they enable important
optimizations~\cite{IgarashiKobayashi97,Igarashi97,KobayashiPierceTurner99},
and communications on linear channels enjoy important properties such as
interference freedom and partial
confluence~\cite{NestmannSteffen97,KobayashiPierceTurner99}.  It follows that
understanding whether a channel is used linearly or not has a primary impact
in the analysis of systems of communicating processes.

Type reconstruction is the problem of \emph{inferring} the type of entities
used in an unannotated (\ie, untyped) program. In the case of the linear
$\pi$-calculus, the problem translates into understanding whether a channel is
linear or unlimited, and determining the type of messages sent over the
channel.
This problem has been addressed and solved in~\cite{IgarashiKobayashi00}.
The goal of our work is the definition of a type reconstruction
algorithm for the linear $\pi$-calculus extended with pairs, disjoint
sums, and \revised{possibly infinite} types.  These
\revised{features}, albeit standard, gain relevance and combine in
non-trivial ways with the features of the linear $\pi$-calculus. We
explain why this is the case in the rest of this section.

\newcommand{\succc}{\mathtt{succ}}
\newcommand{\printc}{\mathtt{print}}

The term below
\begin{equation}
\label{eq:succ}
  \bang\receive\succc{\PairX\varX\varY}
  \send\varY{\ttparens{\varX+1}}
  \parop
  \new\Channel{
    \ttparens{
      \send\succc{\Pair{39}\Channel}
      \parop
      \receive\Channel\varZ
      \send\printc\varZ
    }
  }
\end{equation}
models a program made of a persistent service (the $\bang$-prefixed process
waiting for messages on channel $\succc$) that computes the successor of a
number and a client (the $\mkkeyword{new}$-scoped process) that invokes the
service and prints the result of the invocation. Each message sent to the
service is a pair made of the number $\varX$ and a continuation channel
$\varY$ on which the service sends the result of the computation back to the
client.
There are three channels in this program, $\succc$ for invoking the service,
$\printc$ for printing numbers, and a private channel $\Channel$ which is used
by the client for receiving the result of the invocation.
In the linear $\pi$-calculus, \revised{types keep track of how each
  occurrence of a channel is being used. For example, the above
  program is well typed in the environment
\[
  \printc : \tchan\tint01, \succc : \tchan{\tint \times \tchan\tint01}\omega1
\]
where the type of $\printc$ indicates not only the type of messages
sent over the channel ($\tint$ in this case), but also that $\printc$
is never used for input operations (the $0$ annotation) and is used
once for one output operation (the $1$ annotation).}

The type of $\succc$ \revised{indicates that} messages sent over
$\succc$ are pairs of type $\tint \times \tchan\tint01$ -- the service
performs exactly one output operation on the channel $\varY$ which is
the second component of the pair -- \revised{and that $\succc$ is used
  for an unspecified number of input operations (the $\omega$
  annotation) and exactly one output operation (the $1$ annotation).}
\revised{Interestingly, the overall type of $\succc$ can be expressed
  as the combination of two slightly different types describing how
  each occurrence of $\succc$ is being used by the program:}
the leftmost occurrence of $\succc$ is used according to the type
$\tchan{\tint \times \tchan\tint01}\omega0$ (arbitrary inputs, no
outputs), while the rightmost occurrence of $\succc$ is used according
to the type $\tchan{\tint \times \tchan\tint01}01$ (no inputs, one
output). \revised{Following~\cite{KobayashiPierceTurner99}, we capture
  the overall use of a channel by means of a \emphdef{combination
    operator} $+$ on types such that, for example,}
\[
  \tchan{\tint \times \tchan\tint01}\omega0
  +
  \tchan{\tint \times \tchan\tint01}01
  =
  \tchan{\tint \times \tchan\tint01}\omega1  
\]
\revised{Concerning the restricted channel $\Channel$,} its rightmost
occurrence \revised{is used according to the} type $\tchan\tint10$,
since there $\Channel$ is used for one input of an integer number; the
occurrence of $\Channel$ in $\Pair{39}\Channel$ is in a message sent
on $\succc$, and we have already argued that the service uses this
channel according to the type $\tchan\tint01$; the type of the
leftmost, binding occurrence of $\Channel$ is the combination of these
two types, namely:
\[
  \tchan\tint01 + \tchan\tint10 = \tchan\tint11
\]
The type of $\Channel$ indicates that the program performs exactly one
input and exactly one output on $\Channel$, hence $\Channel$ is a
linear channel. Since $\Channel$ is restricted in the program, even if
the program is extended with more processes, it is not possible to
perform operations on $\Channel$ other than the ones we have tracked
in its type.

The key ingredient in the discussion above is the notion of type
combination~\cite{KobayashiPierceTurner99,IgarashiKobayashi00,SangiorgiWalker01},
which allows us to gather the overall number of input/output
operations performed on a channel. We now discuss how type combination
extends to composite \revised{and possibly infinite} types, which is
the main novelty of the present work.

So far we have taken for granted the ability to perform \emph{pattern
  matching} on the message received by the service on $\succc$ and to assign
distinct names, $\varX$ and $\varY$, to the components of the pair being
analyzed. Pattern matching is usually compiled using more basic
operations. For example, in the case of pairs these operations are the
$\First$ and $\Second$ projections that respectively extract the first and the
second component of the pair. So, a low-level modeling of the successor
service that uses $\First$ and $\Second$ could look like this:
\begin{equation}
  \label{eq:succ_fst_snd}
  \bang\receive\succc{p}
  \send{\Second[p]}{
    \ttparens{
      \First[p] + 1
    }
  }
\end{equation}

This version of the service is operationally equivalent to the
previous one, but from the viewpoint of typing there is an interesting
difference: in \eqref{eq:succ} the two components of the pair are
given distinct names $\varX$ and $\varY$ and each name is used
\emph{once} in the body of the service; in \eqref{eq:succ_fst_snd}
there is only one name $p$ for the whole pair which is projected
\emph{twice} in the body of the service. Given that each projection
accesses only one of the two components of the pair and ignores the
other, we can argue that the occurrence of $p$ in $\Second[p]$
\revised{is used according to the type} $\tint \times \tchan\tint01$
(the $1$ \revised{annotation} reflects the fact that the second
component of $p$ is a channel used for an output operation) whereas
the occurrence of $p$ in $\First[p]$ \revised{is used according to the
  type} $\tint \times \tchan\tint00$ (the second component of $p$ is
not used).  The key idea, then, is that we can
extend the type combination operator $+$ \revised{component-wise} to
product types to express the overall type of $p$ as the combination of
these two types:
\[
  (\tint \times \tchan\tint01)
  +
  (\tint \times \tchan\tint00)
  =
  (\tint + \tint) \times (\tchan\tint01 + \tchan\tint00)
  =
  \tint \times \tchan\tint01
\]
\newcommand{\Producer}{\mathit{P}}
\newcommand{\Consumer}{\mathit{C}}
\newcommand{\Print}{\mathit{Print}}
\newcommand{\TypeList}{\Type_{\mathit{list}}}%
\newcommand{\TypeOdd}{\Type_{\mathit{odd}}}%
\newcommand{\TypeEven}{\Type_{\mathit{even}}}%
\newcommand{\DoP}{\mathit{Odd}}%
\newcommand{\SkipP}{\mathit{Even}}%
\newcommand{\evenc}{\mathtt{even}}
\newcommand{\oddc}{\mathtt{odd}}
\newcommand{\sumc}{\mathtt{sum}}
\newcommand{\acc}{\mathit{acc}}
According to \revised{the result of such combination}, the second
component of $p$ is effectively used only once despite the multiple
syntactic occurrences of $p$.

The extension of type combination to products carries over to
\revised{disjoint sums and also to infinite types} as well. To
illustrate, consider the type $\TypeList$ satisfying the
\revised{equality}
\[
  \TypeList = \Nil \oplus \Cons\ttparens{\tchan\tint{1}{0} \times \TypeList}
\]
which is the disjoint sum between $\Nil$, the type of empty lists, and
$\Cons\ttparens{\tchan\tint{1}{0} \times \TypeList}$, the type of
non-empty lists with head of type $\tchan\tint{1}{0}$ and tail of type
$\TypeList$ \revised{(we will see shortly that there is a unique type
  $\TypeList$ satisfying the above equality relation)}.
Now, $\TypeList$ can be expressed as the combination $\TypeOdd +
\TypeEven$, where $\TypeOdd$ and $\TypeEven$ are the types that
satisfy the \revised{equalities}
\begin{equation}
\label{eq:oddeven}
  \TypeOdd = \Nil \oplus \Cons\ttparens{\tchan\tint{1}{0} \times \TypeEven}
  \text{\quad and\quad}
  \TypeEven = \Nil \oplus \Cons\ttparens{\tchan\tint{0}{0} \times \TypeOdd}
\end{equation}
\revised{(again, there are unique $\TypeOdd$ and $\TypeEven$ that
  satisfy these equalities, see Section~\ref{sec:types}).}

In words, $\TypeOdd$ is the type of lists of channels in which each
channel in an odd-indexed position is used for one input, while
$\TypeEven$ is the type of lists of channel in which each channel in
an even-indexed position is used for one input.
The reason why this particular decomposition of $\TypeList$ could be
interesting is that it enables the sharing of a list containing linear
channels among two processes, if we know that one process uses the
list according to the type $\TypeOdd$ and the other process uses the
same list according to the type $\TypeEven$.
\revised{For example, the process $\ProcessR$ defined below
\[
\begin{array}{@{}rcr@{}l@{}}
  \ProcessP & \eqdef &
  \bang\receive\oddc{l\ttcomma\acc\ttcomma r}&
  \CaseV{~l~}{
    \quad \Nil & \send{r}\acc
    \\
    \quad \Cons\Pair\varX{l'} & \receive\varX\varY \send\evenc{\ttparens{l'\ttcomma\ttparens{\acc+\varY}\ttcomma r}}
  }
  \\\\
  \ProcessQ & \eqdef &
  \bang\receive\evenc{l\ttcomma\acc\ttcomma r}&
  \CaseV{~l~}{
    \quad \Nil & \send{r}\acc
    \\
    \quad \Cons\Pair\varX{l'} & \send\oddc{\ttparens{l'\ttcomma\acc\ttcomma r}}
  }
  \\\\
  \ProcessR & \eqdef & 
  \multicolumn{2}{l}{
  \ProcessP \parop \ProcessQ \parop
  \new{\ChannelA\ttcomma\ChannelB}
  \ttparens{
    \send\oddc{\ttparens{l\ttcomma 0\ttcomma\ChannelA}}
    \parop
    \send\evenc{\ttparens{l\ttcomma 0\ttcomma\ChannelB}}
    \parop
    \receive\ChannelA\varX
    \receive\ChannelB\varY
    \send{r}{\ttparens{\varX + \varY}}
  }
  }
\end{array}
\]
uses each channel in a list $l$ for receiving a number, sums all such
numbers together, and sends the result on another channel
$r$. However, instead of scanning the list $l$ sequentially in a
single thread, $\ProcessR$ spawns two parallel threads (defined by
$\ProcessP$ and $\ProcessQ$) that share the \emph{very same list} $l$:
the first thread uses only the odd-indexed channels in $l$, whereas
the second thread uses only the even-indexed channels in $l$; the
(partial) results obtained by these two threads are collected by
$\ProcessR$ on two locally created channels $\ChannelA$ and
$\ChannelB$; the overall result is eventually sent on $r$.}
We are then able to deduce that $\ProcessR$ makes full use of the
channels in $l$, namely that $l$ has type $\TypeList$, even though the
list as a whole is simultaneously accessed by two parallel threads. In
general, we can see that the extension of type combination to
composite, \revised{potentially infinite} types is an effective tool
that fosters the parallelization of programs and allows composite data
structures containing linear values to be safely shared by a pool of
multiple processes, if there is enough information to conclude that
each linear value is accessed by exactly one of the processes in the
pool.

Such detailed reasoning on the behavior of programs comes at the price
of a more sophisticated definition of type combination. This brings us
back to the problem of type reconstruction. The reconstruction
algorithm described in this article is able to infer the types
$\TypeOdd$ and $\TypeEven$ of the messages accepted \revised{by
  $\ProcessP$ and $\ProcessQ$} by looking at the structure of these
two processes and of understanding that the overall type of $l$ in
$\ProcessR$ is $\TypeList$, namely that every channel in $l$ is used
exactly once.

\subsection*{Related work}
Linear type systems with composite types have been discussed
in~\cite{Igarashi97,IgarashiKobayashi97} for the linear $\pi$-calculus and
in~\cite{TurnerWadlerMossin95} for a functional language. In these works,
however, every structure that contains linear values becomes linear itself
(there are a few exceptions for specific types~\cite{Kobayashi06} or relaxed
notions of linearity~\cite{Kobayashi99}).

The original type reconstruction algorithm for the linear
$\pi$-calculus is described in~\cite{IgarashiKobayashi00}. Our work
extends~\cite{IgarashiKobayashi00} to composite and \revised{infinite}
types. Unlike~\cite{IgarashiKobayashi00}, however, we do not deal with
structural subtyping, whose integration into our type reconstruction
algorithm is left for future work. The type reconstruction algorithm
in~\cite{IgarashiKobayashi00} and the one we present share a common
structure in that they both comprise constraint generation and
constraint resolution phases. The main difference concerns the fact
that we have to deal with constraints expressing the combination of
yet-to-be-determined types, whereas in~\cite{IgarashiKobayashi00}
non-trivial type combinations only apply to channel types. This
allows~\cite{IgarashiKobayashi00} to use an efficient constraint
resolution algorithm based on unification. In our setting, the
\revised{presence of infinite} types hinders the use of unification,
and in some cases the resolution algorithm may conservatively
approximate the outcome in order to ensure proper termination.

Session types~\cite{Honda93,HondaVasconcelosKubo98} describe
\emph{linearized} channels, namely channels that can be used for
multiple communications, but only in a sequential way. There is a
tight connection between linear and linearized channels: as shown
in~\cite{Kobayashi02b,DemangeonHonda11,DardhaGiachinoSangiorgi12,Dardha14},
linearized channels can be encoded in the linear $\pi$-calculus. A
consequence of this encoding is that the type reconstruction algorithm
we present in this article can be used for inferring \revised{possibly
  infinite} session types (we will see an example of this feature in
Section~\ref{sec:examples}). The task of reconstructing session types
directly has been explored in~\cite{Mezzina08}, but for finite types
only.

\subsection*{Structure of the paper}
We present the calculus in Section~\ref{sec:language} and the type system in
Section~\ref{sec:types}. The type reconstruction algorithm consists of a
constraint generation phase (Section~\ref{sec:generator}) and a constraint
resolution phase (Section~\ref{sec:solver}).
We discuss some important issues related to the implementation of the
algorithm in Section~\ref{sec:implementation} and a few more elaborate
examples in Section~\ref{sec:examples}.
Section~\ref{sec:conclusion} concludes and hints at some ongoing and future
work.
Proofs of the results in Sections~\ref{sec:types}
and~\ref{sec:generator} are in Appendixes~\ref{sec:extra_types}
and~\ref{sec:extra_generator}, respectively.
\revised{Appendix~\ref{sec:extra_solver} illustrates a few typing
  derivations of examples discussed in Section~\ref{sec:solver}.}
A proof-of-concept implementation of the algorithm is available on the
author's home page.

\section{The $\pi$-calculus with data types}
\label{sec:language}

In this section we define the syntax and operational semantics of the
formal language we work with, which is an extension of the
$\pi$-calculus featuring \finalrevision{base} and composite data types
and a pattern matching construct.

\subsection{Syntax}
Let us introduce some notation first.
We use integer numbers $m$, $n$, $\dots$, a countable set of
\emphdef{channels} $\ChannelA$, $\ChannelB$, $\dots$, and a countable
set of \emphdef{variables} $\varX$, $\varY$, $\dots$ which is disjoint
from the set of channels;
\emphdef{names} $\NameU$, $\NameV$, $\dots$ are either channels or
variables.

\begin{table}
\framebox[\textwidth]{
\begin{math}
\displaystyle
\begin{array}{@{}c@{~}c@{}}
\begin{array}[t]{@{}rcl@{\quad}l@{}}
  \ProcessP, \ProcessQ & ~::=~ & & \textbf{Process} \\
  &   & \idle & \text{(idle process)} \\
  & | & \receive\Expression\var\Process & \text{(input)} \\
  & | & \send\ExpressionE\ExpressionF & \text{(output)} \\
  & | & \ProcessP \parop \ProcessQ & \text{(parallel composition)} \\
  & | & \bang\Process & \text{(process replication)} \\
  & | & \new\ChannelA \Process & \text{(channel restriction)} \\
  & | & \CaseShort\Expression{i}\varX\ProcessP & \text{(pattern matching)} \\
  \\
  \ExpressionE, \ExpressionF & ::= & & \textbf{Expression} \\
  &   & n & \text{(integer constant)} \\
  & | & \Name & \text{(name)} \\
  & | & \Pair\ExpressionE\ExpressionF & \text{(pair)} \\
  & | & \First[\Expression] & \text{(first projection)} \\
  & | & \Second[\Expression] & \text{(second projection)} \\
  & | & \Left[\Expression] & \text{(left injection)} \\
  & | & \Right[\Expression] & \text{(right injection)} \\
\end{array}
\end{array}
\end{math}
}
\caption{\label{tab:syntax} Syntax of processes and expressions.}
\end{table}

The syntax of expressions and processes is given in
Table~\ref{tab:syntax}.
\emphdef{Expressions} $\ExpressionE$, $\ExpressionF$, $\dots$ are
either integers, names, pairs $\Pair\ExpressionE\ExpressionF$ of
expressions, the $i$-th projection of an expression
$\Project{i}\Expression$ where $i\in\set{\First,\Second}$, or the
injection $\Inject{i}\Expression$ of an expression $\Expression$ using
the constructor $i\in\set{\Left,\Right}$.
\revised{Using projections $\First$ and $\Second$ instead of a pair
  splitting construct, as found for instance
  in~\cite{SangiorgiWalker01,Pierce04}, is somewhat unconventional,
  but helps us highlighting some features of our type system. We will
  discuss some practical aspects of this choice in
  Section~\ref{sec:splitting}.}

%
\emphdef{Values} $\ValueV$, $\ValueW$, $\dots$ are expressions without
variables and occurrences of the projections $\First$ and $\Second$.

\emphdef{Processes} $\ProcessP$, $\ProcessQ$, $\dots$ comprise and
extend the standard constructs of the asynchronous $\pi$-calculus.
The $\idle$ process performs no action;
the input process $\receive\Expression\var\Process$ waits for a
message $\Value$ from the channel denoted by $\Expression$
\finalrevision{and continues as $\Process$ where $\var$ has been
  replaced by $\Value$};
the output process $\send\ExpressionE\ExpressionF$ sends the value
resulting from the evaluation of $\ExpressionF$ on the channel
resulting from the evaluation of $\ExpressionE$;
the composition $\ProcessP \parop \ProcessQ$ executes $\ProcessP$ and
$\ProcessQ$ in parallel;
the replication $\bang\Process$ denotes infinitely many copies of
$\Process$ executing in parallel;
the restriction $\new\Channel\Process$ creates a new channel
$\Channel$ with scope $\Process$.
In addition to these, we include a pattern matching construct
$\CaseShort\Expression{i}\var\Process$ which evaluates $\Expression$
to a value of the form $\Inject{i}\Value$ for some
$i\in\set{\Left,\Right}$, binds $\Value$ to $\var_i$ and continues as
$\Process_i$.
The notions of \emphdef{free names} $\fn(\Process)$ and \emphdef{bound
  names} $\bn(\Process)$ of $\Process$ are as expected, recalling that
$\CaseShort\Expression{i}\var\Process$ binds $\var_i$ in $\Process_i$.
We identify processes modulo renaming of bound names and we write
$\Expression\subst\Value\var$ and $\Process\subst\Value\var$ for the
capture-avoiding substitutions of $\Value$ for the free occurrences of $\var$
in $\Expression$ and $\Process$, respectively.
\revised{Occasionally, we omit $\idle$ when it is guarded by a prefix.}

\begin{table}
\framebox[\columnwidth]{
\begin{math}
\displaystyle
\begin{array}[t]{@{}c@{}}
  \inferrule[\defrule{s-par 1}]{}{
    \idle \parop \Process \seq \Process
  }
  \qquad
  \inferrule[\defrule{s-par 2}]{}{
    \ProcessP \parop \ProcessQ \seq \ProcessQ \parop \ProcessP
  }
  \qquad
  \inferrule[\defrule{s-par 3}]{}{
    \ProcessP \parop (\ProcessQ \parop \ProcessR)
    \seq
    (\ProcessP \parop \ProcessQ) \parop \ProcessR
  }
  \qquad
  \inferrule[\defrule{s-rep}]{}{
    *\Process \sle *\Process \parop \Process
  }
  \\\\
  \inferrule[\defrule{s-res 1}]{}{
    \new\ChannelA
    \new\ChannelB
    \Process
    \equiv
    \new\ChannelB
    \new\ChannelA
    \Process
  }
  \qquad
  \inferrule[\defrule{s-res 2}]{
    \ChannelA \not\in \fn(\ProcessQ)
  }{
    (\new\ChannelA \ProcessP) \parop \ProcessQ
    \equiv
    \new\ChannelA (\ProcessP \parop \ProcessQ)
  }
\end{array}
\end{math}
}
  \caption{\label{tab:congruence} Structural pre-congruence for processes.}
\end{table}

\begin{table}[t]
\framebox[\columnwidth]{
\begin{math}
\displaystyle
\begin{array}[t]{@{}c@{}}
  \inferrule[\defrule{e-int}]{}{
    n \eval n
  }
  \qquad
  \inferrule[\defrule{e-chan}]{}{
    \Channel \eval \Channel
  }
  \\\\
  \inferrule[\defrule{e-pair}]{
    \Expression_i \eval \Value_i~{}^{(i=1,2)}
  }{
    \Pair{\Expression_1}{\Expression_2}
    \eval
    \Pair{\Value_1}{\Value_2}
  }
  \qquad
  \inferrule[\defrule{e-fst}]{
    \Expression \eval \Pair\ValueV\ValueW
  }{
    \First[\Expression] \eval \ValueV
  }
  \qquad
  \inferrule[\defrule{e-snd}]{
    \Expression \eval \Pair\ValueV\ValueW
  }{
    \Second[\Expression] \eval \ValueW
  }
  \qquad
  \inferrule[\defrule{e-inr}, \defrule{e-inl}]{
    \Expression \eval \Value
    \\
    k\in\set{\Left,\Right}
  }{
    \Inject{k}{\Expression} \eval \Inject{k}{\Value}
  }
  \\\\
  \hline
  \\
  \inferrule[\defrule{r-comm}]{
    \ExpressionE_i \eval \Channel~{}^{(i=1,2)}
    \\
    \ExpressionF \eval \Value
  }{
    \send{\ExpressionE_1}{\ExpressionF}
    \parop
    \receive{\ExpressionE_2}{\var}\ProcessQ
    \lred[\ChannelA]
    \ProcessQ\subst{\Value}{\var}
  }
  \qquad
  \inferrule[\defrule{r-case}]{
    \Expression \eval \Inject{k}\Value
    \\
    k\in\set{\Left,\Right}
  }{
    \CaseShort\Expression{i}\var\Process
    \lred[\tau]
    \ProcessP_k\subst\Value{\varX_k}
  }
  \\\\
  \inferrule[\defrule{r-par}]{
    \ProcessP \lred[\Label] \ProcessP'
  }{
    \ProcessP \parop \ProcessQ \lred[\Label] \ProcessP' \parop
    \ProcessQ
  }
  \qquad
  \inferrule[\defrule{r-new 1}]{
    \ProcessP
    \lred[\Channel]
    \ProcessQ
  }{
    \new\Channel\ProcessP
    \lred[\tau]
    \new\Channel\ProcessQ
  }
  \qquad
  \inferrule[\defrule{r-new 2}]{
    \ProcessP
    \lred[\Label]
    \ProcessQ
    \\
    \Label\ne\Channel
  }{
    \new\Channel\ProcessP
    \lred[\Label]
    \new\Channel\ProcessQ
  }
  \\\\
  \inferrule[\defrule{r-struct}]{
    \ProcessP \sle \ProcessP'
    \\
    \ProcessP' \lred[\Label] \ProcessQ'
    \\
    \ProcessQ' \sle \ProcessQ
  }{
    \ProcessP \lred[\Label] \ProcessQ
  }
\end{array}
\end{math}
}
\caption{\label{tab:reduction} Evaluation of expressions and reduction of processes.}
\end{table}

\subsection{Operational semantics}
The operational semantics of the language is defined in terms of a
structural pre-congruence relation for processes, an evaluation
relation for expressions, and a reduction relation for processes.
\emphdef{Structural pre-congruence} $\sle$ is meant to rearrange process terms
which should not be distinguished. The relation is defined in
Table~\ref{tab:congruence}, where we write $\ProcessP \seq \ProcessQ$ in place
of the two inequalities $\ProcessP \sle \ProcessQ$ and $\ProcessQ \sle
\ProcessP$. Overall $\seq$ coincides with the conventional structural
congruence of the $\pi$-calculus, except that, as in~\cite{Kobayashi02}, we
omit the relation $\bang\Process \parop \Process \sle \bang\Process$ (the
reason will be explained in Remark~\ref{rem:sle}).

\emphdef{Evaluation} $\Expression \eval \Value$ and \emphdef{reduction}
$\ProcessP \lred[\Label] \ProcessQ$ are defined in Table~\ref{tab:reduction}.
Both relation are fairly standard. As in~\cite{KobayashiPierceTurner99},
reduction is decorated with a \emphdef{label} $\Label$ that is either a
channel or the special symbol $\tau$: in \refrule{r-comm} the label is the
channel $\Channel$ on which a message is exchanged; in \refrule{r-case} it is
$\tau$ since pattern matching is an internal computation not involving
communications.
Note that, as we allow expressions in input and output processes for both the
subject and the object of a communication, rule \refrule{r-comm} provides
suitable premises to evaluate them.
Rules~\refrule{r-par}, \refrule{r-new 1}, and \refrule{r-new 2}
propagate labels through parallel compositions and restrictions. In
\refrule{r-new 1}, the label $\Channel$ becomes $\tau$ when it escapes
the scope of $\Channel$. Rule \refrule{r-struct} closes reduction
under structural congruence.







\begin{example}[list sharing]
\label{ex:evenodd}
Below are the desugared representations of $\ProcessP$ and $\ProcessQ$
discussed in Section~\ref{sec:introduction}:
\[
\begin{array}{@{}rcl@{}}
  \ProcessP' & \eqdef &
  \bang\begin{lines}
  \receive\oddc\varZ\\
  \CaseV{~\First[\varZ]~}{
    \quad\Left[\ttunderscore]
    &
    \send{\SecondX{\SecondX\varZ}}{\FirstX{\SecondX\varZ}}
    \\
    \quad\Right[\varX]
    &
    \receive{\First[\varX]}\varY
    \send\evenc{
      \Pair{
        \Second[\varX]
      }{
        \Pair{
          \FirstX{\SecondX\varZ} + \varY
        }{
          \SecondX{\SecondX\varZ}
        }
      }
    }
  }
\end{lines}
\\
  \ProcessQ' & \eqdef &
  \bang\begin{lines}
  \receive\evenc\varZ\\
  \CaseV{~\First[\varZ]~}{
    \quad\Left[\ttunderscore]
    &
    \send{\SecondX{\SecondX\varZ}}{\FirstX{\SecondX\varZ}}
    \\
    \quad\Right[\varX]
    &
    \send\oddc{
      \Pair{
        \Second[\varX]
      }{
        \Pair{
          \FirstX{\SecondX\varZ}
        }{
          \SecondX{\SecondX\varZ}
        }
      }
    }
  }
\end{lines}
\end{array}
\]
where the constructors $\Left$ and $\Right$ respectively replace
$\Nil$ and $\Cons$, $\Left$ has an (unused) argument denoted by the
anonymous variable $\ttunderscore$, \revised{and tuple components are
  accessed using (possibly repeated) applications of $\First$ and
  $\Second$.}
\eoe
\end{example}

\section{Type system}
\label{sec:types}

In this section we define a type system for the language presented in
Section~\ref{sec:language}. The type system extends the one for the
linear $\pi$-calculus~\cite{KobayashiPierceTurner99} with composite
and \revised{possibly infinite, regular} types.
The key feature of the linear $\pi$-calculus is that channel types are
enriched with information about the number of times the channels they
denote are used for input/output operations. Such number is abstracted
into a \emphdef{use} $\Use$, $\dots$, which is an element of the set
$\set{0,1,\omega}$ where 0 and 1 obviously stand for no use and one
use only, while $\omega$ stands for any number of uses.


\revised{
\begin{definition}[types]
\label{def:types}
\emphdef{Types}, ranged over by $\TypeT$, $\TypeS$, $\dots$, are the
\emph{possibly infinite regular trees} built using the nullary
constructor $\tint$, the unary constructors
$\tchan\cdot{\Use_1}{\Use_2}$ for every combination of $\Use_1$ and
$\Use_2$, the binary constructors $\cdot \times \cdot$ (product) and
$\cdot \tsum \cdot$ (disjoint sum).
\end{definition}
}

The type $\tchan\Type{\Use_1}{\Use_2}$ denotes channels for exchanging
messages of type $\Type$. The uses $\Use_1$ and $\Use_2$ respectively denote
how many input and output operations are allowed on the channel. For example:
a channel with type $\tchan\Type{0}{1}$ cannot be used for input and must be
used once for sending a message of type $\Type$; a channel with type
$\tchan\Type{0}{0}$ cannot be used at all; a channel with type
$\tchan\Type\omega\omega$ can be used any number of times for sending and/or
receiving messages of type $\Type$.
A product $\Type_1 \times \Type_2$ describes pairs
$\Pair{\ValueV_1}{\ValueV_2}$ where $\ValueV_i$ has type $\TypeT_i$
for $i=1,2$.
A disjoint sum $\Type_1 \tsum \Type_2$ describes values of the form
$\Left[\Value]$ where $\Value$ has type $\Type_1$ or of the form
$\Right[\Value]$ where $\Value$ has type $\Type_2$.
Throughout the paper we let $\toper$ stand for either $\times$ or
$\tsum$.

\revised{ We do not provide a concrete, finite syntax for denoting
  infinite types and work directly with regular trees instead.
  Recall that a regular tree \finalrevision{is a partial function from
    paths to type constructors (see \eg~\cite[Chapter 21]{Pierce02}),
    it} consists of finitely many distinct subtrees, and admits finite
  representations using either the well-known $\mu$ notation or finite
  systems of equations~\cite{Courcelle83} \finalrevision{(our
    implementation internally uses both)}.
  Working directly with regular trees gives us the coarsest possible
  notion of type equality ($\TypeT = \TypeS$ means that
  \finalrevision{$\TypeT$ and $\TypeS$ are the same partial function})
  and it allows us to reuse some key results on regular trees that
  will be essential in the following. In particular, throughout the
  paper we will implicitly use the next result to define types as
  solutions of particular systems of equations:

\begin{theorem}
\label{thm:courcelle}
Let $\set{\tvar_i = \Term_i \mid 1\leq i\leq n}$ be a finite system of
equations where each $\Term_i$ is a finite term built using the
constructors in Definition~\ref{def:types} and the pairwise distinct
unknowns $\set{\tvar_1,\dots,\tvar_n}$.
If none of the $\Term_i$ is an unknown, then there exists a unique
substitution $\Solution = \set{ \tvar_i \mapsto \Type_i \mid 1 \leq
  i\leq n }$ such that $\Type_i = \Solution\Term_i$ and $\Type_i$ is a
regular tree for each $1\leq i\leq n$.
\end{theorem}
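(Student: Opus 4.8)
The plan is to regard arbitrary (possibly non-regular) types as points of a complete ultrametric space, recast the system of equations as a single self-map $\Phi$ on $n$-tuples of such types, observe that the guardedness hypothesis — none of the $\Term_i$ is an unknown — turns $\Phi$ into a strict contraction, and then invoke the Banach fixed point theorem, which delivers \emph{both} existence and uniqueness in one shot; that the resulting types are regular is then a separate, purely finitary subtree-counting argument.

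In detail, view a tree over the signature of Definition~\ref{def:types} as a partial function from paths to constructors whose domain is prefix-closed and consistent with the arities. For trees $\Type,\TypeS$ set $d(\Type,\TypeS)=2^{-k}$, where $k$ is the length of a shortest path on which $\Type$ and $\TypeS$ carry different constructors or on which exactly one of them is defined (and $d(\Type,\TypeS)=0$ when $\Type=\TypeS$). This $d$ is an ultrametric and the space of all such trees is complete, as is the sup-metric $d_\infty$ on $n$-tuples. A finite term $\mathsf{U}$ built from the constructors and the unknowns $\tvar_1,\dots,\tvar_n$ induces, for each tuple $\bar s=(s_1,\dots,s_n)$ of trees, the tree $\bar s\mathsf{U}$ obtained by replacing each $\tvar_j$ by $s_j$; put $\Phi(\bar s)=(\bar s\Term_1,\dots,\bar s\Term_n)$. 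A solution in the sense of the statement is precisely a fixed point of $\Phi$.

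I would first show that $\Phi$ is a $\tfrac12$-contraction. A routine structural induction gives $d(\bar s\mathsf{U},\bar s'\mathsf{U})\leq d_\infty(\bar s,\bar s')$ for every term $\mathsf{U}$, with a factor $\tfrac12$ to spare whenever $\mathsf{U}$ is not a bare unknown (its root constructor is then forced). Now fix $i$: because $\Term_i$ is \emph{not} an unknown — the only place the hypothesis is used — we may write $\Term_i=c(\mathsf{U}_1,\dots,\mathsf{U}_m)$, so $\bar s\Term_i$ and $\bar s'\Term_i$ agree at the root, every discrepancy between them sits under some child, and $d(\bar s\Term_i,\bar s'\Term_i)\leq\tfrac12\max_l d(\bar s\mathsf{U}_l,\bar s'\mathsf{U}_l)\leq\tfrac12 d_\infty(\bar s,\bar s')$. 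Taking the maximum over $i$ yields $d_\infty(\Phi(\bar s),\Phi(\bar s'))\leq\tfrac12 d_\infty(\bar s,\bar s')$, so by the Banach fixed point theorem $\Phi$ has a unique fixed point $(\Type_1,\dots,\Type_n)$. This gives the required substitution $\Solution=\set{\tvar_i\mapsto\Type_i}$, and uniqueness holds among \emph{all} tuples of trees, hence a fortiori among the regular ones.

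It remains to check regularity. Let $\mathcal{S}$ be the finite set of all subterms of all $\Term_j$, and let $\mathcal{T}=\set{\Solution\mathsf{S}\mid\mathsf{S}\in\mathcal{S}}$, a finite set of trees. I would verify that $\mathcal{T}$ is closed under taking immediate subtrees: if $\mathsf{S}=c(\mathsf{U}_1,\dots,\mathsf{U}_m)$ the immediate subtrees of $\Solution\mathsf{S}$ are the $\Solution\mathsf{U}_l$ with $\mathsf{U}_l\in\mathcal{S}$; and if $\mathsf{S}=\tvar_k$ then $\Solution\mathsf{S}=\Solution\Term_k$, where $\Term_k\in\mathcal{S}$ is again not an unknown, so the previous case applies. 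Closure under immediate subtrees entails closure under arbitrary subtrees, and since each $\Type_i=\Solution\tvar_i\in\mathcal{T}$, every subtree of $\Type_i$ lies in the finite set $\mathcal{T}$; hence $\Type_i$ is regular. The only genuinely load-bearing step is the contraction estimate, and what makes it work is exactly guardedness: dropping it — say for the system $\tvar=\tvar$ — leaves $\Phi$ merely non-expansive, every tree becomes a fixed point, and uniqueness fails. (An equivalent, slightly longer development would replace the metric by the prefix/approximation order on finite trees, use that its ideal completion is a cpo on which $\Phi$ is continuous, and again read off a unique fixed point from guardedness.)
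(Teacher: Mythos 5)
Your argument is correct, but it is a genuinely different route from the paper's: the paper disposes of this statement in one line by observing that the right-hand sides are finite (hence regular) terms, none of which is an unknown, and then citing Theorem~4.3.1 of Courcelle's survey on regular trees as the source of existence, uniqueness, and regularity of the solution. You instead reprove the result from scratch: you equip the set of all (not necessarily regular) trees with the standard $2^{-k}$ ultrametric, package the system as a self-map $\Phi$ on $n$-tuples, show that guardedness (no $\Term_i$ is a bare unknown) forces a root constructor and hence a contraction factor $\tfrac12$, and conclude by Banach; regularity then follows from your observation that the images under $\Solution$ of the finitely many subterms of the $\Term_j$ form a finite set closed under immediate subtrees. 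This is essentially the textbook proof of the result the paper cites, so nothing is lost; what your version buys is self-containedness, an explicit identification of where the guardedness hypothesis is load-bearing (with the counterexample $\tvar = \tvar$), and uniqueness among \emph{all} trees rather than only among regular ones. One cosmetic remark: in the last step you write $\Type_i = \Solution\tvar_i \in \mathcal{T}$, but $\tvar_i$ need not occur as a subterm of any right-hand side; the correct justification is that $\Type_i = \Solution\Term_i$ and $\Term_i$ is a subterm of itself, hence $\Type_i \in \mathcal{T}$ anyway. This does not affect the validity of the proof.
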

\begin{proof}
  All the right hand sides of the equations are finite -- hence
  regular -- and different from an unknown, therefore this result is
  just a particular case of~\cite[Theorem~4.3.1]{Courcelle83}.
\end{proof}

\begin{example}[integer stream]
  The type of \emph{integer streams} $\tint \times \ttparens{\tint
    \times \ttparens{\tint \times \cdots}}$ is the unique regular tree
  $\Type$ such that $\Type = \tint \times \Type$.
  To make sense out of this statement we have to be sure that such
  $\Type$ does exist and is indeed unique. Consider the equation
  $\tvar = \tint \times \tvar$ obtained from the above equality by
  turning each occurrence of the metavariable $\Type$ into the unknown
  $\tvar$ and observe that the right hand side of such equation is not
  an unknown. By Theorem~\ref{thm:courcelle}, there exists a unique
  regular tree $\Type$ such that $\Type = \tint \times \Type$.
  \finalrevision{Note that $\Type$ consists of two distinct subtrees,
    $\tint$ and $\Type$ itself.}
  \eoe
\end{example}

\begin{example}[lists]
  To verify the existence of the types $\TypeOdd$ and $\TypeEven$
  informally introduced in Section~\ref{sec:introduction}, consider
  the system of equations
\[
  \set{
  \tvar_1 = \tint \oplus \ttparens{\tchan\tint{1}{0} \times \tvar_2},
  \tvar_2 = \tint \oplus \ttparens{\tchan\tint{0}{0} \times \tvar_1}
  }
\]
obtained by turning the metavariables~$\TypeOdd$ and $\TypeEven$
in~\eqref{eq:oddeven} respectively into the unknowns $\tvar_1$ and
$\tvar_2$ and by using basic types and disjoint sums in place of the
list constructors $\Nil$ and $\Cons$. Theorem~\ref{thm:courcelle} says
that there exist two unique regular trees $\TypeOdd$ and $\TypeEven$
such that $\TypeOdd = \tint \oplus \ttparens{\tchan\tint10 \times
  \TypeEven}$ and $\TypeEven = \tint \oplus \ttparens{\tchan\tint00
  \times \TypeOdd}$.
Similarly, $\TypeList$ is the unique type such that $\TypeList = \tint
\oplus \ttparens{\tchan\tint10 \times \TypeList}$.
\eoe
\end{example}
}


We now define some key notions on uses and types. To begin with, we
define a binary operation $+$ on uses that allows us to express the
\emph{combined} use $\Use_1 + \Use_2$ of a channel that is used both
as denoted by $\Use_1$ \emph{and} as denoted by $\Use_2$. Formally:
\begin{equation}
\label{eq:comb}
\Use_1 + \Use_2
\eqdef
\begin{cases}
  \Use_1 & \text{if $\Use_2 = 0$} \\
  \Use_2 & \text{if $\Use_1 = 0$} \\
  \omega & \text{otherwise}
\end{cases}
\end{equation}

Note that $0$ is neutral and $\omega$ is absorbing for $+$ and that $1 + 1 =
\omega$, since $\omega$ is the only use allowing us to express the fact that a
channel is used twice.
In a few places we will write $2\Use$ as an abbreviation for $\Use + \Use$.






We now lift the notion of combination from uses to types.  Since types may be
infinite, we resort to a coinductive definition.

\begin{definition}[type combination]
\label{def:tand}
Let $\crel$ be the largest relation between pairs of types and types
such that $((\TypeT_1, \TypeT_2), \TypeS) \in \crel$ implies either:
\begin{itemize}
\item $\TypeT_1 = \TypeT_2 = \TypeS = \tint$, or

\item $\TypeT_1 = \tchan\Type{\Use_1}{\Use_2}$ and $\TypeT_2 =
  \tchan\Type{\Use_3}{\Use_4}$ and $\TypeS = \tchan\Type{\Use_1 +
    \Use_3}{\Use_2 + \Use_4}$, or

\item $\TypeT_1 = \TypeT_{11} \odot \TypeT_{12}$ and $\TypeT_2 =
  \TypeT_{21} \odot \TypeT_{22}$ and $\TypeS = \TypeS_1 \odot
  \TypeS_2$ and $((\TypeT_{1i}, \TypeT_{2i}), \TypeS_i) \in \crel$ for
  $i=1,2$.
\end{itemize}

\revised{Observe that $\crel$ is a partial binary function on types,
  that is $((\Type_1,\Type_2),\TypeS_1) \in \crel$ and
  $((\Type_1,\Type_2),\TypeS_2) \in \crel$ implies $\TypeS_1 =
  \TypeS_2$.
  When $(\TypeT, \TypeS) \in \dom(\crel)$, we write $\TypeT + \TypeS$
  for $\crel(\TypeT, \TypeS)$, that is the \emphdef{combination} of
  $\TypeT$ and $\TypeS$.
  Occasionally we also write $2\Type$ in place of $\Type + \Type$.}
\end{definition}

\revised{Intuitively, basic types combine with themselves and the
  combination of channel types with equal message types is obtained by
  combining corresponding uses.} For example, we have $\tchan\tint01 +
\tchan\tint10 = \tchan\tint11$ and $\tchan{\tchan\tint10}01 +
\tchan{\tchan\tint10}11 = \tchan{\tchan\tint10}1\omega$. In the latter
example, note that the uses of channel types within the top-most ones
are \emph{not} combined together.
Type combination propagates \revised{component-wise} on composite
types. \revised{For instance, we have $(\tchan\tint01 \times
  \tchan\tint00) + (\tchan\tint00 \times \tchan\tint10) =
  (\tchan\tint01 + \tchan\tint00) \times (\tchan\tint00 +
  \tchan\tint10) = \tchan\tint01 \times \tchan\tint10$.}
Unlike use combination, type combination is a partial operation: it is
undefined to combine two types having different structures, or to
combine two channel types carrying messages of different types. For
example, $\tint + \tchan\tint00$ is undefined and so is
$\tchan{\tchan\tint00}{0}{1} + \tchan{\tchan\tint01}{0}{1}$, because
$\tchan\tint00$ and $\tchan\tint01$ differ.

Types that can be combined together play a central role, so we name a relation
that characterizes them:

\begin{definition}[coherent types]
\label{def:mall}
We say that $\TypeT$ and $\TypeS$ are \emphdef{structurally coherent}
or simply \emphdef{coherent}, notation $\TypeT \mall \TypeS$, if
$\TypeT + \TypeS$ is defined, \finalrevision{namely there exists
  $\Type'$ such that $((\TypeT, \TypeS), \Type') \in \crel$.}
\end{definition}

Observe that $\mall$ is an equivalence relation, implying that a type
can always be combined with itself (\ie, $2\Type$ is always defined).
Type combination is also handy for characterizing a fundamental
partitioning of types:

\begin{definition}[unlimited and linear types]
\label{def:un}
We say that $\Type$ is \emph{unlimited}, notation $\unlimited\Type$,
if $2\Type = \Type$. We say that it is \emph{linear} otherwise.
\end{definition}

Channel types are either linear or unlimited depending on their
uses.
\revised{For example, $\tchan\Type00$ is unlimited because
  $\tchan\Type00 + \tchan\Type00 = \tchan\Type00$, whereas
  $\tchan\Type10$ is linear because $\tchan\Type10 + \tchan\Type10 =
  \tchan\Type\omega0 \ne \tchan\Type10$.
  Similarly, $\tchan\Type\omega\omega$ is unlimited while
  $\tchan\Type{0}{1}$ and $\tchan\Type11$ are linear.}
Other types are linear or unlimited depending on the channel types
occurring in them. For instance, $\tchan\Type{0}{0} \times
\tchan\Type{1}{0}$ is linear while $\tchan\Type{0}{0} \times
\tchan\Type{\omega}{0}$ is unlimited. Note that only the topmost
channel types of a type matter. For example, $\tchan{\tchan\Type11}00$
is unlimited despite of the fact that it contains the subterm
$\tchan\Type11$ which is itself linear, because such subterm is found
within an unlimited channel type.

We use \emphdef{type environments} to track the type of free names occurring
in expressions and processes.  Type environments $\Env$, $\dots$ are finite
maps from names to types that we write as $\Name_1 : \Type_1, \dots, \Name_n :
\Type_n$.
We identify type environments modulo the order of their associations,
write $\EmptyEnv$ for the \emphdef{empty environment}, $\dom(\Env)$
for the \emphdef{domain} of $\Env$, namely the set of names for which
there is an association in $\Env$, and $\Env_1, \Env_2$ for the
\emphdef{union} of $\Env_1$ and $\Env_2$ when $\dom(\Env_1) \cap
\dom(\Env_2) = \emptyset$.
We also extend the partial \revised{combination} operation $\tand$ on
types to a partial \revised{combination} operation on type
environments, thus:
\begin{equation}
\label{eq:combine}
\Env_1 \combine \Env_2
\eqdef
\begin{cases}
  \Env_1, \Env_2
  & \text{if $\dom(\Env_1) \cap \dom(\Env_2) = \emptyset$}
  \\
  (\Env_1' + \Env_2'), \Name : \Type_1 + \Type_2
  & \text{if $\Env_i = \Env_i', \Name : \Type_i$ for $i=1,2$}
\end{cases}
\end{equation}
The operation $\combine$ extends type combination
in~\cite{KobayashiPierceTurner99} and the $\uplus$ operator
in~\cite{SangiorgiWalker01}.
Note that $\Env_1 \combine \Env_2$ is undefined if there is
$\Name\in\dom(\Env_1) \cap \dom(\Env_2)$ such that $\Env_1(\Name)
\tand \Env_2(\Name)$ is undefined. Note also that $\dom(\Env_1
\combine \Env_2) = \dom(\Env_1) \cup \dom(\Env_2)$.
Thinking of type environments as of specifications of the resources
used by expressions/processes, $\Env_1 \combine \Env_2$ expresses the
combined use of the resources specified in $\Env_1$ and $\Env_2$. Any
resource occurring in only one of these environments occurs in $\Env_1
\combine \Env_2$; any resource occurring in both $\Env_1$ and $\Env_2$
is used according to the combination of its types in $\Env_1 \combine
\Env_2$.
For example, if a process sends an integer over a channel $\Channel$,
it will be typed in an environment that contains the association
$\Channel : \tchan\tint{0}{1}$; if another process uses the same
channel $\Channel$ for receiving an integer, it will be typed in an
environment that contains the association $\Channel :
\tchan\tint{1}{0}$. Overall, the parallel composition of the two
processes uses channel $\Channel$ according to the type
$\tchan\tint{0}{1} \tand \tchan\tint{1}{0} = \tchan\tint{1}{1}$ and
therefore it will be typed in an environment that contains the
association $\Channel : \tchan\tint11$.

The last notion we need before presenting the type rules is that of an unlimited
type environment. This is a plain generalization of the notion of unlimited
type, extended to the range of a type environment. We say that $\Env$ is
unlimited, notation $\unlimited\Env$, if $\unlimited{\Env(\Name)}$ for every
$\Name\in\dom(\Env)$. 
A process typed in an unlimited type environment \revised{need not}
use any of the resources described therein.

\begin{table}[t]
\framebox[\columnwidth]{
\begin{math}
\displaystyle
\begin{array}[t]{@{}c@{}}
  \multicolumn{1}{@{}l@{}}{\textbf{Expressions}} \\\\
  \inferrule[\defrule{t-int}]{
    \unlimited\Env
  }{
    \wte{\Env}{n}{\tint}
  }
  \qquad
  \inferrule[\defrule{t-name}]{
    \unlimited\Env
  }{
    \wte{\Env,\Name:\Type}{\Name}{\Type}
  }
  \qquad
  \inferrule[\defrule{t-inl}]{
    \wte{\Env}{\Expression}{\TypeT}
  }{
    \wte{\Env}{\Left[\Expression]}{\TypeT \tsum \TypeS}
  }
  \qquad
  \inferrule[\defrule{t-inr}]{
    \wte{\Env}{\Expression}{\TypeS}
  }{
    \wte{\Env}{\Right[\Expression]}{\TypeT \tsum \TypeS}
  }
  \\\\
  \inferrule[\defrule{t-pair}]{
    \wte{\Env_i}{\Expression_i}{\TypeT_i}~{}^{(i=1,2)}
  }{
    \wte{\Env_1 \tand \Env_2}{\Pair{\Expression_1}{\Expression_2}}{\TypeT_1 \times \TypeT_2}
  }
  \qquad
  \inferrule[\defrule{t-fst}]{
    \wte\Env\Expression{\TypeT \times \TypeS}
    \\
    \unlimited\TypeS
  }{
    \wte\Env{\First[\Expression]}\TypeT
  }
  \qquad
  \inferrule[\defrule{t-snd}]{
    \wte\Env\Expression{\TypeT \times \TypeS}
    \\
    \unlimited\TypeT
  }{
    \wte\Env{\Second[\Expression]}\TypeS
  }
  \\\\
  \multicolumn{1}{@{}l@{}}{\textbf{Processes}} \\\\
  \inferrule[\defrule{t-idle}]{
      \unlimited\Env
  }{
    \wtp{\Env}{\idle}
  }
  \qquad
  \inferrule[\defrule{t-in}]{
    \wte{\Env_1}\Expression{\tchan\Type{1+\Use_1}{2\Use_2}}
    \\
    \wtp{\Env_2, \var : \Type}\Process
  }{
    \wtp{
      \Env_1 \tand \Env_2
    }{
      \receive\Expression\var\Process
    }
  }
  \qquad
  \inferrule[\defrule{t-out}]{
    \wte{\Env_1}{\ExpressionE}{\tchan\Type{2\Use_1}{1+\Use_2}}
    \\
    \wte{\Env_2}{\ExpressionF}{\Type}
  }{
    \wtp{
      \Env_1 \tand \Env_2
    }{
      \send\ExpressionE\ExpressionF
    }
  }
  \\\\
  \inferrule[\defrule{t-rep}]{
    \wtp{\Env}{\Process}
    \\
    \unlimited\Env
  }{
    \wtp{
      \Env
    }{
      \bang\Process
    }
  }
  \qquad
  \inferrule[\defrule{t-par}]{
    \wtp{\Env_i}{\ProcessP_i}~{}^{(i=1,2)}
  }{
    \wtp{\Env_1 \tand \Env_2}{\ProcessP_1 \parop \ProcessP_2}
  }
  \qquad
  \inferrule[\defrule{t-new}]{
    \wtp{
      \Env, \Channel : \tchan\Type{\Use}{\Use}
    }{
      \Process
    }
  }{
    \wtp{\Env}{
      \new{\Channel}\Process
    }
  }
  \\\\
  \inferrule[\defrule{t-case}]{
    \wte{\Env_1}\Expression{\TypeT\tsum\TypeS}
    \\
    \wtp{\Env_2, \varX_i : \TypeT}{\ProcessP_i}
    ~{}^{(i=\Left,\Right)}
  }{
    \wtp{\Env_1 \tand \Env_2}{
      \CaseShort\Expression{i}{\var}\Process
    }
  }
\end{array}
\end{math}
}
\caption{\label{tab:type_system} Type rules for expressions and processes.}
\end{table}

Type rules for expressions and processes are presented in
Table~\ref{tab:type_system}.  These rules are basically the same as
those found in the
literature~\cite{KobayashiPierceTurner99,IgarashiKobayashi00}. \revised{The
  possibility of sharing data structures among several processes,
  which we have exemplified in Section~\ref{sec:introduction},} is a
consequence of our notion of type combination \revised{extended to
  composite regular types.}

Type rules for expressions are unremarkable. Just observe that unused
type environments must be unlimited. Also, the projections $\First$
and $\Second$ discard one component of a pair, so the discarded
component must have an unlimited type.

Let us move on to the type rules for processes.
The idle process does nothing, so it is well typed only in an
unlimited environment.
Rule~\refrule{t-in} types an input process
$\receive\Expression\var\Process$. The subject $\Expression$ must
evaluate to a channel whose input use is either $1$ or $\omega$ and
whose output use is either $0$ or $\omega$. We capture the first
condition saying that the input use of the channel has the form $1 +
\Use_1$ for some $\Use_1$, and the second condition saying that the
output use of the channel has the form $2\Use_2$ for some $\Use_2$.
The continuation $\Process$ is typed in an environment enriched with
the association for the received message $\var$. \revised{Note the
  combination $\Env_1 + \Env_2$ in the conclusion of
  rule~\refrule{t-in}.} In particular, if $\Expression$ evaluates to a
linear channel, its input capability is consumed by the operation and
such channel can no longer be used for inputs in the continuation.
Rule~\refrule{t-out} types an output process
$\send\ExpressionE\ExpressionF$. The rule is dual to \refrule{t-in} in
that it requires the channel to which $\ExpressionE$ evaluates to have
a positive output use.
Rule~\refrule{t-rep} states that a replicated process $\bang\Process$ is well
typed in the environment $\Env$ provided that $\Process$ is well typed in an
unlimited $\Env$. The rationale is that $\bang\Process$ stands for an
unbounded number of copies of $\Process$ composed in parallel, hence
$\Process$ cannot contain (free) linear channels.
The rules \refrule{t-par} and \refrule{t-case} are conventional, with
the by now familiar use of environment combination for properly
distributing linear resources to the various subterms of a process.
The rule \refrule{t-new} is also conventional. We require the restricted
channel to have the same input and output uses. While this is not necessary
for the soundness of the type system, in practice it is a reasonable
requirement. We also argue that this condition is important for the modular
application of the type reconstruction algorithm; we will discuss this aspect
more in detail in Section~\ref{sec:implementation}.

\revised{As in many behavioral type systems,} the type environment in
which the reducing process is typed may change as a consequence of the
reduction. More specifically, reductions involving a communication on
channels \emph{consume} 1 unit from both the input and output uses of
the channel's type.
In order to properly state subject reduction, we define a reduction relation
over type environments. In particular, we write $\lred[\Label]$ for the least
relation between type environments such that
\[
\Env \lred[\tau] \Env
\text{\qquad\qquad}
\Env + \Channel : \tchan\Type{1}{1}
\lred[\Channel]
\Env
\]
In words, $\lred[\tau]$ denotes an internal computation (pattern matching) or
a communication on some restricted channel which does not consume any resource
from the type environment, while $\lred[\Channel]$ denotes a communication on
channel $\Channel$ which consumes 1 use from both the input and output slots
in $\Channel$'s type. For example, we have
\[
  \Channel : \tchan\tint11 \lred[\Channel] \Channel : \tchan\tint00
\]
\revised{by taking $\Env \eqdef \Channel : \tchan\tint00$ in the
  definition of $\lred[\Channel]$ above, since $\Env + \Channel :
  \tchan\tint11 = \Channel : \tchan\tint11$.} The residual environment
denotes the fact that the (linear) channel $\Channel$ can no longer be
used for communication.

Now we have:

\begin{theorem}
\label{thm:sr}
Let $\wtp\Env\ProcessP$ and $\ProcessP \lred[\Label] \ProcessQ$. Then
$\wtp{\Env'}\ProcessQ$ for some $\Env'$ such that $\Env \lred[\Label]
\Env'$.
\end{theorem}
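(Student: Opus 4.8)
The plan is to prove the statement by induction on the derivation of $\ProcessP \lred[\Label] \ProcessQ$, arguing by cases on the last reduction rule of Table~\ref{tab:reduction}; matching the resulting residual environment against one of the two clauses that define $\lred[\Label]$ on type environments is then routine. The induction rests on auxiliary results I would establish first (in an appendix): \emph{inversion} of the typing rules, which is immediate because Table~\ref{tab:type_system} is syntax-directed and has no subsumption rule; a \emph{substitution} lemma, asserting that $\wtp{\Env_1, \var : \Type}{\ProcessP}$ and $\wte{\Env_2}{\Value}{\Type}$ with $\Env_1 + \Env_2$ defined give $\wtp{\Env_1 + \Env_2}{\ProcessP\subst\Value\var}$, and likewise for expressions; \emph{weakening}, both in the form that adds an unlimited environment over fresh names and in the ``combination'' form, whereby $\wtp{\Env}{\ProcessP}$ and $\unlimited{\Env'}$ with $\Env + \Env'$ defined yield $\wtp{\Env + \Env'}{\ProcessP}$; an \emph{evaluation} lemma stating that $\wte{\Env}{\Expression}{\Type}$ and $\Expression \eval \Value$ imply $\wte{\Env_0}{\Value}{\Type}$ for some $\Env_0$ with $\Env = \Env_0 + \Env_1$ and $\unlimited{\Env_1}$ --- in words, evaluating an expression only discards unlimited resources, namely those attributed to the components erased by the projections $\First$ and $\Second$ and by the non-selected injection; and \emph{subject congruence}: $\ProcessP \sle \ProcessP'$ and $\wtp{\Env}{\ProcessP}$ imply $\wtp{\Env}{\ProcessP'}$. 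The last lemma is exactly why, as anticipated in Remark~\ref{rem:sle}, the expansion law $\bang\Process \parop \Process \sle \bang\Process$ is absent from Table~\ref{tab:congruence}: it would break subject congruence, whereas the retained direction $\bang\Process \sle \bang\Process \parop \Process$ is harmless because $\bang\Process$ is typed in an unlimited environment $\Env$ with $\Env + \Env = \Env$.

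The heart of the proof is the case \refrule{r-comm}, for $\send{\ExpressionE_1}{\ExpressionF} \parop \receive{\ExpressionE_2}{\var}\ProcessQ \lred[\Channel] \ProcessQ\subst\Value\var$. Inverting \refrule{t-par}, \refrule{t-out}, and \refrule{t-in} splits $\Env$ into combinations of: an environment typing $\ExpressionE_1$ at a channel type with output use $1 + \Use$ and carried type $\Type$; one typing $\ExpressionF$ at that same $\Type$; one typing $\ExpressionE_2$ at a channel type with input use $1 + \Use$; and one typing $\ProcessQ$ with an extra hypothesis $\var : \Type'$, where $\Type'$ is the receiver's message type. Applying the evaluation lemma to the three evaluation premises of \refrule{r-comm} replaces the two subject premises by typings of the \emph{name} $\Channel$, so by \refrule{t-name} the channel type of $\Channel$ is pinned down in the relevant sub-environments; since channel types combine only when their message parts agree, $\Env(\Channel)$ is defined only if $\Type' = \Type$, and moreover $\Env(\Channel)$ has both uses $\geq 1$. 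Hence $\Env = \Env'' + \Channel : \tchan\Type11$ with all the leftover fragments unlimited; the substitution lemma plants $\Value$ (typed, via the evaluation lemma, at $\Type$) into $\ProcessQ$, and the combination form of weakening re-absorbs the unlimited leftovers to produce $\wtp{\Env''}{\ProcessQ\subst\Value\var}$, and $\Env = \Env'' + \Channel : \tchan\Type11$ witnesses $\Env \lred[\Channel] \Env''$.

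Case \refrule{r-case} follows the same recipe without the channel bookkeeping: the evaluation lemma on $\Expression \eval \Inject{k}{\Value}$ yields a typing of $\Inject{k}{\Value}$, inversion of \refrule{t-inl}/\refrule{t-inr} exposes a typing of $\Value$, substitution inserts it into the selected branch, and combination-weakening restores the unlimited part that the selection discarded; the label is $\tau$, so the environment is unchanged. Cases \refrule{r-par}, \refrule{r-new 1}, and \refrule{r-new 2} are handled by the induction hypothesis, threading the residual environment through \refrule{t-par} and \refrule{t-new}; the one delicate point is that once a communication on a restricted channel has consumed one input and one output use, its residual type may have unequal input/output uses, so \refrule{t-new} can be re-applied only after combination-weakening is used to raise both uses to a common unlimited value. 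Case \refrule{r-struct} is dispatched by subject congruence together with the induction hypothesis.

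I expect \refrule{r-comm} to be the main obstacle, and within it the genuinely fiddly bookkeeping is keeping track of exactly which fragments of $\Env$ stay unlimited across the successive decompositions, so that the combination form of weakening is legitimately applicable when recombining them into $\Env''$. Two further subtleties are best confined to the auxiliary lemmas. First, the substitution lemma cannot assume the free channels of $\Value$ to be disjoint from those of $\ProcessQ$ --- typically they are not, which is exactly how a single channel ends up shared between the continuation and the received value --- so its proof must recombine the contributions of the pre-existing occurrences and the substituted ones. Second, the evaluation lemma has to ``return'' faithfully the unlimited resources that the premise derivations attributed to the components killed by $\First$, $\Second$ and by the non-chosen injection, which is what makes the residual environment of the communication come out exactly as $\lred[\Channel]$ prescribes.
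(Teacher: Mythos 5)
Your proposal matches the paper's proof essentially step for step: the same induction on the reduction derivation, the same auxiliary lemmas (weakening, substitution, subject congruence for $\sle$, compatibility of environment reduction with $+$, and preservation of typing under expression evaluation), and the same key manipulation in the \refrule{r-comm} case, where \refrule{t-name} is used to split off $\Channel : \tchan\Type11$ and the unlimited remainders are re-absorbed by weakening. The only (harmless) deviations are that the paper states the evaluation lemma with the \emph{same} environment ($\wte\Env\Expression\Type$ and $\Expression\eval\Value$ give $\wte\Env\Value\Type$), which is equivalent to your unlimited-leftover formulation given weakening, and that your remark about equalizing the residual uses of a restricted channel before re-applying \refrule{t-new} is a correct detail the paper leaves implicit under ``the other cases are similar or simpler.''
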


Theorem~\ref{thm:sr} establishes not only a subject reduction result,
but also a soundness result because it implies that a channel is used
no more than its type allows.
It is possible to establish more properties of the linear $\pi$-calculus, such
as the fact that communications involving linear channels enjoy partial
confluence. In this work we focus on the issue of type reconstruction. The
interested reader may refer to~\cite{KobayashiPierceTurner99} for further
results.

\newcommand{\OddC}{\ChannelA}
\newcommand{\EvenC}{\ChannelB}

\newcommand{\TypeZero}{\Type_{\mathit{zero}}}

\newcommand{\SZero}{\TypeS_{\mathit{zero}}}
\newcommand{\SEven}{\TypeS_{\mathit{even}}}
\newcommand{\SOdd}{\TypeS_{\mathit{odd}}}

\begin{example}
\label{ex:do_typing}
We consider again the processes \finalrevision{$\ProcessP'$ and
  $\ProcessQ'$} in Example~\ref{ex:evenodd} and sketch a few key
derivation steps to argue that they are well typed. To this aim,
consider the types $\TypeOdd$, $\TypeEven$, and $\TypeZero$ that
satisfy the equalities below
\[
\begin{array}{rcl}
  \TypeOdd  & = & \tint \oplus \ttparens{\tchan\tint01 \times \TypeEven} \\
  \TypeEven & = & \tint \oplus \ttparens{\tchan\tint00 \times \TypeOdd} \\
  \TypeZero & = & \tint \oplus \ttparens{\tchan\tint00 \times \TypeZero} \\
\end{array}
\]
and also consider the types of the messages respectively carried by
$\oddc$ and $\evenc$:
\[
\begin{array}{rcl}
  \SOdd & \eqdef & \TypeOdd \times \ttparens{\tint \times \tchan\tint01} \\
  \SEven & \eqdef & \TypeEven \times \ttparens{\tint \times \tchan\tint01} \\
\end{array}
\]
Now, in the $\Left$ branch of \finalrevision{$\ProcessP'$} we derive
\treeref1
\[
\begin{prooftree}
  \pt{
    \pt{
      \justifies
      \wte{
        \varZ : \TypeZero \times \ttparens{\tint \times \tchan\tint01}
      }{
        \varZ
      }{
        \TypeZero \times \ttparens{\tint \times \tchan\tint01}
      }
      \using\refrule{t-name}
    }
    \justifies
    \wte{
      \varZ : \TypeZero \times \ttparens{\tint \times \tchan\tint01}
    }{
      \SecondX\varZ
    }{
      \tint \times \tchan\tint01
    }
    \using\refrule{t-snd}
  }
  \justifies
  \wte{
    \varZ : \TypeZero \times \ttparens{\tint \times \tchan\tint01}
  }{
    \SecondX{\SecondX\varZ}
  }{
    \tchan\tint01
  }
  \using\refrule{t-snd}
\end{prooftree}
\]
using the fact that $\unlimited\TypeZero$ and $\unlimited\tint$. We
also derive \treeref2
\[
\begin{prooftree}
  \pt{
    \pt{
      \justifies
      \wte{
        \varZ : \TypeZero \times \ttparens{\tint \times \tchan\tint00}
      }{
        \varZ
      }{
        \TypeZero \times \ttparens{\tint \times \tchan\tint00}
      }
      \using\refrule{t-name}
    }
    \justifies
    \wte{
      \varZ : \TypeZero \times \ttparens{\tint \times \tchan\tint00}
    }{
      \SecondX\varZ
    }{
      \tint \times \tchan\tint00
    }
    \using\refrule{t-snd}
  }
  \justifies
  \wte{
    \varZ : \TypeZero \times \ttparens{\tint \times \tchan\tint00}
  }{
    \FirstX{\SecondX\varZ}
  }{
    \tint
  }
  \using\refrule{t-fst}
\end{prooftree}
\]
using the fact that $\unlimited\TypeZero$ and
$\unlimited{\tchan\tint00}$, therefore we derive \treeref3
\[
\begin{prooftree}
  \treeref1
  \qquad
  \qquad
  \treeref2
  \justifies
  \wtp{
    \varZ : \TypeZero \times \ttparens{\tint \times \tchan\tint01},
    \ttunderscore : \tint
  }{
    \send{\SecondX{\SecondX\varZ}}{\FirstX{\SecondX\varZ}}
  }
  \using\refrule{t-out}
\end{prooftree}
\]
using the combination
\[
  (\TypeZero \times \ttparens{\tint \times \tchan\tint01})
  +
  (\TypeZero \times \ttparens{\tint \times \tchan\tint00})
  =
  \TypeZero \times \ttparens{\tint \times \tchan\tint01}
\]
Already in this sub-derivation we appreciate that although the pair
$\varZ$ is accessed twice, its type \finalrevision{in the conclusion
  of \treeref3} correctly tracks the fact that the channel contained
\finalrevision{in $\varZ$} is only used once, for an output.

For the $\Right$ branch \finalrevision{in $\ProcessP'$ there exists
  another derivation \treeref4 concluding}
\[
\begin{prooftree}
  \vdots
  \qquad\qquad
  \vdots
  \justifies
  \wtp{
    \evenc : \tchan\SEven0\omega,
    \varX : \tchan\tint10 \times \TypeEven,
    \varZ : \TypeZero \times \tint \times \tchan\tint10
  }{
    \receive{\First[\varX]}\varY
    \cdots
  }
  \using\refrule{t-in}
\end{prooftree}
\]
Now we conclude
\[
\begin{prooftree}
  \[
    \[
      \[
        \justifies
        \wte{
          \varZ : \SOdd
        }{
          \varZ
        }{
          \SOdd
        }
        \using\refrule{t-name}
      \]
      \qquad
      \treeref3
      \qquad
      \treeref4
      \qquad
      \justifies
      \wtp{
        \evenc : \tchan\SEven{0}{\omega},
        \varZ : \SOdd
      }{
        \mkkeyword{case}~\varZ~\mkkeyword{of} \cdots
      }
      \using\refrule{t-case}
    \]
    \justifies
    \wtp{
      \oddc : \tchan\SOdd{\omega}{0},
      \evenc : \tchan\SEven{0}{\omega}
    }{
      \receive\oddc\varZ
      \mkkeyword{case}~\varZ~\mkkeyword{of} \cdots
    }
    \using\refrule{t-in}
  \]
  \justifies
  \wtp{
    \oddc : \tchan\SOdd{\omega}{0},
    \evenc : \tchan\SEven{0}{\omega}
  }{
    \ProcessP'
  }
  \using\refrule{t-rep}
\end{prooftree}
\]
Note that $\oddc$ and $\evenc$ must be unlimited channels because they
occur free in a replicated process, for which rule~\refrule{t-rep}
requires an unlimited environment. A similar derivation shows that
\finalrevision{$\ProcessQ'$} is well typed in an environment where the
types of $\oddc$ and $\evenc$ have swapped uses
\[
\begin{prooftree}
  \vdots
  \justifies
  \wtp{
    \oddc : \tchan\SOdd{0}{\omega},
    \evenc : \tchan\SEven{\omega}{0}
  }{
    \ProcessQ'
  }
  \using\refrule{t-rep}
\end{prooftree}
\]
so the combined types of $\oddc$ and $\evenc$ are
$\tchan\SOdd\omega\omega$ and $\tchan\SEven\omega\omega$,
respectively.  Using these, we find a typing derivation for the
process $\ProcessR$ \finalrevision{in
  Section~\ref{sec:introduction}}. Proceeding bottom-up we have
\[
\begin{prooftree}
  \vdots
  \justifies
  \wtp{
    \oddc : \tchan\SOdd{\omega}{\omega},
    l : \TypeOdd,
    \ChannelA : \tchan\tint01
  }{
    \send\oddc{\ttparens{l\ttcomma 0\ttcomma\ChannelA}}
  }
  \using\refrule{t-out}
\end{prooftree}
\]
and
\[
\begin{prooftree}
  \vdots
  \justifies
  \wtp{
    \evenc : \tchan\SEven{\omega}{\omega},
    l : \TypeEven,
    \ChannelB : \tchan\tint01
  }{
    \send\evenc{\ttparens{l\ttcomma 0\ttcomma\ChannelB}}
  }
  \using\refrule{t-out}
\end{prooftree}
\]
as well as
\[
\begin{prooftree}
  \vdots
  \justifies
  \wtp{
    \ChannelA : \tchan\tint10,
    \ChannelB : \tchan\tint10,
    r : \tchan\tint01
  }{
    \receive\ChannelA\varX
    \receive\ChannelB\varY
    \send{r}{\ttparens{\varX + \varY}}
  }
  \using\refrule{t-in}
\end{prooftree}
\]
from which we conclude
\[
\begin{prooftree}
  \vdots
  \Justifies
  \wtp{
    \oddc : \tchan\SOdd{\omega}{\omega},
    \evenc : \tchan\SEven{\omega}{\omega},
    l : \TypeList,
    r : \tchan\tint01
  }{
    \new{\ChannelA\ttcomma\ChannelB}
    \cdots
  }
  \using\text{\refrule{t-new} (twice)}
\end{prooftree}
\]
using the property $\TypeOdd \tand \TypeEven = \TypeList$.
\eoe
\end{example}

\finalrevision{We conclude this section with a technical remark to
  justify the use of a structural precongruence relation in place of a
  more familiar symmetric one.}

\begin{remark}
\label{rem:sle}
\newcommand{\EnvP}[1]{\Env_{#1}}
Let us show why the relation $\bang\Process \parop \Process \sle
\bang\Process$ would \revised{invalidate Theorem~\ref{thm:sr} (more
  specifically, Lemma~\ref{lem:struct})} in our setting \revised{(a
  similar phenomenon is described in~\cite{Kobayashi02})}.  To this
aim, consider the process
\[
  \ProcessP \eqdef
  \receive\ChannelA\varX
  \new\ChannelC
  \ttparens{
    \bang\receive\ChannelC\varY
    \send\ChannelC\varY
    \parop
    \send\ChannelC\ChannelB
  }
\]
and \revised{the type environment $\EnvP\Use \eqdef \ChannelA :
  \tchan\tint\omega0, \ChannelB : \tchan\tint0\Use$ for an arbitrary
  $\Use$.} We can derive \revised{
\[
\begin{prooftree}
  \vdots
  \pt{
    \pt{
      \pt{
        \pt{
          \vdots
          \qquad
          \pt{
            \vdots
            \qquad
            \vdots
            \justifies
            \wtp{
              \ChannelC : \tchan{\tchan\tint0\Use}\omega\omega,
              \varY : \tchan\tint0\Use
            }{
              \send\ChannelC\varY
            }
            \using\refrule{t-out}
          }
          \justifies
          \wtp{
            \ChannelC : \tchan{\tchan\tint0\Use}\omega\omega
          }{
            \receive\ChannelC\varY
            \send\ChannelC\varY
          }
          \using\refrule{t-in}
        }
        \justifies
        \wtp{
          \ChannelC : \tchan{\tchan\tint0\Use}\omega\omega
        }{
          \bang\receive\ChannelC\varY
          \send\ChannelC\varY
        }
        \using\refrule{t-rep}
      }
      \quad
      \pt{
        \vdots
        \qquad
        \vdots
        \justifies
        \wtp{
          \EnvP\Use,
          \ChannelC : \tchan{\tchan\tint0\Use}01
        }{
          \send\ChannelC\ChannelB
        }
        \using\refrule{t-out}
      }
      \justifies
      \wtp{
        \EnvP\Use,
        \ChannelC : \tchan{\tchan\tint0\Use}\omega\omega
      }{
        \bang\receive\ChannelC\varY
        \send\ChannelC\varY
        \parop
        \send\ChannelC\ChannelB
      }
      \using\refrule{t-par}
    }
    \justifies
    \wtp{\EnvP\Use}{
      \new\ChannelC\cdots
    }
    \using\refrule{t-new}
  }
  \justifies
  \wtp{\EnvP\Use}\ProcessP
  \using\refrule{t-in}
\end{prooftree}
\]
where we have elided a few obvious typing derivations for expressions.
In particular, we can find a derivation where $\ChannelB$ has an
unlimited type ($\Use = 0$) and another one where $\ChannelB$ has a
linear type ($\Use=1$).}
This is possible because channel $\ChannelC$, which is restricted
within $\Process$, can be given different types -- respectively,
$\tchan{\tchan\tint00}\omega\omega$ and
$\tchan{\tchan\tint01}\omega\omega$ -- in the two derivations.
We can now obtain
\revised{
\[
\begin{prooftree}
  \pt{
    \pt{
      \vdots
      \justifies
      \wtp{
        \EnvP0
      }{
        \ProcessP
      }
    }
    \justifies
    \wtp{
      \EnvP0
    }{
      \bang\ProcessP
    }
    \using\refrule{t-rep}
  }
  \qquad
  \pt{
    \vdots
    \justifies
    \wtp{
      \EnvP1
    }{
      \ProcessP
    }
  }
  \justifies
  \wtp{
    \EnvP1
  }{
    \bang\ProcessP \parop \ProcessP
  }  
  \using\refrule{r-par}
\end{prooftree}
\]
because $\unlimited{\EnvP0}$ and $\EnvP0 + \EnvP1 = \EnvP1$.
}
If we allowed the structural congruence rule $\bang\ProcessP \parop
\ProcessP \sle \bang\ProcessP$, \revised{then
  $\wtp{\EnvP1}{\bang\ProcessP}$ would \emph{not} be derivable because
  $\EnvP1$ is linear, hence typing would not be preserved by
  structural pre-congruence.}
This problem is avoided in \cite{KobayashiPierceTurner99,
  IgarashiKobayashi00} by limiting replication to input prefixes,
omitting any structural congruence rule for replications, \revised{and
  adding a dedicated synchronization rule for them}.
In \cite{KobayashiPierceTurner99} it is stated that ``the full
pi-calculus replication operator poses no problems for the linear type
system'', but this holds because there the calculus is typed, so
multiple typing derivations for the same process $\ProcessP$ above
would assign the same type to $\ChannelC$ and, in turn, the same type
to~$\ChannelB$.
\eoe
\end{remark}

\section{Constraint Generation}
\label{sec:generator}

\begin{table}
\framebox[\columnwidth]{
\begin{math}
\displaystyle
\begin{array}{@{}c@{\qquad}c@{}}
\begin{array}[t]{@{}rcl@{\quad}l@{}}
  \TypeExprT, \TypeExprS & ::= & & \textbf{Type expression} \\
  &   & \tvar & \text{(type variable)} \\
  & | & \tint & \text{(integer)} \\
  & | & \tchan\TypeExpr\UseExprU\UseExprV & \text{(channel)} \\
  & | & \TypeExprT \times \TypeExprS & \text{(product)} \\
  & | & \TypeExprT \tsum \TypeExprS & \text{(disjoint sum)} \\
\end{array}
&
\begin{array}[t]{@{}rcl@{\quad}l@{}}
  \UseExprU, \UseExprV & ::= & & \textbf{Use expression} \\
  &   & \uvar & \text{(use variable)} \\
  & | & \Use & \text{(use constant)} \\
  & | & \UseExprU + \UseExprV & \text{(use combination)} \\
\end{array}
\end{array}
\end{math}
}
\caption{\label{tab:expressions} Syntax of use and type expressions.}
\end{table}

We formalize the problem of type reconstruction as follows: given a
process $\Process$, find a type environment $\Env$ such that
$\wtp\Env\Process$, provided there is one.  \revised{In general, in
  the derivation for $\wtp\Env\Process$ we also want to identify as
  many linear channels as possible. We will address this latter aspect
  in Section~\ref{sec:solver}.}

\subsection{Syntax-directed generation algorithm}
The type rules shown in Table~\ref{tab:type_system} rely on a fair
amount of guessing that concerns the structure of types in the type
environment, how they are split/combined using $+$, and the uses
occurring in them. So, these rules cannot be easily interpreted as a
type reconstruction algorithm. The way we follow to define one is
conventional: first, we give an alternative set of (almost)
syntax-directed rules that generate \emph{constraints} on types; then,
we search for a solution of such constraints. The main technical
challenge is that \revised{we cannot base our type reconstruction
  algorithm on conventional unification because we have to deal with
  constraints expressing not only the \emph{equality} between types
  and uses, but also the \emph{combination} of types and uses. In
  addition, we work with possibly infinite types.}

To get started, we introduce use and type \emph{expressions}, which
share the same structure as uses/types but they differ from them in
two fundamental ways:
\begin{enumerate}
\item We allow use/type variables to stand for unknown uses/types.

\item We can express symbolically the combination of use expressions.
\end{enumerate}

We therefore introduce a countable set of \emphdef{use variables} $\uvar$,
$\dots$ as well as a countable set of \emphdef{type variables} $\tvarA$,
$\tvarB$, $\dots$; the syntax of use expressions $\UseExprU$, $\UseExprV$,
$\dots$ and of type expressions $\TypeExprT$, $\TypeExprS$, $\dots$ is given
in Table~\ref{tab:expressions}.
Observe that every use is also a use expression and every finite type is also
a type expression.
\revised{We say that $\TypeExpr$ is \emphdef{proper} if it is
  different from a type variable.}

\emphdef{Constraints} $\Constraint$, $\dots$ are defined by the
grammar below:
\[
\begin{array}{rcl@{\qquad}l}
  \Constraint & ::= & & \textbf{Constraint} \\
  &   & \TypeExprT \ceq \TypeExprS & \text{(type equality)}
  \\
  & | & \TypeExprT \ceq \TypeExprS_1 + \TypeExprS_2 & \text{(type
    combination)}
  \\
  & | & \TypeExprT \con\mall \TypeExprS & \text{(type coherence)}
  \\
  & | & \UseExprU \ceq \UseExprV & \text{(use equality)}
\end{array}
\]
\revised{Constraints express relations between types/uses that must be
  satisfied in order for a given process to be well typed. In
  particular, we need to express equality constraints between types
  ($\TypeExprT \ceq \TypeExprS$) and uses ($\UseExprU \ceq
  \UseExprV$), coherence constraints ($\TypeExprT \con\mall
  \TypeExprS$), and combination constraints between types ($\TypeExprT
  \ceq \TypeExprS_1 + \TypeExprS_2$).}
%
We will write $\unlimited\TypeExpr$ as an abbreviation for the
constraint $\TypeExpr \ceq \TypeExpr + \TypeExpr$. \finalrevision{This
  notation is motivated by Definition~\ref{def:un}, according to
  which} a type is unlimited if and only if it is equal to its own
combination.
We let $\Constraints$, $\dots$ range over finite constraint sets. The
\revised{set of \emphdef{expressions}} of a constraint set
$\Constraints$, written \revised{$\expr(\Constraints)$}, is the
(finite) set of use and type expressions occurring in the constraints
in $\Constraints$.

\begin{table}
\framebox[\textwidth]{
\begin{math}
\displaystyle
\begin{array}{@{}c@{}}
  \inferrule[\defrule{c-env 1}]{
    \dom(\EnvX_1) \cap \dom(\EnvX_2) = \emptyset
  }{
    \combinenv{\EnvX_1}{\EnvX_2}{\EnvX_1,\EnvX_2}{\emptyset}
  }
  \qquad
  \inferrule[\defrule{c-env 2}]{
    \combinenv{\EnvX_1}{\EnvX_2}{\EnvX}{\Constraints}
    \\
    \text{$\tvar$ fresh}
  }{
    \combinenv{(\EnvX_1, \Name : \TypeExprT)}{(\EnvX_2, \Name : \TypeExprS)}{
      \EnvX, \Name : \tvar
    }{
      \Constraints \cup \set{ \tvar \con\meq \TypeExprT + \TypeExprS }
    }
  }
  \\\\
  \inferrule[\defrule{m-env 1}]{}{
    \mergenv{\EmptyEnv}{\EmptyEnv}{\EmptyEnv}{\emptyset}
  }
  \qquad
  \inferrule[\defrule{m-env 2}]{
    \mergenv{\EnvX_1}{\EnvX_2}{\EnvX}{\Constraints}
  }{
    \mergenv{
      (\EnvX_1, \Name : \TypeExprT)
    }{
      (\EnvX_2, \Name : \TypeExprS)
    }{
      \EnvX, \Name : \TypeExprT
    }{
      \Constraints \cup \set{ \TypeExprT \con= \TypeExprS }
    }
  }
\end{array}
\end{math}
}
\caption{\label{tab:env} Combining and merging operators for type environments.}
\end{table}

The type reconstruction algorithm \emph{generates} type environments
for the expressions and processes being analyzed. Unlike the
environments in Section~\ref{sec:types}, these environments associate
names with type expressions. For this reason we will let $\EnvX$,
$\dots$ range over the environments generated by the reconstruction
algorithm, although we will refer to them as type environments.

The algorithm also uses two auxiliary operators $\combineop$ and
$\mergeop$ defined in Table~\ref{tab:env}.
The relation $\combinenv{\EnvX_1}{\EnvX_2}{\EnvX}{\Constraints}$ combines two
type environments $\EnvX_1$ and $\EnvX_2$ into $\EnvX$ when the names in
$\dom(\EnvX_1) \cup \dom(\EnvX_2)$ are used \emph{both} as specified in
$\EnvX_1$ \emph{and also} as specified in $\EnvX_2$ and, in doing so,
generates a set of constraints $\Constraints$.
So $\combineop$ is analogous to $\combine$ in \eqref{eq:combine}.
When $\EnvX_1$ and $\EnvX_2$ have disjoint domains, $\EnvX$ is just
the union of $\EnvX_1$ and $\EnvX_2$ and no constraints are
generated.
Any name $\Name$ that occurs in $\dom(\EnvX_1) \cap \dom(\EnvX_2)$ is used
according to the combination of $\EnvX_1(\Name)$ and $\EnvX_2(\Name)$. In
general, $\EnvX_1(\Name)$ and $\EnvX_2(\Name)$ are type expressions with free
type variables, hence this combination cannot be ``computed'' or ``checked''
right away. Instead, it is recorded as the constraint $\tvar \con\meq
\EnvX_1(\Name) + \EnvX_2(\Name)$ where $\tvar$ is a fresh type variable.

The relation $\mergenv{\EnvX_1}{\EnvX_2}{\EnvX}{\Constraints}$ merges two type
environments $\EnvX_1$ and $\EnvX_2$ into $\EnvX$ when the names in
$\dom(\EnvX_1) \cup \dom(\EnvX_2)$ are used \emph{either} as specified in
$\EnvX_1$ \emph{or} as specified in $\EnvX_2$ and, in doing so, generates a
constraint set $\Constraints$.
This merging is necessary when typing the alternative branches of a
$\mkkeyword{case}$: recall that rule \refrule{t-case} in
Table~\ref{tab:type_system} requires the \emph{same} type environment $\Env$
for typing the two branches of a $\mkkeyword{case}$. Consequently, $\EnvX_1
\mergeop \EnvX_2$ is defined only when $\EnvX_1$ and $\EnvX_2$ have the same
domain, and produces a set of constraints $\Constraints$ saying that the
corresponding types of the names in $\EnvX_1$ and $\EnvX_2$ must be equal.

\begin{table}[t]
\framebox[\textwidth]{
\begin{math}
\displaystyle
\begin{array}{@{}c@{}}
  \multicolumn{1}{@{}l@{}}{\textbf{Expressions}} \\
\inferrule[\defrule{i-int}]{}{
  \rte{n}{\tint}\EmptyEnv\emptyset
}
\qquad
\inferrule[\defrule{i-name}]{%
}{
  \rte{\Name}{\tvar}{\Name:\tvar}\emptyset
}
\qquad
\inferrule[\defrule{i-inl}]{
  \rte{\Expression}{\TypeExpr}{\EnvX}{\Constraints}
}{
  \rte{\Left[\Expression]}{\TypeExprT \tsum \tvar}{\EnvX}{\Constraints}
}
\qquad
\inferrule[\defrule{i-inr}]{
  \rte{\Expression}{\TypeExpr}{\EnvX}{\Constraints}
}{
  \rte{\Right[\Expression]}{\tvar \tsum \TypeExpr}{\EnvX}{\Constraints}
}
\\\\
\inferrule[\defrule{i-pair}]{
  \rte{\Expression_i}{\TypeExpr_i}{\EnvX_i}{\Constraints_i}
  ~{}^{(i=1,2)}
  \\
  \combinenv{\EnvX_1}{\EnvX_2}{\EnvX}{\Constraints_3}
}{
  \rte{\Pair{\Expression_1}{\Expression_2}}{
    \TypeExpr_1 \times \TypeExpr_2
  }{
    \EnvX
  }{
    \Constraints_1 \cup \Constraints_2 \cup \Constraints_3
  }
}
\\\\
\inferrule[\defrule{i-fst}]{
  \rte\Expression\TypeExpr\EnvX\Constraints
}{
  \rte{
    \First[\Expression]
  }{
    \tvarA
  }{
    \EnvX
  }{
    \Constraints \cup \set{
      \TypeExpr \con\meq \tvarA \times \tvarB,
      \unlimited\tvarB
    }
  }
}
\qquad
\inferrule[\defrule{i-snd}]{
  \rte\Expression\TypeExpr\EnvX\Constraints
}{
  \rte{
    \Second[\Expression]
  }{
    \tvarB
  }{
    \EnvX
  }{
    \Constraints \cup \set{
      \TypeExpr \con\meq \tvarA \times \tvarB,
      \unlimited\tvarA
    }
  }
}
\\\\
  \multicolumn{1}{@{}l@{}}{\textbf{Processes}} \\\\
\inferrule[\defrule{i-idle}]{}{
  \rtp{\idle}\EmptyEnv\emptyset
}
\qquad
\inferrule[\defrule{i-in}]{
  \rte\Expression\TypeExprT{\EnvX_1}{\Constraints_1}
  \\
  \rtp\Process{\EnvX_2,\var:\TypeExprS}{\Constraints_2}
  \\
  \combinenv{\EnvX_1}{\EnvX_2}{\EnvX}{\Constraints_3}
}{
  \rtp{\receive\Expression\var\Process}{
    \EnvX 
  }{
    \Constraints_1 \cup \Constraints_2 \cup \Constraints_3 \cup \set{
      \TypeExprT \con= \tchan\TypeExprS{1+\uvar_1}{2\uvar_2}
    }
  }
}
\\\\
\inferrule[\defrule{i-out}]{
  \rte\ExpressionE\TypeExprT{\EnvX_1}{\Constraints_1}
  \\
  \rte\ExpressionF\TypeExprS{\EnvX_2}{\Constraints_2}
  \\
  \combinenv{\EnvX_1}{\EnvX_2}{\EnvX}{\Constraints_3}
}{
  \rtp{\send\ExpressionE\ExpressionF}{
    \EnvX
  }{
    \Constraints_1 \cup \Constraints_2 \cup \Constraints_3 \cup \set{
      \TypeExprT \con= \tchan\TypeExprS{2\uvar_1}{1+\uvar_2}
    }
  }
}
\qquad
\inferrule[\defrule{i-rep}]{
  \rtp{\Process}{\EnvX}{\Constraints}
  \\
  \combinenv{\EnvX}{\EnvX}{\EnvX'}{\Constraints'}
}{
  \rtp{\bang\Process}{\EnvX'}{\Constraints \cup \Constraints'}
}
\\\\
\inferrule[\defrule{i-par}]{
  \rtp{\Process_i}{\EnvX_i}{\Constraints_i}~{}^{(i=1,2)}
  \\
  \combinenv{\EnvX_1}{\EnvX_2}{\EnvX}{\Constraints_3}
}{
  \rtp{\Process_1 \parop \Process_2}{\EnvX}{
    \Constraints_1 \cup \Constraints_2 \cup \Constraints_3
  }
}
\qquad
\inferrule[\defrule{i-new}]{
  \rtp{\Process}{\EnvX, \Channel : \TypeExpr}{\Constraints}
}{
  \rtp{\new\Channel\Process}{\EnvX}{
    \Constraints
    \cup
    \set{ \TypeExpr \con= \tchan\tvar{\uvar}{\uvar} }
  }
}
\\\\
\inferrule[\defrule{i-case}]{
  \rte\Expression\TypeExpr{\EnvX_1}{\Constraints_1}
  \\
  \rtp{\Process_i}{\EnvX_i, \var_i : \TypeExpr_i}{\Constraints_i}{}^{(i=\Left,\Right)}
  \\
  \mergenv{\EnvX_{\Left}}{\EnvX_{\Right}}{\EnvX_2}{\Constraints_2}
  \\
  \combinenv{\EnvX_1}{\EnvX_2}{\EnvX}{\Constraints_3}
}{
  \rtp{
    \CaseShort\Expression{i}{\var}{\Process}
  }{
    \EnvX
  }{
    \Constraints_1 \cup \Constraints_2 \cup \Constraints_3 \cup
    \Constraints_{\Left} \cup \Constraints_{\Right}
    \cup \set{ \TypeExpr \con= \TypeExpr_{\Left} \tsum \TypeExpr_{\Right} }
  }
}
\\\\
\inferrule[\defrule{i-weak}]{
  \rtp{\Process}{\EnvX}{\Constraints}
}{
  \rtp{\Process}{\EnvX, \Name : \tvar}{
    \Constraints \cup \set{ \unlimited\tvar }
  }
}
\end{array}
\end{math}
}
\caption{\label{tab:generation} Constraint generation for expressions
  and processes.}
\end{table}

The rules of the type reconstruction algorithm are presented in
Table~\ref{tab:generation} and derive judgments
$\rte{\Expression}{\TypeExpr}{\EnvX}{\Constraints}$ for expressions and
$\rtp{\Process}{\EnvX}{\Constraints}$ for processes.
In both cases, $\EnvX$ is the generated environment that contains associations
for all the free names in $\Expression$ and $\Process$, while $\Constraints$
is the set of constraints that must hold in order for $\Expression$ or
$\Process$ to be well typed in $\EnvX$.
In a judgment $\rte{\Expression}{\TypeExpr}{\EnvX}{\Constraints}$, the type
expression $\TypeExpr$ denotes the type of the expression $\Expression$.

There is a close correspondence between the type system
(Table~\ref{tab:type_system}) and the reconstruction algorithm
(Table~\ref{tab:generation}). In a nutshell, unknown uses and types become
fresh use and type variables (all use/type variables introduced by the rules
are assumed to be fresh), every application of $+$ in
Table~\ref{tab:type_system} becomes an application of $\combineop$ in
Table~\ref{tab:generation}, and every assumption on the form of types becomes
a constraint. Constraints accumulate from the premises to the conclusion of
each rule of the reconstruction algorithm, which we now review briefly.

Rule~\refrule{i-int} deals with integer constants. Their type is
obviously $\tint$, they contain no free names and therefore they
generate the empty environment and the empty set of constraints.
Rule~\refrule{i-name} deals with the free occurrence of a name $\Name$. A
fresh type variable standing for the type of \emph{this occurrence} of $\Name$
is created and used in the resulting type environment $\Name : \tvar$. Again,
no constraints are generated. In general, different occurrences of the same
name may have different types which are eventually combined with $\tvar$ later
on in the reconstruction process.
In rules~\refrule{i-inl} and~\refrule{i-inr} the type of the summand
that was guessed in \refrule{t-inl} and \refrule{t-inr} becomes a
fresh type variable.
Rule \refrule{t-pair} creates a product type from the type of the
components of the pairs, combines the corresponding environments and
joins all the constraints generated in the process.
Rules~\refrule{i-fst} and~\refrule{i-snd} deal with pair
projections. The type $\TypeExpr$ of the projected expression must be
a product of the form $\tvarA \times \tvarB$. Since the first
projection discards the second component of a pair, $\tvarB$ must be
unlimited in \refrule{i-fst}. Symmetrically for \refrule{i-snd}.

Continuing on with the rules for processes, let us consider
\refrule{i-in} and \refrule{i-out}. The main difference between these
rules and the corresponding ones \refrule{t-in} and \refrule{t-out} is
that the use information of the channel on which the communication
occurs is unknown, hence it is represented using fresh use
variables. The $1 + \uvar_i$ part accounts for the fact that the
channel is being used at least once, for an input or an output. The
$2\uvar_j$ part accounts for the fact that the use information
concerning the capability (either input or output) that is not
exercised must be unlimited (note that we extend the notation $2\Use$
to use expressions).
Rule \refrule{i-rep} deals with a replicated process
$\bang\Process$. In the type system, $\bang\Process$ is well typed in
an unlimited environment. Here, we are building up the type
environment for $\bang\Process$ and we do so by combining the
environment $\EnvX$ generated by $\Process$ with itself. The rationale
is that $\EnvX \combineop \EnvX$ yields an unlimited type environment
that grants at least all the capabilities granted by $\EnvX$.
By now most of the main ingredients of the constraint generation
algorithm have been revealed, and the remaining rules contain no
further novelties but the expected use of the merging operator
$\mergeop$ in \refrule{i-case}.
There is, however, a rule \refrule{i-weak} that has no correspondence
in Table~\ref{tab:type_system}. This rule is necessary because
\refrule{i-in}, \refrule{i-new}, and \refrule{i-case}, which
correspond to the binding constructs of the calculus, \emph{assume}
that the names they bind do occur in the premises on these rules. But
since type environments are generated by the algorithm as it works
through an expression or a process, this may not be the case if a
bound name is never used and therefore never occurs in that expression
or process. Furthermore, the $\mergeop$ operator is defined only on
type environments having the same domain. This may not be the case if
a name occurs in only one branch of a pattern matching, and not in the
other one. With rule \refrule{i-weak} we can introduce missing names
in type environments wherever \revised{this is necessary}. Naturally,
an unused name has an unknown type $\tvar$ that must be unlimited,
whence the constraint $\unlimited\tvar$ \revised{(see
  Example~\ref{ex:simple} for an instance where \refrule{i-weak} is
  necessary)}.
Strictly speaking, with \refrule{i-weak} this set of rules is not
syntax directed, which in principle is a problem if we want to
consider this as an algorithm. In practice, the places where
\refrule{i-weak} may be necessary are easy to spot (in the premises of
all the aforementioned rules for the binding constructs). What we gain
with \refrule{i-weak} is a simpler presentation of the rules for
constraint generation.

\subsection{Correctness and completeness}
If the constraint set generated from $\Process$ is satisfiable, then it
corresponds to a typing for $\Process$. To formalize this property, we must
first define what ``satisfiability'' means for a constraint set.

A \emphdef{\assignment} $\Solution$ is a finite map from type
variables to types and from use variables to uses. We write
$\dom(\Solution)$ for the set of type and use variables for which
there is an association in $\Solution$.
\finalrevision{The \emphdef{application} of a substitution $\Solution$
  to a use/type expression $\UseExpr$/$\TypeExpr$, respectively
  denoted by $\Solution\UseExpr$ and $\Solution\TypeExpr$, replaces
  use variables $\uvar$ and type variables $\tvar$ in
  $\UseExpr$/$\TypeExpr$ with the corresponding uses
  $\Solution(\uvar)$ and types $\Solution(\tvar)$ and computes use
  combinations whenever possible:}
\[
\Solution\UseExpr
\eqdef
\begin{cases}
  \Solution(\uvar) & \text{if $\UseExpr = \uvar \in \dom(\Solution)$}
  \\
  \Solution\UseExpr_1 + \Solution\UseExpr_2 & \text{if $\UseExpr =
    \UseExpr_1 + \UseExpr_2$}
  \\
  \UseExpr & \text{otherwise}
\end{cases}
\qquad
\Solution\TypeExpr
\eqdef
\begin{cases}
  \Solution(\tvar) & \text{if $\TypeExpr = \tvar \in \dom(\Solution)$}
  \\
  \tchan{\Solution\TypeExprS}{\Solution\UseExprU}{\Solution\UseExprV}
  & \text{if $\TypeExpr = \tchan\TypeExprS\UseExprU\UseExprV$} \\
  \Solution\TypeExpr_1 \odot \Solution\TypeExpr_2 & \text{if
    $\TypeExpr = \TypeExpr_1 \odot \TypeExpr_2$}
  \\
  \TypeExpr & \text{otherwise}
\end{cases}
\]
\finalrevision{We will make sure that the application of a
  substitution $\Solution$ to a type expression $\TypeExpr$ is always
  well defined: either $\dom(\Solution)$ contains no type variables,
  in which case $\Solution\TypeExpr$ is a type expression, or
  $\dom(\Solution)$ includes all use/type variables occurring in
  $\TypeExpr$, in which case we say that $\Solution$ \emphdef{covers}
  $\TypeExpr$ and $\Solution\TypeExpr$ is a type.
}
We extend application pointwise to type environments, namely
$\Solution\EnvX \eqdef \set{ \Name : \Solution\EnvX(\Name) \mid \Name
  \in \dom(\EnvX) }$, \finalrevision{and we say that $\Solution$
  \emphdef{covers} $\EnvX$ if it covers all the type expressions in
  the range of $\EnvX$.}

\revised{
\begin{definition}[solution, satisfiability, equivalence]
\label{def:solution}
A \assignment $\Solution$ is a \emphdef{solution} for a constraint set
$\Constraints$ if it covers all the $\TypeExpr\in\expr(\Constraints)$
and the following conditions hold:
\begin{itemize}
\item $\TypeExprT \ceq \TypeExprS \in \Constraints$ implies
  $\Solution\TypeExprT \meq \Solution\TypeExprS$, and

\item $\TypeExprT \ceq \TypeExprS_1 + \TypeExprS_2 \in \Constraints$
  implies $\Solution\TypeExprT \meq \Solution\TypeExprS_1 +
  \Solution\TypeExprS_2$, and

\item $\TypeExprT \con\mall \TypeExprS \in \Constraints$ implies
  $\Solution\TypeExprT \mall \Solution\TypeExprS$, and

\item $\UseExprU \ceq \UseExprV \in \Constraints$ implies
  $\Solution\UseExprU \meq \Solution\UseExprV$.
\end{itemize}

We say that $\Constraints$ is \emphdef{satisfiable} if it has a
solution and \emphdef{unsatisfiable} otherwise.
We say that $\Constraints_1$ and $\Constraints_2$ are
\emphdef{equivalent} if they have the same solutions.
\end{definition}
}



We can now state the correctness result for the type reconstruction
algorithm:

\begin{theorem}
\label{thm:correctness_processes}
If $\rtp\Process\EnvX\Constraints$ and $\Solution$ is a solution for
$\Constraints$ that covers $\EnvX$, then $\wtp{\Solution\EnvX}\Process$.
\end{theorem}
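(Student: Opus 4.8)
The plan is to proceed by rule induction on the derivation of $\rtp\Process\EnvX\Constraints$ (with a mutually-inductive companion statement for expressions: if $\rte\Expression\TypeExpr\EnvX\Constraints$ and $\Solution$ is a solution for $\Constraints$ covering $\EnvX$, then $\wte{\Solution\EnvX}{\Expression}{\Solution\TypeExpr}$). For each generation rule in Table~\ref{tab:generation} we exhibit the corresponding type rule from Table~\ref{tab:type_system} and check that the side conditions it imposes are exactly the constraints that $\Solution$ satisfies. The key observation that makes this go through is a homomorphism property of substitution application with respect to the combining and merging operators, which I would isolate as a lemma first.

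\textbf{Auxiliary lemma (substitutions commute with $\combineop$ and $\mergeop$).} If $\combinenv{\EnvX_1}{\EnvX_2}{\EnvX}{\Constraints_0}$ and $\Solution$ is a solution for $\Constraints_0$ covering $\EnvX$ (hence, by inspection of rules \refrule{c-env 1} and \refrule{c-env 2}, covering $\EnvX_1$ and $\EnvX_2$ as well), then $\Solution\EnvX_1 \combine \Solution\EnvX_2$ is defined and equals $\Solution\EnvX$; symmetrically, if $\mergenv{\EnvX_1}{\EnvX_2}{\EnvX}{\Constraints_0}$ and $\Solution$ solves $\Constraints_0$, then $\Solution\EnvX_1 = \Solution\EnvX_2 = \Solution\EnvX$. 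Both are straightforward inductions on the derivation of the combining/merging judgment: in the \refrule{c-env 2} case the fresh constraint $\tvar \con\meq \TypeExprT + \TypeExprS$ forces $\Solution\tvar = \Solution\TypeExprT + \Solution\TypeExprS$, which is precisely the second clause of \eqref{eq:combine} applied to $\Solution\EnvX_1(\Name)$ and $\Solution\EnvX_2(\Name)$; in \refrule{m-env 2} the constraint $\TypeExprT \con= \TypeExprS$ forces agreement of the two branches on $\Name$. I would also record the trivial fact that $\Solution$ distributes over the type-expression formers — $\Solution(\TypeExpr_1 \odot \TypeExpr_2) = \Solution\TypeExpr_1 \odot \Solution\TypeExpr_2$ and $\Solution(\tchan\TypeExpr\UseExprU\UseExprV) = \tchan{\Solution\TypeExpr}{\Solution\UseExprU}{\Solution\UseExprV}$ — which is immediate from the definition of application, and that $\Solution$ applied to a use expression of the form $1 + \uvar$ or $2\uvar$ yields a use of the corresponding shape ($1 + \Solution\uvar$, $2\,\Solution\uvar$), which is what rules \refrule{t-in}, \refrule{t-out} expect.

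\textbf{Inductive step (sketch).} With these lemmas in hand each case is mechanical. For \refrule{i-name}, $\Solution$ covers $\Name : \tvar$ so $\Solution\tvar$ is some type $\Type$, and \refrule{t-name} gives $\wte{\Name : \Type}{\Name}{\Type}$ since the empty environment is trivially unlimited. For \refrule{i-fst}, the induction hypothesis yields $\wte{\Solution\EnvX}{\Expression}{\Solution\TypeExpr}$; the constraint $\TypeExpr \con\meq \tvarA \times \tvarB$ gives $\Solution\TypeExpr = \Solution\tvarA \times \Solution\tvarB$ and the constraint $\unlimited\tvarB$ gives $\unlimited{\Solution\tvarB}$, so \refrule{t-fst} applies. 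For \refrule{i-in}, the companion statement gives $\wte{\Solution\EnvX_1}{\Expression}{\Solution\TypeExprT}$, the process induction hypothesis gives $\wtp{\Solution\EnvX_2, \var : \Solution\TypeExprS}{\Process}$, the constraint $\TypeExprT \con= \tchan\TypeExprS{1+\uvar_1}{2\uvar_2}$ rewrites $\Solution\TypeExprT$ into the shape required by \refrule{t-in}, and the auxiliary lemma turns $\Solution\EnvX$ into $\Solution\EnvX_1 \combine \Solution\EnvX_2$; \refrule{t-in} then concludes. The cases \refrule{i-out}, \refrule{i-pair}, \refrule{i-par}, \refrule{i-new}, \refrule{i-inl}, \refrule{i-inr}, \refrule{i-int}, \refrule{i-idle} are analogous. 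For \refrule{i-case} one additionally uses the merging part of the lemma to see that both branches can be typed in the common environment $\Solution\EnvX_2$ (adding $\var_i : \Solution\TypeExpr_i$), and the constraint $\TypeExpr \con= \TypeExpr_{\Left} \tsum \TypeExpr_{\Right}$ to match the premise of \refrule{t-case}. Finally \refrule{i-weak} is handled by observing that $\Solution$ solving $\unlimited\tvar$ means $\Solution\tvar$ is unlimited, so the environment $\Solution\EnvX$ produced by the induction hypothesis can be extended by $\Name : \Solution\tvar$ while remaining compatible with the admissible weakening implicit in rules \refrule{t-name}, \refrule{t-int}, \refrule{t-idle} (unlimited environments absorb unused unlimited associations); concretely, one shows once and for all by an easy induction that $\wtp\Env\Process$ and $\unlimited\Type$ imply $\wtp{\Env, \Name : \Type}\Process$ for $\Name \notin \dom(\Env)$.

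\textbf{Main obstacle.} The only delicate point is bookkeeping around \emph{coverage}: one must check at every step that whenever the induction hypothesis is invoked on a subderivation, the same $\Solution$ still covers the smaller environment and still solves the smaller constraint set. Solving is immediate because the generated constraint set grows monotonically (each rule conclusion contains the premises' constraints as a subset), so a solution for the whole is a solution for the parts. Coverage of sub-environments is where the combining/merging rules must be inspected carefully — \refrule{c-env 2} introduces a fresh $\tvar$ in $\EnvX$ but the original type expressions $\TypeExprT, \TypeExprS$ migrate into the constraint $\tvar \con\meq \TypeExprT + \TypeExprS \in \Constraints$, so a $\Solution$ covering $\expr(\Constraints)$ automatically covers $\TypeExprT$ and $\TypeExprS$ and hence $\EnvX_1, \EnvX_2$; I would state this coverage-propagation fact explicitly as part of the auxiliary lemma so that the main induction can cite it without re-deriving it in each case. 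Everything else is routine rule-matching, and the absence of subtyping keeps the correspondence between the two rule sets exceptionally tight.
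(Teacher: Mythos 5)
Your proposal is correct and follows essentially the same route as the paper: the published proof is precisely a rule induction on the generation derivation, preceded by an expression-level companion lemma and by the two auxiliary lemmas you isolate (a solution covering the output of $\combineop$ turns it into $\combine$ of the images, and $\mergeop$ forces the images to coincide), and it discharges \refrule{i-weak} via the same weakening lemma you propose to prove. The one case your sketch silently omits is \refrule{i-rep}, which is \emph{not} analogous to the others: the premise $\unlimited\Env$ of \refrule{t-rep} is not emitted as a constraint but is encoded by $\combinenv\EnvX\EnvX{\EnvX'}{\Constraints'}$, so one must observe that $\Solution\EnvX' = \Solution\EnvX + \Solution\EnvX$ is automatically unlimited (because $2\Type + 2\Type = 2\Type$ holds for every type) and then invoke your weakening lemma to promote the induction hypothesis $\wtp{\Solution\EnvX}{\ProcessQ}$ to $\wtp{\Solution\EnvX'}{\ProcessQ}$ before applying \refrule{t-rep}. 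You already have both ingredients in hand, so this is a small repair rather than a structural flaw.
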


Note that Theorem~\ref{thm:correctness_processes} not only requires
$\Solution$ to be a solution for $\Constraints$, but also that
$\Solution$ must include suitable \assignments for all use and type
variables occurring in $\EnvX$. Indeed, it may happen that $\EnvX$
contains use/type variables not involved in any constraint in
$\Constraints$, therefore a solution for $\Constraints$ does not
necessarily cover $\EnvX$.

The reconstruction algorithm is also complete, in the sense that
\emph{each} type environment $\Env$ such that $\wtp{\Env}{\Process}$
can be obtained by applying a solution for $\Constraints$ to $\EnvX$.

\begin{theorem}
\label{thm:completeness_processes}
\revised{If $\wtp{\Env}{\ProcessP}$, then there exist $\EnvX$,
  $\Constraints$, and $\Solution$ such that
  $\rtp\ProcessP{\EnvX}{\Constraints}$ and $\Solution$ is a solution
  for $\Constraints$ that covers $\EnvX$ and $\Env = \Solution\EnvX$.}
\end{theorem}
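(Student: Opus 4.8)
The plan is a mutual induction on the structure of the two typing derivations, the one for expressions and the one for processes. For expressions the companion statement would read: if $\wte\Env\Expression\Type$ then there are $\TypeExpr$, $\EnvX$, $\Constraints$, $\Solution$ with $\rte\Expression\TypeExpr\EnvX\Constraints$, $\Solution$ a solution for $\Constraints$ covering $\EnvX$ and $\TypeExpr$, $\dom(\EnvX) = \fn(\Expression)$, and $\Solution\EnvX$, $\Solution\TypeExpr$ reconstructing the derivation; for processes it is the statement of Theorem~\ref{thm:completeness_processes} itself. Two auxiliary facts, each proved by a straightforward induction, are used throughout: (i) a well-typed expression or process assigns an unlimited type to every name that is not free in it; and (ii) $\mall$ is an equivalence, the combination of coherent types is always defined (Definition~\ref{def:tand}), and the combination of two unlimited coherent types is again unlimited. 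Fact (i) is what makes rule \refrule{i-weak} enough to align the domain of the generated environment with that of $\Env$, and fact (ii) is what makes the combination constraints emitted by $\combineop$ solvable.

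In the inductive step I apply the induction hypotheses to the premises of the last declarative rule, rename the returned substitutions so that the fresh variables coming from sibling sub-derivations are pairwise disjoint, take their union, and extend it to the variables freshly introduced both by the algorithmic rule and by the auxiliary relations $\combineop$ and $\mergeop$. The extension is dictated by the declarative derivation: a combination constraint $\tvar \ceq \TypeExprT + \TypeExprS$ emitted when $\combineop$ merges two occurrences of a name is satisfied by sending $\tvar$ to the combination of the two (coherent) types the sub-solutions give that name, which is precisely what $\Env$ records there; an equality constraint emitted by $\mergeop$ in \refrule{i-case} holds because \refrule{t-case} types the two branches under one and the same continuation environment; the channel-shape constraints of \refrule{i-in}, \refrule{i-out}, \refrule{i-new} hold by reading off the uses appearing in the corresponding declarative rule, using that the generated use expressions have the permissive shapes $1 + \uvar$ and $2\uvar$; and the unlimitedness constraints of \refrule{i-fst}, \refrule{i-snd}, \refrule{i-weak}, \refrule{i-rep} hold by fact (i) and the side conditions of the matching declarative rules. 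Whenever a name bound by \refrule{t-in}, \refrule{t-new} or \refrule{t-case} is absent from the relevant subterm, or a free name occurs in only one branch of a \mkkeyword{case}, I interpose \refrule{i-weak}; one further pass of \refrule{i-weak} at the root re-inserts any name of $\dom(\Env) \setminus \fn(\Process)$, so that in the end $\dom(\EnvX) = \dom(\Env)$ and $\Env = \Solution\EnvX$ exactly. The genuinely trivial cases (\refrule{t-idle}, \refrule{t-int}, \refrule{t-name}, \refrule{t-par}, \refrule{t-rep}) then fall out directly.

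The hard part is twofold. First, the bookkeeping just described has to be made rigorous: the renaming that keeps fresh variables of different sub-derivations disjoint must be set up so that the union of the sub-solutions is well defined, and one must check that the single extended substitution simultaneously covers $\EnvX$, solves every constraint in the union, and realises $\Env = \Solution\EnvX$. Second, and this is the real subtlety, the algorithm commits to the \emph{most general} way of building the environment (one unconstrained variable per occurrence of a name, one fresh variable per combination), whereas the given derivation reflects one particular decomposition and may contain ``phantom'' weakenings: a fresh unlimited association introduced by \refrule{t-name} and later combined, via \refrule{t-pair}, with a genuine use of the same name inflates some top-level uses recorded in $\Env$. A naive induction hypothesis asking for an exact match of both environment \emph{and} type for expressions fails here, because the algorithm tracks a name's environment entry and the type of an expression built from that name through the \emph{same} variable. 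The clean fix is to first push weakenings down to the leaves — every $\Env$ with $\wtp\Env\Process$ is derivable by a ``weakening-normal'' derivation in which each use is introduced with its final value at the axiom using that name, which is possible precisely because \refrule{t-in} and \refrule{t-out} admit the flexible shapes $1+\Use$ and $2\Use$ — and then run the induction on weakening-normal derivations, for which the exact-match hypothesis does hold; alternatively one carries a hypothesis that only constrains $\Solution\EnvX$ and leaves $\Solution\TypeExpr$ as \emph{any} legal type for $\Expression$ in $\Solution\EnvX$ (legality being automatic by the correctness result, Theorem~\ref{thm:correctness_processes}), observing that this latitude concerns only top-level uses at padded positions and therefore never cascades.
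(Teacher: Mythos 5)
Your first two paragraphs are essentially the paper's proof: induction on the typing derivation, a companion exact-match lemma for expressions, disjoint union of the sub-solutions justified by freshness of the generated variables, an ``inverse'' of Lemma~\ref{lem:combine} to extend the solution over the fresh variables introduced by $\combineop$ (the paper isolates this as Lemma~\ref{lem:combine_inverse}), and weakening rules to pad environments with the unlimited names demanded by the leaf rules. The one presentational difference is that the paper adds an explicit expression-level weakening rule \rulename{i-weak expr} so that the expression lemma can be stated with $\Env = \Solution\EnvX$ exactly, rather than with $\dom(\EnvX)=\fn(\Expression)$ and padding deferred to the process level; your version is workable but forces you to re-align domains at every use of $\combineop$.

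The ``real subtlety'' in your last paragraph, however, is not real, and both of your proposed fixes are problematic. The exact-match induction hypothesis for expressions ($\Env = \Solution\EnvX$ \emph{and} $\Type = \Solution\TypeExpr$) does go through unmodified: at a leaf, \refrule{t-name} forces the environment entry for $\Name$ to be literally the result type (there is no subsumption), so assigning that type to the single variable $\tvar$ of \refrule{i-name} matches both simultaneously; names added by weakening get their own fresh variables constrained only by $\unlimited\cdot$, which the declarative side condition $\unlimited\Env$ discharges; and at every point where two occurrences of a name are combined, \refrule{c-env 2} introduces a \emph{fresh} variable $\tvar$ for the combined entry with the constraint $\tvar \ceq \TypeExprT + \TypeExprS$, so the environment entry and the expression's type are decoupled exactly where the ``inflation'' you describe happens. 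There is thus nothing to normalize, and your normalization claim is in any case false as stated: a name occurring in several subterms acquires its final use only through combination, never ``at the axiom''. Your alternative fix --- weakening the hypothesis to let $\Solution\TypeExpr$ be any legal type --- would actually break the induction: in the \refrule{t-in} and \refrule{t-out} cases you must know that $\Solution\TypeExprT$ is precisely $\tchan\Type{1+\Use_1}{2\Use_2}$ with $\Type$ equal to the type the continuation's hypothesis assigns to the bound variable, otherwise the generated constraint $\TypeExprT \ceq \tchan\TypeExprS{1+\uvar_1}{2\uvar_2}$ need not be satisfiable by the substitution you have built; the claim that the latitude ``never cascades'' is exactly what would have to be proved. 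Drop that paragraph and prove the exact-match lemma directly, as in Lemma~\ref{lem:completeness_expressions}.
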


\revised{
\begin{example}
\label{ex:simple}
Below we illustrate the reconstruction algorithm at work on the process
\[
\new\Channel\ttparens{
  \send\Channel{3}
  \parop
  \receive\Channel\var\idle
}
\]
which will be instrumental also in the following section:
\[
\begin{prooftree}
  \pt{
    \pt{
      \pt{
        \justifies
        \rte{
          \Channel
        }{
          \tvar_1
        }{
          \Channel : \tvar_1
        }{
          \emptyset
        }
      }
      ~~
      \pt{
        \justifies
        \rte{
          3
        }{
          \tint
        }{
          \emptyset
        }{
          \emptyset
        }
      }
      \justifies
      \rtp{
        \send\Channel{3}
      }{
        \Channel : \tvar_1
      }{
        \set{ \tvar_1 \ceq \tchan\tint{2\uvar_1}{1+\uvar_2} }
      }
      \using\refrule{i-out}
    }
    ~~
    \pt{
      \pt{
        \justifies
        \rte{
          \Channel
        }{
          \tvar_2
        }{
          \Channel : \tvar_2
        }{
          \emptyset
        }
      }
      ~~
      \pt{
        \pt{
          \justifies
          \rtp{
            \idle
          }{
            \emptyset
          }{
            \emptyset
          }
          \using\refrule{i-idle}
        }
        \justifies
        \rtp{
          \idle
        }{
          \var : \tvarC
        }{
          \set{ \unlimited\tvarC }
        }
        \using\refrule{i-weak}
      }
      \justifies
      \rtp{
        \receive\Channel\var\idle
      }{
        \Channel : \tvar_2
      }{
        \set{ \tvar_2 \ceq \tchan\tvarC{1+\uvar_3}{2\uvar_4}, \unlimited\tvarC }
      }
      \using\refrule{i-in}
    }
    \justifies
    \rtp{
      \send\Channel{3}
      \parop
      \receive\Channel\var\idle
    }{
      \Channel : \tvar
    }{
      \set{
        \tvar \ceq \tvar_1 + \tvar_2,
        \tvar_1 \ceq \tchan\tint{2\uvar_1}{1+\uvar_2},
        \tvar_2 \ceq \tchan\tvarC{1+\uvar_3}{2\uvar_4},
        \unlimited\tvarC
      }
    }
    \using\refrule{i-par}
  }
  \justifies
  \rtp{
    \new\Channel\ttparens{
      \send\Channel{3}
      \parop
      \receive\Channel\var\idle
    }
  }{
    \emptyset
  }{
    \set{
      \tvar \ceq \tchan\tvarD{\uvar_5}{\uvar_5},
      \tvar \ceq \tvar_1 + \tvar_2,
      \dots
    }
  }
  \using\refrule{i-new}
\end{prooftree}
\]
The synthesized environment is empty, since the process has no free
names, and the resulting constraint set is
\[
\set{
  \tvar \ceq \tchan\tvarD{\uvar_5}{\uvar_5},
  \tvar \ceq \tvar_1 + \tvar_2,
  \tvar_1 \ceq \tchan\tint{2\uvar_1}{1+\uvar_2},
  \tvar_2 \ceq \tchan\tvarC{1+\uvar_3}{2\uvar_4},
  \unlimited\tvarC
}
\]
Observe that $\Channel$ is used twice and each occurrence is assigned
a distinct type variable $\tvar_i$. Eventually, the reconstruction
algorithm finds out that the \emph{same} channel $\Channel$ is used
simultaneously in different parts of the process, so it records the
fact that the overall type $\tvar$ of $\Channel$ must be the
combination of $\tvar_1$ and $\tvar_2$ in the constraint $\tvar \ceq
\tvar_1 + \tvar_2$.

A solution for the obtained constraint set is the substitution
\[
  \set{
    \tvar \mapsto \tchan\tint11,
    \tvar_1 \mapsto \tchan\tint01,
    \tvar_2 \mapsto \tchan\tint10,
    \tvarC \mapsto \tint,
    \tvarD \mapsto \tint,
    \uvar_{1..4} \mapsto 0,
    \uvar_5 \mapsto 1
  }
\]
confirming that $\Channel$ is a linear channel. This is not the only
solution of the constraint set: another one can be obtained by setting
all the use variables to $\omega$, although in this case $\Channel$ is
not recognized as a linear channel.

Note also that the application of \refrule{i-in} is possible only if
the name $\var$ of the received message occurs in the environment
synthesized for the continuation process $\idle$. Since the
continuation process contains no occurrence of $\var$, this name can
only be introduced using \refrule{i-weak}.
In general, \refrule{i-weak} is necessary to prove the completeness of
the reconstruction algorithm as stated in
Theorem~\ref{thm:completeness_processes}. For example, $\wtp{\var :
  \tint}\idle$ is derivable according to the rules in
Table~\ref{tab:type_system}, but as we have seen in the above
derivation the reconstruction algorithm without~\refrule{i-weak} would
synthesize for $\idle$ an empty environment, not containing an
association for $\var$.
\eoe
\end{example}
}


\begin{example}
\label{ex:simple_generation}
We compute the constraint set of a simple process that accesses the same
composite structure containing linear values. The process in
Example~\ref{ex:evenodd} is too large to be discussed in full, so we
consider the following, simpler process
\[
  \receive{\First[\varX]}\varY
  \send{\Second[\varX]}{\ttparens{\varY \mathbin\ttplus 1}}
\]
which uses a pair $\varX$ of channels and sends on the second channel in the
pair the successor of the number received from the first channel (we assume
that the language and the type reconstruction algorithm have been extended in
the obvious way to support operations on numbers such as addition).
We derive
\[
\begin{prooftree}
  \[
    \justifies
    \rte{
      \varX
    }{
      \tvarA_1
    }{
      \varX : \tvarA_1
    }{
      \emptyset
    }
    \using \refrule{i-name}
  \]
  \justifies
  \rte{\First[\varX]}{
    \tvarB_1
  }{
    \varX : \tvarA_1
  }{
    \set{ \tvarA_1 \ceq \tvarB_1 \times \tvarB_2, \unlimited{\tvarB_2} }
  }
  \using \refrule{i-fst}
\end{prooftree}
\]
for the first projection of $\varX$ and
\[
\begin{prooftree}
  \[
    \justifies
    \rte{
      \varX
    }{
      \tvarA_2
    }{
      \varX : \tvarA_2
    }{
      \emptyset
    }
    \using \refrule{i-name}
  \]
  \justifies
  \rte{\Second[\varX]}{
    \tvarC_2
  }{
    \varX : \tvarA_2
  }{
    \set{ \tvarA_2 \ceq \tvarC_1 \times \tvarC_2, \unlimited{\tvarC_1} }
  }
  \using \refrule{i-snd}
\end{prooftree}
\]
for the second projection of $\varX$. For the output operation we derive
\[
\begin{prooftree}
  \vdots
  \qquad
  \[
    \[
      \justifies
      \rte{
        \varY
      }{
        \tvarD
      }{
        \varY : \tvarD
      }{
        \emptyset
      }
      \using\refrule{i-name}
    \]
    \qquad
    \[
      \justifies
      \rte{
        1
      }{
        \tint
      }{
        \emptyset
      }
      \using\refrule{i-int}
    \]
    \justifies
    \rte{
      \varY \mathbin\ttplus 1
    }{
      \tint
    }{
      \varY : \tvarD
    }{
      \set{ \tvarD \ceq \tint }
    }
  \]
  \justifies
  \rtp{
    \send{\Second[\varX]}{\ttparens{\varY \mathbin\ttplus 1}}
  }{
    \varX : \tvarA_2,
    \varY : \tvarD
  }{
    \set{
      \tvarA_2 \ceq \tvarC_1 \times \tvarC_2,
      \unlimited{\tvarC_1},
      \tvarC_2 \ceq \tchan\tint{2\uvar_3}{1+\uvar_4},
      \tvarD \ceq \tint
    }
  }
  \using\refrule{i-out}
\end{prooftree}
\]
so for the whole process we obtain
\[
\begin{prooftree}
  \vdots
  \justifies
  \rtp{
    \receive{\First[\varX]}\varY
    \send{\Second[\varX]}{\ttparens{\varY \mathbin\ttplus 1}}
  }{
    \varX : \tvarA
  }{
    \set{
      \begin{lines}
        \tvarA \ceq \tvarA_1 + \tvarA_2,
        \tvarA_1 \ceq \tvarB_1 \times \tvarB_2,
        \tvarA_2 \ceq \tvarC_1 \times \tvarC_2, \\
        \tvarB_1 \ceq \tchan\tvarD{1+\uvar_1}{2\uvar_2},
        \tvarC_2 \ceq \tchan\tint{2\uvar_3}{1+\uvar_4}, \\
        \unlimited{\tvarB_2},
        \unlimited{\tvarC_1},
        \tvarD \ceq \tint
    }
      \end{lines}
  }
  \using\refrule{i-in}
\end{prooftree}
\]
\revised{Like in Example~\ref{ex:simple}, here too the variable
  $\varX$ is used multiple times and each occurrence is assigned a
  distinct type variable $\tvarA_i$, but this time such type variables
  must be assigned with a pair type in order for the constraint set to
  be solved.}
\eoe
\end{example}


\newcommand{\decompose}{\mathsf{dec}}
\newcommand{\saturate}{\mathsf{sat}}
\newcommand{\failure}{\textbf{fail}}
\newcommand{\recursion}{\mathsf{rec}}
\newcommand{\dtv}{\mathsf{dtv}}
\newcommand{\entails}{\vDash}
\newcommand{\deft}{\mathsf{def}}
\newcommand{\supt}[1][\Constraints,\Solution]{\mathsf{sup}_{#1}}
\newcommand{\zerot}[1][\Constraints,\Solution]{\mathsf{zero}_{#1}}
\newcommand{\relt}[1]{\mathsf{rel}_{#1}}
\newcommand{\closuret}[2][\Constraints]{\mathsf{cls}_{#2,#1}}
\newcommand{\propert}{\mathsf{proper}}
\newcommand{\rect}{\mathsf{rec}}

\newcommand{\saturated}[1]{\overline{#1}}

\newcommand{\ConstraintsIn}{\Constraints}
\newcommand{\ConstraintsSat}{\saturated\Constraints}

\newcommand{\SolutionU}{\Solution_{\mathit{use}}}
\newcommand{\SolutionT}{\Solution_{\mathit{type}}}

\newcommand{\composable}{$\Constraints$-composable}
\newcommand{\compatible}{$\Constraints$-compatible}

\section{Constraint Solving}
\label{sec:solver}

In this section we describe an algorithm that determines whether a
given constraint set $\Constraints$ is satisfiable and, if this is the
case, computes a solution for $\Constraints$. \revised{Among all
  possible solutions for $\Constraints$, we strive to find one that
  allows us to identify as many linear channels as possible. To this
  aim, it is convenient to recall the notion of solution preciseness
  from~\cite{IgarashiKobayashi00}.}

\revised{
\begin{definition}[solution preciseness]
\label{def:precision}
Let $\prel$ be the total order on uses such that $0 \leq 1 \leq
\omega$. Given two solutions $\Solution_1$ and $\Solution_2$ for a
constraint set $\Constraints$, we say that $\Solution_1$ is
\emphdef{more precise} than $\Solution_2$ if $\Solution_1(\uvar) \leq
\Solution_2(\uvar)$ for every $\uvar\in\expr(\Constraints)$.
\end{definition}
}

\revised{%
  Roughly speaking, the preciseness of a solution is measured in terms
  of the numbers of unused and linear channels it identifies, which
  are related to the number of use variables assigned to $0$ and $1$.
  We will use Definition~\ref{def:precision} as a guideline for
  developing our algorithm, although the algorithm may be unable to
  find \emph{the} most precise solution.
  There are two reasons for this.
  First, there can be solutions with minimal use assignments that are
  incomparable according to Definition~\ref{def:precision}. This is
  related to the fact that the type system presented in
  Section~\ref{sec:types} lacks the principal typing property.
  Second, to ensure termination when constraints concern infinite
  types, our algorithm makes some simplifying assumptions that may --
  in principle -- imply a loss of precision of the resulting solution
  (see Example~\ref{ex:approximation}).
  Despite this, experience with the implementation suggests that the
  algorithm is indeed capable of identifying as many unused and linear
  channels as possible in practical situations, even when infinite
  types are involved.
  Before embarking in the technical description of the algorithm, we
  survey the key issues that we have to address and how they are
  addressed.}

\subsection{Overview}
\label{sec:overview}
We begin by considering again the simple process below
\begin{equation}
\label{eq:discuss1}
\new\Channel\ttparens{\send\Channel3 \parop \receive\Channel\var\idle}
\end{equation}
\revised{for which we have shown the reconstruction algorithm at work
  in Example~\ref{ex:simple}. The process} contains three occurrences
of the channel $\Channel$, two of them in subject position for
input/output operations and one binding occurrence in the
$\mkkeyword{new}$ construct. We have seen that the constraint
generation algorithm associates the two rightmost occurrences of
$\Channel$ with two type variables $\tvar_1$ and $\tvar_2$ that must
respectively satisfy the constraints
\begin{eqnarray}
  \tvar_1 & \con= & \tchan\tint{2\uvar_1}{1+\uvar_2}
  \label{eq:tvar1}
  \\
  \tvar_2 & \con= & \tchan\tvarC{1+\uvar_3}{2\uvar_4}
  \label{eq:tvar2}
\end{eqnarray}
whereas the leftmost occurrence of $\Channel$ has a type $\tvar$
which must satisfy the constraints
\begin{eqnarray}
  \tvar & \con= & \tvar_1 + \tvar_2
  \label{eq:tvar_comp}
  \\
  \tvar & \con= & \tchan\tvarD{\uvar_5}{\uvar_5}
  \label{eq:tvar}
\end{eqnarray}

%

\revised{%
  Even if none of these constraints concerns use variables directly,
  use variables are subject to implicit constraints that should be
  taken into account for finding a precise solution. To expose such
  implicit constraints, observe that} in this first example we are in
the fortunate situation where \revised{the type variables $\tvar$,
  $\tvar_1$, and $\tvar_2$ occur on the left-hand side of a constraint
  of the form} $\tvarB \con= \TypeExpr$ where $\TypeExpr$ is different
from a type variable.  \revised{In this case we say that $\tvarB$ is
  \emph{defined} and we call $\TypeExpr$ its \emph{definition}. If we
  substitute each type variable in \eqref{eq:tvar_comp} with its
  definition we obtain}
\begin{equation*}
  \tchan\tvarD{\uvar_5}{\uvar_5} \con= \tchan\tint{2\uvar_1}{1+\uvar_2}
  + \tchan\tvarC{1+\uvar_3}{2\uvar_4}
\end{equation*}
that reveals the relationships between the use variables. Knowing how
type combination \revised{operates} (Definition~\ref{def:tand}), we
can derive two constraints concerning use variables
\begin{eqnarray*}
  \uvar_5 & \con= & 2\uvar_1 + 1+\uvar_3
  \\
  \uvar_5 & \con= & 1+\uvar_2 + 2\uvar_4
\end{eqnarray*}
for which it is easy to figure out a solution that includes the
substitutions $\set{ \uvar_{1..4} \mapsto 0, \uvar_5 \mapsto 1 }$ (see
Example~\ref{ex:simple}).  No substitution \revised{can be more
  precise} than this one hence such solution, which identifies
$\Channel$ as a linear channel, is in fact optimal.

Let us now consider the following variation of \eqref{eq:discuss1}
\begin{equation*}
\send\Channel3 \parop \receive\Channel\var\idle
\end{equation*}
where we have removed the restriction.
In this case the generated constraints are the same \eqref{eq:tvar1},
\eqref{eq:tvar2}, and \eqref{eq:tvar_comp} as above, except that there
is no constraint \eqref{eq:tvar} that provides a definition for
$\tvar$. In a sense, $\tvar$ \emph{is} defined because we know that it
must be the combination of $\tvar_1$ and $\tvar_2$ for which we do
have definitions. However, \revised{in order to come up with a general
  strategy for solving constraint sets,} it is convenient to
\emph{complete} the constraint set with a defining equation for
$\tvar$: we know that $\tvar$ must be a channel type with messages of
type $\tint$, because that is the shape of the definition for
$\tvar_1$, but we do not know precisely the overall uses of
$\tvar$. Therefore, we generate a new constraint defining the
\emph{structure} of the type $\tvar$, but with fresh use variables
$\uvar_5$ and $\uvar_6$ in place of the unknown uses:
\begin{equation*}
\tvar \con= \tchan\tint{\uvar_5}{\uvar_6}
\end{equation*}

We can now proceed as before, by substituting all type variables in
\eqref{eq:tvar_comp} with their definition and deriving the use
constraints below:
\begin{eqnarray*}
  \uvar_5 & \con= & 2\uvar_1 + 1+\uvar_3
  \\
  \uvar_6 & \con= & 1+\uvar_2 + 2\uvar_4
\end{eqnarray*}

Note that, unlike in \eqref{eq:discuss1}, we do not know whether
$\uvar_5$ and $\uvar_6$ are required to be equal or not. Here we are
typing an open process which, in principle, may be composed in
parallel with other uses of the same channel $\Channel$.
Nonetheless, we can easily find a solution analogous to the previous one
but with the use assignments $\set{ \uvar_{1..5} \mapsto 0,
  \uvar_{5,6} \mapsto 1 }$.

The idea of completing constraints with missing definitions is a
fundamental ingredient of our constraint solving technique. In the
previous example, completion was somehow superfluous \revised{because
  we could have obtained a definition for $\tvar$ by combining the
  definitions of $\tvar_1$ and $\tvar_2$, which were
  available. However, completion allowed us to patch the constraint
  set so that it could be handled as in the previous case of
  process~\eqref{eq:discuss1}. In fact,}
it is easy to find processes for which completion becomes
essential. Consider for example
\begin{equation}
\label{eq:discuss3}
\new\Channel\ttparens{\send\Channel3 \parop \send\ChannelB\Channel}
\end{equation}
where the bound channel $\Channel$ is used once for an output and then
extruded through a free channel $\ChannelB$. \revised{For this
  process, the reconstruction algorithm infers the type environment
  $\ChannelB : \tvarB$ and the constraints below: }
\begin{eqnarray*}
  \tvar_1 & \con= & \tchan\tint{2\uvar_1}{1+\uvar_2}
  \\
  \tvarB & \con= & \tchan{\tvar_2}{2\uvar_3}{1+\uvar_4}
  \\
  \tvar & \con= & \tvar_1 + \tvar_2
  \\
  \tvar & \con= & \tchan\tvarD{\uvar_5}{\uvar_5}
\end{eqnarray*}
\revised{where the three occurrences of $\Channel$ are associated from
  left to right with the type variables $\tvar$, $\tvar_1$, and
  $\tvar_2$ (Section~\ref{sec:extra_solver} gives the derivation for
  \eqref{eq:discuss3}).}
Note that there is no constraint that defines $\tvar_2$. In fact, there is
just no constraint with $\tvar_2$ on the left hand side at all. The only hint
that we have concerning $\tvar_2$ is that it must yield $\tvar$ when combined
with $\tvar_1$. Therefore, according to the definition of type combination, we
can once more deduce that $\tvar_2$ shares the same structure as $\tvar$ and
$\tvar_1$ and we can complete the set of constraints with
\begin{equation*}
  \tvar_2 \con= \tchan\tint{\uvar_6}{\uvar_7}
\end{equation*}
where $\uvar_6$ and $\uvar_7$ are fresh use variables.

After performing the usual substitutions, we can finally derive the
use constraints
\begin{eqnarray*}
  \uvar_5 & \con= & 2\uvar_1 + \uvar_6
  \\
  \uvar_5 & \con= & 1 + \uvar_2 + \uvar_7
\end{eqnarray*}
for which we find a solution including the assignments $\set{
  \uvar_{1..4,7} \mapsto 0, \uvar_{5,6} \mapsto 1 }$.
The interesting fact about this solution is the \assignment $\uvar_6
\mapsto 1$, meaning that the constraint solver has inferred an input
operation for the rightmost occurrence of $\Channel$ in
\eqref{eq:discuss3}, even though there is no explicit evidence of this
operation in the process itself. The input operation is deduced ``by
subtraction'', seeing that $\Channel$ is used once in
\eqref{eq:discuss3} for an output operation and knowing that a
restricted (linear) channel like $\Channel$ must also be used for a
matching input operation.

Note also that this is not the only possible solution for the use
constraints. If, for example, it turns out that the extruded
occurrence of $\Channel$ is never used (or is used twice) for an
input, it is possible to obtain various solutions that include the
assignments $\set{ \uvar_{5,6} \mapsto \omega}$. However, the solution
we have found above is the \revised{most precise according to
  Definition~\ref{def:precision}}.

It is not always possible to find the most precise solution. This can
be seen in the following variation of \eqref{eq:discuss3}
\begin{equation}
\label{eq:discuss4}
\new\Channel\ttparens{\send\Channel3 \parop
  \send\ChannelB\Channel \parop \send\ChannelC\Channel}
\end{equation}
where $\Channel$ is extruded twice, on $\ChannelB$ and on $\ChannelC$
\revised{(Section~\ref{sec:extra_solver} gives the derivation)}. Here,
as in \eqref{eq:discuss3}, an input use for $\Channel$ is deduced ``by
subtraction'', but there is an ambiguity as to whether such input
capability is transmitted through $\ChannelB$ or through
$\ChannelC$. Hence, there exist two incomparable solutions for the
constraint set generated for~\eqref{eq:discuss3}.
The lack of an optimal solution in general (hence of a principal
typing) is a consequence of the condition imposing equal uses for
restricted channels (see \refrule{t-new} and \refrule{i-new}). Without
this condition, it would be possible to find the most precise solution
for the constraints generated by \eqref{eq:discuss4} noticing that
$\Channel$ is \emph{never} explicitly used for an input operation, and
therefore its input use could be 0. We think that this approach
hinders the applicability of the reconstruction algorithm in practice,
where separate compilation and type reconstruction of large programs
are real concerns. We will elaborate more on this in
Example~\ref{ex:filter}.
For the time being, let us analyze one last example showing a feature
that we do \emph{not} handle in our type system, namely
polymorphism. The process
\begin{equation}
\label{eq:discuss5}
\receive\ChannelA\var
\send\ChannelB\var
\end{equation}
models a forwarder that receives a message $\var$ from $\ChannelA$ and
sends it on $\ChannelB$. \revised{For this process the constraint
  generation algorithm yields the environment $\ChannelA : \tvarA,
  \ChannelB : \tvarB$ and the constraints}
\begin{eqnarray*}
  \tvarA & \con= & \tchan\tvarC{1+\uvar_1}{2\uvar_2} \\
  \tvarB & \con= & \tchan\tvarC{2\uvar_3}{1+\uvar_4}
\end{eqnarray*}
\revised{(Section~\ref{sec:extra_solver} gives the complete derivation)}.
In particular, there is no constraint concerning the type variable $\tvarC$
and for good reasons: since the message $\var$ is only passed around in
\eqref{eq:discuss5} but never actually used, the channels $\ChannelA$ and
$\ChannelB$ should be considered \emph{polymorphic}.  Note that in this case
we know nothing about the structure of $\tvarC$ hence completion of the
constraint set is not applicable. In this work we do not deal with
polymorphism and will refrain from solving sets of constraints where there is
no (structural) information for unconstrained type variables. Just observe
that handling polymorphism is not simply a matter of allowing (universally
quantified) type variables in types. For example, a type variable involved in
a constraint $\tvar \con= \tvar + \tvar$ does not have any structural
information and therefore is polymorphic, but can only be instantiated with
unlimited types.
The implementation has a defaulting mechanism that forces
unconstrained type variables to a base type.

We now formalize the ideas presented so far into an algorithm, for
which we have already identified the key phases: the ability to
recognize types that ``share the same structure'', which we call
\emph{structurally coherent} (Definition~\ref{def:mall}); the
\emph{completion} of a set of constraints with ``missing definitions''
so that each type variable has a proper definition; the derivation and
solution of \emph{use constraints}. Let us proceed in order.

\subsection{Verification}
\begin{table}
\framebox[\textwidth]{
\begin{math}
\displaystyle
\begin{array}{@{}c@{}}
  \inferrule[\defrule{c-axiom}]{}{
    \ded{\Constraints\cup\set\Constraint}\Constraint
  }
  \qquad
  \inferrule[\defrule{c-refl}]{
    \TypeExpr \in \expr(\Constraints)
  }{
    \ded\Constraints{\TypeExpr \con\rrel \TypeExpr}
  }
  \qquad
  \inferrule[\defrule{c-symm}]{
    \ded\Constraints{\TypeExprT \con\rrel \TypeExprS}
  }{
    \ded\Constraints{\TypeExprS \con\rrel \TypeExprT}
  }
  \qquad
  \inferrule[\defrule{c-trans}]{
    \ded\Constraints{\TypeExprT \con\rrel \TypeExprT'}
    \\
    \ded\Constraints{\TypeExprT' \con\rrel \TypeExprS}
  }{
    \ded\Constraints{\TypeExprT \con\rrel \TypeExprS}
  }
  \\\\
  \inferrule[\defrule{c-coh 1}]{
    \ded\Constraints{\TypeExprT \con\meq \TypeExprS}
  }{
    \ded\Constraints{\TypeExprT \con\mall \TypeExprS}
  }
  \qquad
  \inferrule[\defrule{c-coh 2}]{
    \ded\Constraints{\TypeExprT \con\meq \TypeExprS_1 + \TypeExprS_2}
  }{
    \ded\Constraints{\TypeExprT \con\mall \TypeExprS_i}
  }
  \quad
  i\in\set{1,2}
  \qquad
  \inferrule[\defrule{c-cong 1}]{
    \ded\Constraints{
      \tchan\TypeExprT{\UseExprU_1}{\UseExprU_2}
      \con\mall
      \tchan\TypeExprS{\UseExprV_1}{\UseExprV_2}
    }
  }{
    \ded\Constraints{
      \TypeExprT \con\meq \TypeExprS
    }
  }
  \\\\
  \inferrule[\defrule{c-cong 2}]{
    \ded\Constraints{
      \TypeExprT_1 \odot \TypeExprT_2
      \con\rrel
      \TypeExprS_1 \odot \TypeExprS_2
    }
  }{
    \ded\Constraints{\TypeExprT_i \con\rrel \TypeExprS_i}
  }
  \quad
  i\in\set{1,2}
  \qquad
  \inferrule[\defrule{c-cong 3}]{
    \ded\Constraints{
      \TypeExprT_1 \odot \TypeExprT_2
      \con=
      \TypeExprS_1 \odot \TypeExprS_2
      +
      \TypeExprS_3 \odot \TypeExprS_4
    }
  }{
    \ded\Constraints{\TypeExprT_i \con= \TypeExprS_i + \TypeExprS_{i+2}}
  }
  \quad
  i\in\set{1,2}
  \\\\
  \inferrule[\defrule{c-subst}]{
    \ded\Constraints{\TypeExprT_1 \con= \TypeExprT_2 + \TypeExprT_3}
    \\
    \ded\Constraints{\TypeExprT_i \con= \TypeExprS_i}~{}^{(1\le i\le 3)}
  }{
    \ded\Constraints{\TypeExprS_1 \con= \TypeExprS_2 + \TypeExprS_3}
  }
  \\\\
\revised{
  \displaystyle
  \inferrule[\defrule{c-use 1}]{
    \ded\Constraints{
      \tchan\TypeExprT{\UseExprU_1}{\UseExprU_2}
      \con\meq
      \tchan\TypeExprS{\UseExprV_1}{\UseExprV_2}
    }
  }{
    \ded\Constraints{
      \UseExprU_i
      \con\meq
      \UseExprV_i
    }
  }
  ~
  i\in\set{1,2}
  \qquad
  \inferrule[\defrule{c-use 2}]{
    \ded\Constraints{
      \tchan\TypeExprT{\UseExprU_1}{\UseExprU_2}
      \con\meq
      \tchan{\TypeExprS_1}{\UseExprV_1}{\UseExprV_2}
      +
      \tchan{\TypeExprS_2}{\UseExprV_3}{\UseExprV_4}
    }
  }{
    \ded\Constraints{
      \UseExprU_i
      \con\meq
      \UseExprV_i + \UseExprV_{i+2}
    }
  }
  ~
  i\in\set{1,2}
}
\end{array}
\end{math}
}
\caption{\label{tab:deduction} Constraint deduction system.}
\end{table}

In \finalrevision{Section~\ref{sec:overview}} we have seen that some
constraints can be \emph{derived} from the ones produced during the
constraint generation phase (Section~\ref{sec:generator}).  We now
define a deduction system that, starting from a given constraint set
$\Constraints$, \finalrevision{computes} all the ``derivable facts''
about the types in $\expr(\Constraints)$. Such deduction system is
presented as a set of inference rules in Table~\ref{tab:deduction},
where $\rrel$ ranges over the \revised{symbols} $\meq$ and
$\mall$. Each rule derives a judgment of the form
$\ded\Constraints\Constraint$ \revised{meaning} that the constraint
$\Constraint$ is derivable from those in $\Constraints$
\revised{(Proposition~\ref{prop:ded} below formalizes this property)}.
Rule~\refrule{c-axiom} simply takes each constraint in $\Constraints$
as an axiom. Rules ~\refrule{c-refl}, \refrule{c-symm}, and
\refrule{c-trans} state the obvious reflexivity, symmetry, and
transitivity of $\meq$ and $\mall$.
Rules~\refrule{c-coh 1} and \refrule{c-coh 2} deduce coherence
relations: equality implies coherence, for ${\meq} \subseteq {\mall}$,
and each component of a combination is coherent to the combination
itself (and therefore, by transitivity, to the other component).
Rules~\refrule{c-cong 1} through \refrule{c-cong 3} state congruence
properties of $\meq$ and $\mall$ which follow directly from
Definition~\ref{def:tand}: when two channel types are coherent, their
message types must be equal; corresponding components of
$\rrel$-related composite types are $\rrel$-related.
Rule~\refrule{c-subst} allows the substitution of equal types in
combinations.
Finally, \refrule{c-use 1} and~\refrule{c-use 2} allow us to deduce
\emphdef{use constraints} of the form \revised{$\UseExprU \con=
  \UseExprV$} involving use variables. Both rules are self-explanatory
and follow directly from Definition~\ref{def:tand}.

We state two important properties of this deduction system:

\begin{proposition}
\label{prop:ded}
Let $\ded\Constraints\Constraint$. The following properties hold:
\begin{enumerate}
\item $\Constraints$ and $\Constraints \cup \set{ \Constraint }$
  \revised{are equivalent (see Definition~\ref{def:solution})}.

\item $\expr(\Constraints) = \expr(\Constraints \cup \set{ \Constraint
  })$.
\end{enumerate}
\end{proposition}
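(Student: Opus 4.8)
The plan is to prove the two items simultaneously by induction on the derivation of $\ded\Constraints\Constraint$, establishing item~(2) before item~(1) in each case, since the latter relies on the former: a solution of $\Constraints$ is only required to cover the expressions in $\expr(\Constraints)$, so before asking whether it satisfies $\Constraint$ we must know that every use/type expression occurring in $\Constraint$ already occurs in $\Constraints$. For item~(1), one inclusion is immediate: since $\Constraints \subseteq \Constraints \cup \set\Constraint$ and, by item~(2), $\expr(\Constraints) = \expr(\Constraints \cup \set\Constraint)$, every solution of $\Constraints \cup \set\Constraint$ is a solution of $\Constraints$. For the converse, given a solution $\Solution$ of $\Constraints$ --- which by item~(2) covers all expressions of $\Constraint$ --- it remains to show that $\Solution$ satisfies $\Constraint$, and this is where the induction does its real work.

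Item~(2) is a routine inspection of the rules of Table~\ref{tab:deduction}: none of them introduces a fresh use or type variable, and the constituent expressions of each conclusion are subexpressions --- or combinations of subexpressions --- of the premises (covered by the induction hypothesis) or of $\Constraints$ itself (as in the side condition $\TypeExpr \in \expr(\Constraints)$ of \refrule{c-refl}, and trivially for \refrule{c-axiom}). For the ``$\Solution$ satisfies $\Constraint$'' half of item~(1) we case-split on the last rule, apply the induction hypothesis to the immediate subderivations, and invoke an elementary property of $\meq$, $\mall$ and $+$ on types and uses: reflexivity, symmetry and transitivity of $\meq$ and $\mall$ for \refrule{c-refl}, \refrule{c-symm}, \refrule{c-trans}; the congruence clauses of Definition~\ref{def:tand} for \refrule{c-cong 1}--\refrule{c-cong 3}, \refrule{c-use 1}, \refrule{c-use 2}; and substitutivity of $\meq$ inside a combination for \refrule{c-subst}. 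Throughout we use that, because $\Solution$ covers the expressions involved, its application commutes with the type/use structure --- $\Solution\tchan\TypeExpr\UseExprU\UseExprV = \tchan{\Solution\TypeExpr}{\Solution\UseExprU}{\Solution\UseExprV}$, $\Solution(\TypeExpr_1 \odot \TypeExpr_2) = \Solution\TypeExpr_1 \odot \Solution\TypeExpr_2$, and $\Solution(\UseExprU + \UseExprV) = \Solution\UseExprU + \Solution\UseExprV$ --- so that $\Solution$ turns the syntactic relations asserted in the premises into the corresponding semantic relations between genuine types and uses; for the congruence-style rules the conclusion is then read off the relevant clause of Definition~\ref{def:tand} (for instance, for \refrule{c-cong 1} the channel clause forces the common message type, for \refrule{c-cong 3} the composite clause gives the component-wise decomposition, and for \refrule{c-use 2} the channel clause gives the use-wise one).

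The only cases needing more than a line are those that unfold the coinductive definition of $\crel$. For \refrule{c-coh 1} we use the observation after Definition~\ref{def:mall} that $2\Type$ is always defined, that is ${\meq} \subseteq {\mall}$. For \refrule{c-coh 2} we must show that $((\Type_1,\Type_2),\TypeS) \in \crel$ implies $(\TypeS,\Type_i) \in \dom(\crel)$ for $i=1,2$; this is itself a small coinductive argument, exhibiting a witnessing relation that pairs each such $(\TypeS,\Type_i)$ with the combination prescribed structurally by the clauses of Definition~\ref{def:tand} and checking that this relation satisfies the closure property defining $\crel$ --- whence it is contained in $\crel$ and, in particular, $(\TypeS,\Type_i) \in \dom(\crel)$; the channel clause is where one uses that any two uses combine, so that coherence of channel types is insensitive to their uses. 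I expect this coinductive bookkeeping around $\crel$, and the \refrule{c-coh 2} step in particular, to be the only mildly delicate point; everything else is immediate.
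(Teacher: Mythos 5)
Your proposal is correct and follows exactly the route the paper takes: the paper's entire proof is the single sentence ``A simple induction on the derivation of $\ded\Constraints\Constraint$,'' and your case analysis is a faithful elaboration of that induction, including the right observation that item~(2) must be secured first so that a solution of $\Constraints$ covers the expressions of $\Constraint$. You also correctly isolate the only non-routine step, namely that \refrule{c-coh 2} needs the (coinductively provable) fact that each component of a defined combination is coherent with the combination itself, which the paper leaves implicit in its remark that $\mall$ is an equivalence relation.
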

\begin{proof}
  A simple induction on the derivation of
  $\ded\Constraints\Constraint$.
\end{proof}

The first property confirms that all the derivable relations are already
encoded in the original constraint set, in a possibly implicit form. The
deduction system makes them explicit.
The second property assures us that no new type expressions are introduced by
the deduction system. Since the inference rules in Section~\ref{sec:generator}
always generate finite constraint sets, this implies that the set of all
derivable constraints is also finite and can be computed in finite time. This
is important because the presence or absence of particular constraints
determines the (un)satisfiability of a constraint set:

\begin{proposition}
\label{prop:unsatisfiable}
If $\ded\Constraints{\TypeExprT \con\mall \TypeExprS}$ where
$\TypeExprT$ and $\TypeExprS$ are proper type expressions with
different topmost constructors, then $\Constraints$ \revised{has no
  solution}.
\end{proposition}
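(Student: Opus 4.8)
The plan is to argue by contradiction: assume that $\Constraints$ has a solution $\Solution$ and derive a contradiction from the existence of a derivation of $\TypeExprT \con\mall \TypeExprS$ with $\TypeExprT$, $\TypeExprS$ proper and headed by different constructors. The whole argument is bookkeeping built on Proposition~\ref{prop:ded} and the clause-by-clause shape of Definition~\ref{def:tand}.

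First I would extract two facts from Proposition~\ref{prop:ded}. Part~(2), applied to the derived constraint, gives $\expr(\Constraints) = \expr(\Constraints \cup \set{\TypeExprT \con\mall \TypeExprS})$, so $\TypeExprT$ and $\TypeExprS$ already occur in $\expr(\Constraints)$; hence any solution $\Solution$ of $\Constraints$ covers them and $\Solution\TypeExprT$, $\Solution\TypeExprS$ are well-defined types. Part~(1) gives that $\Constraints$ and $\Constraints \cup \set{\TypeExprT \con\mall \TypeExprS}$ have the same solutions, so $\Solution$ solves the latter as well; by the coherence clause of Definition~\ref{def:solution} this means $\Solution\TypeExprT \mall \Solution\TypeExprS$, i.e.\ by Definition~\ref{def:mall} that $\Solution\TypeExprT + \Solution\TypeExprS$ is defined.

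Next I would observe that applying a substitution to a \emph{proper} type expression preserves its topmost constructor: inspecting the defining clauses of $\Solution\TypeExpr$, a channel expression is sent to a channel type, a product to a product, a disjoint sum to a disjoint sum, and $\tint$ is left unchanged, while the two variable clauses cannot fire at the top since $\TypeExpr$ is not a type variable. Hence $\Solution\TypeExprT$ and $\Solution\TypeExprS$ are types whose topmost constructors coincide with those of $\TypeExprT$ and $\TypeExprS$, and are therefore still different. Finally I would contradict this using Definition~\ref{def:tand}: whenever $((\TypeT_1,\TypeT_2),\TypeS)\in\crel$, the pair $\TypeT_1$, $\TypeT_2$ falls into a single one of the three clauses — both $\tint$, both channel types (indeed with the same message type), or both headed by the same $\odot$ — so in every case $\TypeT_1$ and $\TypeT_2$ have the same topmost constructor; taking $\TypeT_1 = \Solution\TypeExprT$, $\TypeT_2 = \Solution\TypeExprS$ contradicts the previous step, so no such $\Solution$ exists.

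There is no real obstacle here. The only step that needs a little care is the remark that substitution preserves the topmost constructor of a proper expression, which must be matched against the definition of $\Solution\TypeExpr$ case by case, together with the complementary reading of Definition~\ref{def:tand} that combinability forces the two arguments to share a topmost constructor.
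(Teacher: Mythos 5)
Your proof is correct and follows essentially the same route as the paper's: assume a solution exists, use Proposition~\ref{prop:ded}(1) to transfer it to the augmented constraint set, and derive a contradiction from the fact that substitution preserves the topmost constructor of proper expressions while coherence (Definition~\ref{def:tand}) forces the two sides to share one. The extra bookkeeping you supply (coverage via Proposition~\ref{prop:ded}(2) and the case analysis on the substitution clauses) just makes explicit steps the paper leaves implicit.
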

\begin{proof}
  \revised{Suppose by contradiction that $\Solution$ is a solution for
    $\Constraints$. By Proposition~\ref{prop:ded}(1) we have that
    $\Solution$ is also a solution for $\Constraints \cup \set{
      \TypeExprT \con\mall \TypeExprS }$. This is absurd, for if
    $\TypeExprT$ and $\TypeExprS$ have different topmost constructors,
    then so do $\Solution\TypeExprT$ and $\Solution\TypeExprS$, hence
    $\Solution\TypeExprT \not\mall \Solution\TypeExprS$.}
\end{proof}

\revised{The converse of Proposition~\ref{prop:unsatisfiable} is not
  true in general.  For example, the constraint set $\set{
    \tchan\tint01 \con= \tchan\tint10 }$ has no solution because of
  the implicit constraints on corresponding uses and yet it satisfies
  the premises of Proposition~\ref{prop:unsatisfiable}.
  However, when $\Constraints$ is a constraint set generated by the
  inference rules in Section~\ref{sec:generator}, the converse of
  Proposition~\ref{prop:unsatisfiable} holds. This means that we can
  use structural coherence as a necessary and sufficient condition for
  establishing the satisfiability of constraint sets generated by the
  reconstruction algorithm.
}


Before proving this fact we introduce some useful notation.
\revised{For ${\rrel} \in \set{ {\meq}, {\mall} }$ let
\[
  {\can\rrel}
  \eqdef
  \set{
    (\TypeExprT, \TypeExprS)
    \mid
    \ded\Constraints{\TypeExprT \con\rrel \TypeExprS}
  }
\]}%
and observe that $\can\rrel$ is an equivalence relation on
$\expr(\Constraints)$ by construction, because of the rules
\refrule{c-refl}, \refrule{c-symm}, and \refrule{c-trans}. Therefore,
it partitions the type expressions in $\Constraints$ into
$\rrel$-equivalence classes. 
\revised{Now, we need some way to \emph{choose}, from each
  $\rrel$-equivalence class, \emph{one} representative element of the
  class. To this aim, we fix a total order $\orel$ between type
  expressions such that $\TypeExpr \orel \tvar$ for every proper
  $\TypeExpr$ and every $\tvar$ and we define:\footnote{In a Haskell
    or OCaml implementation such total order could be, for instance,
    the one automatically defined for the algebraic data type that
    represents type expressions and where the value constructor
    representing type variables is the last one in the data type
    definition.}  }

\begin{definition}[canonical representative]
\label{def:cr}
Let $\crep\Constraints\TypeExprT\rrel$ be \revised{the $\orel$-least
  type expression $\TypeExprS$ such that $\TypeExprT \can\rrel
  \TypeExprS$. We say that $\crep\Constraints\TypeExprT\rrel$ is} the
\emph{canonical representative} of $\TypeExprT$ with respect to the
relation $\can\rrel$.
\end{definition}

\revised{%
  Note that, depending on $\orel$, we may have different definitions
  of $\crep\Constraints\TypeExprT\rrel$.
  The exact choice of the canonical representative does not affect the
  ability of the algorithm to compute a solution for a constraint set
  (Theorem~\ref{thm:algorithm}) although -- in principle -- it may
  affect the precision of the solution
  (Example~\ref{ex:approximation}).
  Note also that, because of the assumption we have made on the total
  order $\orel$, $\crep\Constraints\TypeExprT\rrel$ is proper whenever
  $\TypeExprT$ is proper or when $\TypeExprT$ is some type variable
  $\tvar$ such that there is a ``definition'' for $\tvar$ in
  $\Constraints$.
  In fact, it is now time to define precisely the notion of
  \emph{defined} and \emph{undefined} type variables:%
}%

\begin{definition}[defined and undefined type variables]
\label{def:defined}
Let
\begin{eqnarray*}
\defined\Constraints\rrel & \eqdef & \set{ \tvar \in
    \expr(\Constraints) \mid \text{$\crep\Constraints\tvar\rrel$ is
      proper} }
\\
\undefined\Constraints\rrel & \eqdef & \set{ \tvar \in
  \expr(\Constraints) \setminus \defined\Constraints\rrel }
\end{eqnarray*}
We say that $\tvar$ is $\rrel$-defined or $\rrel$-undefined in
$\Constraints$ according to $\tvar \in \defined\Constraints\rrel$ or
$\tvar \in \undefined\Constraints\rrel$.
\end{definition}

We can now prove that the coherence check is also a sufficient condition for
satisfiability.

\begin{proposition}
\label{prop:satisfiable}
Let $\rtp\Process\EnvX\Constraints$.  If $\ded\Constraints{\TypeExprT
  \con\mall \TypeExprS}$ where $\TypeExprT$ and $\TypeExprS$ are
proper type expressions implies that $\TypeExprT$ and $\TypeExprS$
have the same topmost constructor, then $\Constraints$ \revised{has a
  solution}.
\end{proposition}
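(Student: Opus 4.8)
The plan is to exhibit an explicit solution $\Solution$ for $\Constraints$: I will send every use variable to $\omega$ and read off the type to assign to each type variable directly from the $\mall$-equivalence classes computed by the deduction system. First I would compute the relation $\can\mall$ and the induced partition of $\expr(\Constraints)$ into finitely many classes; this is effective because $\expr(\Constraints)$ is finite and, by Proposition~\ref{prop:ded}(2), the deduction system introduces no new expressions. Writing $[\TypeExpr]$ for the $\mall$-class of $\TypeExpr$, I associate with each class $K$ a \emph{template} $\widehat{K}$ built from the constructors of Definition~\ref{def:types}: if $K$ contains no proper type expression I set $\widehat{K} \eqdef \tint$; otherwise the hypothesis guarantees that all proper members of $K$ share the same topmost constructor, and I let $\widehat{K}$ be $\tint$, $\tchan{M}{\omega}{\omega}$, $K_1 \times K_2$, or $K_1 \tsum K_2$ according to that common constructor, where $M$ is the $\mall$-class of the message type of a proper channel member of $K$, and $K_1,K_2$ are the $\mall$-classes of the two components of a proper product- or sum-shaped member of $K$. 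Rules \refrule{c-cong 1} and \refrule{c-cong 2} ensure that $M$, $K_1$, $K_2$ do not depend on the particular member chosen, so the templates are well defined.

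Reading each class as an unknown, the templates form a finite system of equations whose right-hand sides are finite terms over the constructors of Definition~\ref{def:types} (with every channel use fixed to $\omega$) and the class-unknowns, none of which is a bare unknown. Theorem~\ref{thm:courcelle} therefore produces, for each class $K$, a unique regular tree $\Type_K$ satisfying the corresponding equation. I then define $\Solution$ to map every type variable $\tvar \in \expr(\Constraints)$ to $\Type_{[\tvar]}$ and every use variable in $\expr(\Constraints)$ to $\omega$; since $\expr(\Constraints)$ is closed under subexpressions, $\Solution$ covers $\expr(\Constraints)$ and $\Solution\TypeExpr$ is a type for each $\TypeExpr\in\expr(\Constraints)$.

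To check that $\Solution$ is a solution I would first prove, by a routine induction on the finite structure of a type expression, that $\Solution\TypeExpr = \Type_{[\TypeExpr]}$ for every $\TypeExpr \in \expr(\Constraints)$. This step uses the fact that every use expression annotating a channel type in a constraint set generated by the rules of Section~\ref{sec:generator} mentions at least one use variable (those rules only produce channel uses of the shapes $\uvar$, $1+\uvar$, and $2\uvar$), so $\Solution$ collapses all of them to $\omega$, matching the $\omega$'s planted in the templates. Given this, the clauses of Definition~\ref{def:solution} follow easily: a constraint $\TypeExprT \con\meq \TypeExprS$, a constraint $\TypeExprT \con\mall \TypeExprS$, or a combination constraint $\TypeExprT \con\meq \TypeExprS_1 + \TypeExprS_2$ (via \refrule{c-coh 2} and transitivity) all place the expressions involved in a single $\mall$-class, whence $\Solution\TypeExprT = \Solution\TypeExprS = \Type$ for the common tree $\Type$; it then remains only to observe that this $\Type$ is unlimited, which holds because $\tchan\cdot\omega\omega + \tchan\cdot\omega\omega = \tchan\cdot\omega\omega$ and $\tint + \tint = \tint$ make the set of uniformly-$\omega$ trees closed under the clauses of Definition~\ref{def:tand}, so $2\Type = \Type$. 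Finally, the rules of Section~\ref{sec:generator} never emit a use-equality constraint, so that clause of Definition~\ref{def:solution} is vacuous.

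The main obstacle is the well-definedness of the templates and of the induced system: one must verify carefully that, for a class with a proper member, the hypothesis genuinely pins down the topmost constructor and that \refrule{c-cong 1}--\refrule{c-cong 2} make the sub-classes canonical, and then that the resulting system of equations meets the hypotheses of Theorem~\ref{thm:courcelle} (finitely many equations, each guarded by a constructor). Everything else — the case analysis over the four shapes of constraints and the coinductive check that the uniformly-$\omega$ trees are unlimited — is routine bookkeeping.
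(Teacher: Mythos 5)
Your proposal is correct and follows essentially the same route as the paper's (sketched) proof: send every use variable to $\omega$, read off the structure of each type from its $\mall$-equivalence class (the paper phrases this via the canonical representative $\crep\Constraints\tvar\mall$ rather than a class template, which is the same device), default classes with no proper member to $\tint$, and invoke Theorem~\ref{thm:courcelle} to obtain the regular trees. Your extra verification details — well-definedness of the templates via \refrule{c-cong 1}--\refrule{c-cong 2}, the observation that generated channel uses all contain a use variable so they collapse to $\omega$, and the closure of uniformly-$\omega$ trees under self-combination — are exactly the facts the paper's two-point justification appeals to.
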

\begin{proof}
  We only sketch the proof, since we will prove a more general result
  later on (see Theorem~\ref{thm:algorithm}). Consider the use
  \assignment $\SolutionU \eqdef \set{ \uvar \mapsto \omega \mid \uvar
    \in \expr(\Constraints) }$ mapping all use variables in
  $\Constraints$ to $\omega$, let $\Upsigma$ be the system of
  equations \revised{$\set{ \tvar_i = \TypeExpr_i \mid 1\leq i\leq n
    }$ defined by}
\[
  \Upsigma \eqdef \set{
    \tvar = \SolutionU \crep\Constraints\tvar\mall
    \mid
    \tvar \in \defined\Constraints\mall
  } \cup \set{
    \tvar = \tint
    \mid
    \tvar \in \undefined\Constraints\mall
  }
\]
\revised{and observe that every $\TypeExpr_i$ is a proper type
  expression.  From Theorem~\ref{thm:courcelle} we know that
  $\Upsigma$ has a unique solution $\SolutionT = \set{ \tvar_i \mapsto
    \Type_i \mid 1\leq i\leq n}$ such that $\Type_i =
  \SolutionT\TypeExpr_i$ for every $1\leq i\leq n$.}
It only remains to show that $\SolutionU \cup \SolutionT$ is a
solution for $\Constraints$. This follows from two facts: (1)
\revised{from the hypothesis $\rtp\Process\EnvX\Constraints$ we know
  that} all channel types in $\Constraints$ have one use variable in
each of their use slots, hence the \assignment $\SolutionU$ forces all
uses to $\omega$; (2) from the hypothesis and the rules
\rulename{c-cong *} we know that all proper type expressions in the
same $(\mall)$-equivalence class have the same topmost constructor.
\end{proof}

\revised{In Proposition~\ref{prop:satisfiable},} for finding a
\assignment for all the type variables in $\Constraints$, we default
each type variable in $\undefined\Constraints\mall$ to $\tint$. This
\assignment is necessary in order to satisfy the constraints
$\unlimited\tvar$, namely those of the form $\tvar \ceq \tvar +
\tvar$, when $\tvar \in \undefined\Constraints\mall$.  These $\tvar$'s
are the ``polymorphic type variables'' that we have already discussed
earlier. Since we leave polymorphism for future work, in the rest of
this section we make the assumption that $\undefined\Constraints\mall
= \emptyset$, namely that all type variables are $(\mall)$-defined.

\begin{example}
\label{ex:simple_verification}
Below is a summary of the constraint set $\Constraints$ generated in
Example~\ref{ex:simple_generation}:
\[
\begin{array}{@{}r@{~}c@{~}l@{}}
  \tvarA & \ceq & \tvarA_1 + \tvarA_2 \\
  \tvarD & \ceq & \tint
\end{array}
\qquad
\begin{array}{@{}r@{~}c@{~}l@{}}
  \tvarA_1 & \ceq & \tvarB_1 \times \tvarB_2 \\
  \tvarA_2 & \ceq & \tvarC_1 \times \tvarC_2 \\
\end{array}
\qquad
\begin{array}{@{}r@{~}c@{~}l@{}}
  \tvarB_1 & \ceq & \tchan\tvarD{1+\uvar_1}{2\uvar_2} \\
  \tvarB_2 & \ceq & \tvarB_2 + \tvarB_2 \\
\end{array}
\qquad
\begin{array}{@{}r@{~}c@{~}l@{}}
  \tvarC_1 & \ceq & \tvarC_1 + \tvarC_1 \\
  \tvarC_2 & \ceq & \tchan\tint{2\uvar_3}{1+\uvar_4} \\
\end{array}
\]
Note that $\set{\tvarA, \tvarB_2, \tvarC_1} \subseteq
\defined\Constraints\mall \setminus \defined\Constraints\meq$. In particular,
they all have a proper canonical representative, which we may assume to be the
following ones:
\[
\begin{array}{r@{~}c@{~}l}
  \crep\Constraints\tvarA\mall = \crep\Constraints{\tvarA_1}\mall =
  \crep\Constraints{\tvarA_2}\mall & = & \tvarB_1 \times \tvarB_2
  \\
  \crep\Constraints{\tvarB_1}\mall = \crep\Constraints{\tvarC_1}\mall
  & = & \tchan\tvarD{1+\uvar_1}{2\uvar_2}
  \\
  \crep\Constraints{\tvarB_2}\mall = \crep\Constraints{\tvarC_2}\mall
  & = & \tchan\tint{2\uvar_3}{1+\uvar_4}
  \\
  \crep\Constraints\tvarD\mall & = & \tint
\end{array}
\]
It is immediate to verify that the condition of
Proposition~\ref{prop:satisfiable} holds, hence we conclude that
$\Constraints$ is satisfiable.
\revised{Indeed, a solution for $\Constraints$ is
\[
  \set{
    \tvarA, \tvarA_{1,2} \mapsto \tchan\tint\omega\omega \times \tchan\tint\omega\omega,
    \tvarB_{1,2}, \tvarC_{1,2} \mapsto \tchan\tint\omega\omega,
    \tvarD \mapsto \tint,
    \uvar_{1..4} \mapsto \omega
  }
\]
even though we will find a more precise solution in
Example~\ref{ex:simple_synthesis}.}
\eoe
\end{example}

\subsection{Constraint set completion}
If the satisfiability of the constraint set is established
(Proposition~\ref{prop:satisfiable}), the subsequent step is its completion in
such a way that every type variable $\tvar$ has a definition in the form of a
constraint $\tvar \ceq \TypeExpr$ where $\TypeExpr$ is proper. Recall that
this step is instrumental for discovering all the (implicit) use constraints.

\revised{In Example~\ref{ex:simple_verification} we} have seen that
some type variables may be $(\mall)$-defined but
$(\meq)$-undefined. The $\mall$ relation provides information about
the structure of the type that should be assigned to the type
variable, but says nothing about the uses in them. Hence, the main
task of completion is the creation of fresh use variables for those
channel types of which only the structure is known. In the process,
fresh type variables need to be created as well, \revised{and we
  should make sure that all such type variables are $(\meq)$-defined
  to guarantee that completion eventually terminates.}
We will be able to do this, possibly at the cost of some precision of
the resulting solution.

We begin the formalization of completion by introducing an injective function
$\mktvar$ that, given a pair of type variables $\tvarA$ and $\tvarB$, creates
a new type variable $\mktvar(\tvarA, \tvarB)$. We assume that $\mktvar(\tvarA,
\tvarB)$ is different from any type variable generated by the algorithm in
Section~\ref{sec:generator} so that the type variables obtained through
$\mktvar$ are effectively fresh.
Then we define an \emphdef{instantiation} function $\instance{}{}$ that, given
a type variable $\tvar$ and a type expression $\TypeExpr$, produces a new type
expression that is structurally coherent to $\TypeExpr$, but where all use
expressions and type variables have been respectively replaced by fresh use
and type variables. The first argument $\tvar$ of $\instance{}{}$ records the
fact that such instantiation is necessary for completing $\tvar$. Formally:
\begin{equation}
\label{eq:instance}
  \instance\tvarA\TypeExpr
  \eqdef
  \begin{cases}
    \mktvar(\tvarA,\tvarB) & \text{if $\TypeExpr = \tvarB$}
    \\
    \tint & \text{if $\TypeExpr = \tint$}
    \\
    \tchan\TypeExprS{\uvar_1}{\uvar_2}
    & \text{if $\TypeExpr = \tchan\TypeExprS\UseExprU\UseExprV$, $\uvar_i$ fresh}
    \\
    \instance\tvarA{\TypeExpr_1} \odot \instance\tvarA{\TypeExpr_2}
    & \text{if $\TypeExpr = \TypeExpr_1 \odot \TypeExpr_2$}
  \end{cases}
\end{equation}

All the equations but the first one are easily explained:
the instance of $\tint$ cannot be anything but $\tint$ itself;
the instance of a channel type $\tchan\TypeExprS\UseExprU\UseExprV$ is the
type expression $\tchan\TypeExprS{\uvar_1}{\uvar_2}$ where we generate two
fresh use variables corresponding to $\UseExprU$ and $\UseExprV$;
the instance of a composite type $\TypeExprT \odot \TypeExprS$ is the
composition of the instances of $\TypeExprT$ and $\TypeExprS$.
For example, we have
\[
  \instance\tvar{
    \tvarB \times
    \tchan{
      \tchan\tint{\UseExprU_1}{\UseExprU_2}
    }{\UseExprV_1}{\UseExprV_2}
  }
  =
  \mktvar(\tvarA,\tvarB) \times
  \tchan{
    \tchan\tint{\UseExprU_1}{\UseExprU_2}
  }{\uvar_1}{\uvar_2}
\]
where $\uvar_1$ and $\uvar_2$ are fresh.
Note that, while instantiating a channel type
$\tchan\TypeExprS\UseExprU\UseExprV$, there is no need to instantiate
$\TypeExprS$ because $\tchan{\TypeT}{\Use_1}{\Use_2} \mall
\tchan{\TypeS}{\Use_3}{\Use_4}$ implies $\TypeT = \TypeS$ so $\TypeExprS$ is
\emph{exactly} the message type we must use in the instance of
$\tchan\TypeExprS\UseExprU\UseExprV$.

Concerning the first equation in \eqref{eq:instance}, in principle we
want $\instance\tvarA\tvarB$ to be the same as
$\instance\tvarA{\crep\Constraints\tvarB\mall}$, but doing so directly
would lead to an ill-founded definition for $\instance{}{}$, since
\revised{nothing prevents $\tvarB$ from occurring in
  $\crep\Constraints\tvarB\mall$ (types can be infinite)}. We
therefore instantiate $\tvarB$ to a new type variable $\mktvar(\tvarA,
\tvarB)$ which will in turn be defined by a new constraint
$\mktvar(\tvarA, \tvarB) \con\meq
\instance\tvarA{\crep\Constraints\tvarB\mall}$.

There are a couple of subtleties concerning the definition of $\instance{}{}$.
The first one is that, strictly speaking, $\instance{}{}$ is a relation rather
than a function because the fresh use variables in \eqref{eq:instance} are not
uniquely determined. In practice, $\instance{}{}$ can be turned into a proper
function by devising a deterministic mechanism that picks fresh use variables
in a way similar to the $\mktvar$ function that we have defined above. The
formal details are tedious but well understood, so we consider the definition
of $\instance{}{}$ above satisfactory as is.
The second subtlety is way more serious and has to do with the
instantiation of type variables (first equation in
\eqref{eq:instance}) which hides a potential approximation due to this
completion phase. To illustrate the issue, suppose that
\begin{equation}
\label{eq:arec}
  \tvar \con\mall \tchan\tint\UseExprU\UseExprV \times \tvar
\end{equation}
is the only constraint concerning $\tvar$ in some constraint set
$\Constraints$ so that we need to provide a $(\meq)$-definition for
$\tvar$. According to \eqref{eq:instance} we have
\[
  \instance\tvar{\tchan\tint\UseExprU\UseExprV \times \tvar}
  =
  \tchan\tint{\uvar_1}{\uvar_2} \times \mktvar(\tvar,\tvar)
\]
so by adding the constraints
\begin{equation}
\label{eq:adef}
  \tvar \con= \mktvar(\tvar,\tvar)
  \text{\qquad and\qquad}
  \mktvar(\tvar,\tvar) \con= \tchan\tint{\uvar_1}{\uvar_2} \times \mktvar(\tvar,\tvar)
\end{equation}
we complete the definition for $\tvar$. There is a fundamental
difference between the constraint \eqref{eq:arec} and those in
\eqref{eq:adef} in that the former admits far more solutions than
those admitted by \eqref{eq:adef}. For example, \revised{the
  constraint \eqref{eq:arec} can be satisfied by a solution that
  contains the assignment $\tvar \mapsto \Type$ where $\Type =
  \tchan\tint10 \times \tchan\tint01 \times \Type$, but the
  constraint~\eqref{eq:adef} cannot.}
The problem of a constraint like \eqref{eq:arec} is that, when we only
have structural information about a type variable, we have no clue
about the uses in its definition, if they follow a pattern, and what
the pattern is. In principle, in order to account for all the
possibilities, we should generate fresh use variables in place of any
use slot in the possibly infinite type.  In practice, however, we want
completion to eventually terminate, and the definition of
$\instance{}{}$ given by \eqref{eq:instance} is one easy way to ensure
this: what we are saying there is that each type variable $\tvarB$
that contributes to the definition of a $(\meq)$-undefined type
variable $\tvarA$ is instantiated only once.  This trivially
guarantees completion termination, for there is only a finite number
of distinct variables to be instantiated. The price we pay with this
definition of $\instance{}{}$ is a potential loss of precision in the
solution of use constraints. We say ``potential'' because we have been
unable to identify a concrete example that exhibits such loss of
precision.
Part of the difficulty of this exercise is due to the fact that the effects of
the approximation on the solution of use constraints may depend on the
particular choice of canonical representatives, which is an implementation
detail of the constraint solver \revised{(see Definition~\ref{def:cr})}.  In part, the effects of the approximation
are limited to peculiar situations:
\begin{enumerate}
\item There is only a fraction of \revised{constraint sets} where the
  same type variable occurring in several different positions must be
  instantiated, namely \revised{those having as solution types with
    infinite branches containing only finitely many channel type
    constructors. The constraint~\eqref{eq:arec} is one such
    example. In all the other cases,} the given definition of
  $\instance{}{}$ does not involve any approximation.

\item A significant fraction of the type variables for which only structural
  information is known are those generated by the rules \refrule{i-fst},
  \refrule{i-snd}, and \refrule{i-weak}. These type variables stand for
  \emph{unlimited} types, namely for types whose uses are either 0 or
  $\omega$. In fact, in most cases \emph{all} the uses in these unlimited
  types are 0. Therefore, the fact that only a handful of fresh use variables
  is created, instead of infinitely many, does not cause any approximation at
  all, since the use variables in these type expressions would all be
  instantiated to 0 anyway.
\end{enumerate}

\noindent We define the completion of a constraint set $\Constraints$ as the least
superset of $\Constraints$ where all the $(\meq)$-undefined type variables in
$\Constraints$ have been properly instantiated:

\begin{definition}[completion]
\label{def:completion}
The \emph{completion} of $\Constraints$, written
$\saturated\Constraints$, is the least set such that:
\begin{enumerate}
\item $\Constraints \subseteq \saturated\Constraints$;

\item $\tvar \in \undefined{\Constraints}\meq$ implies $\tvar \con=
  \mktvar(\tvar,\tvar) \in \saturated\Constraints$;

\item $\mktvar(\tvarA,\tvarB) \in \expr(\saturated\Constraints)$ implies
  $\mktvar(\tvar,\tvarB) \con=
  \instance\tvarA{\crep\Constraints\tvarB\mall} \in \saturated\Constraints$.
\end{enumerate}
\end{definition}

The completion $\saturated\Constraints$ of a finite constraint set
$\Constraints$ can always be computed in finite time as the number of
necessary instantiations is bound by the square of the cardinality of
$\undefined\Constraints\meq$.
Because of the approximation of instances for undefined variables,
$\Constraints$ and $\saturated\Constraints$ are not equivalent in general (see
Example~\ref{ex:approximation} below). However, the introduction of instances
does not affect the satisfiability of the set of constraints.

\begin{proposition}
\label{prop:completion}
The following properties hold:
\begin{enumerate}
\item If $\Constraints$ is satisfiable, then $\saturated\Constraints$
  is satisfiable.

\item If $\Solution$ is a solution for $\saturated\Constraints$, then
  $\Solution$ is also a solution for $\Constraints$.
\end{enumerate}
\end{proposition}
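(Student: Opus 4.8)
The plan is to establish the two items of Proposition~\ref{prop:completion} separately, working from the definition of completion (Definition~\ref{def:completion}) and the characterisation of solutions (Definition~\ref{def:solution}).

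For item (2), the key observation is purely syntactic: $\Constraints \subseteq \saturated\Constraints$, so any substitution $\Solution$ that covers all type expressions in $\expr(\saturated\Constraints)$ and satisfies every constraint in $\saturated\Constraints$ a fortiori covers $\expr(\Constraints)$ and satisfies every constraint in $\Constraints$. The only minor point to check is the covering clause: one must observe that $\expr(\Constraints) \subseteq \expr(\saturated\Constraints)$, which is immediate from clause (1) of Definition~\ref{def:completion}. So item (2) is essentially a one-line argument.

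Item (1) is where the work lies. I would reuse the construction appearing in the proof sketch of Proposition~\ref{prop:satisfiable}: start from the use assignment $\SolutionU \eqdef \set{ \uvar \mapsto \omega \mid \uvar \in \expr(\saturated\Constraints) }$ sending every use variable occurring anywhere in $\saturated\Constraints$ (including the fresh ones introduced by $\mktvar$ and by $\instance{}{}$) to $\omega$, and then solve the system of type equations obtained by reading off, for each $(\meq)$-defined type variable of $\saturated\Constraints$, its definition with uses set to $\omega$. Concretely, every $\tvar \in \undefined\Constraints\meq$ acquires a definition $\tvar \con= \mktvar(\tvar,\tvar)$ in $\saturated\Constraints$, and each $\mktvar(\tvarA,\tvarB)$ that appears acquires a definition $\mktvar(\tvarA,\tvarB) \con= \instance\tvarA{\crep\Constraints\tvarB\mall}$ whose right-hand side is a \emph{proper} type expression by construction of $\instance{}{}$ (the first clause of \eqref{eq:instance} produces a product/sum/channel/$\tint$, never a bare variable, once the argument is taken to be the proper canonical representative $\crep\Constraints\tvarB\mall$ — which is proper precisely because we are assuming $\undefined\Constraints\mall = \emptyset$). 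Hence the resulting system has only proper right-hand sides, Theorem~\ref{thm:courcelle} yields a unique regular-tree solution $\SolutionT$, and I claim $\Solution \eqdef \SolutionU \cup \SolutionT$ solves $\saturated\Constraints$. To verify this I would check the four clauses of Definition~\ref{def:solution} constraint by constraint: equality and combination constraints inherited from $\Constraints$ hold because $\Constraints$ was already satisfiable and the $\omega$-for-everything use assignment together with the congruence/coherence structure makes all channel uses collapse to $\omega$ (using that every channel type in a generated constraint set carries a lone use variable in each slot, as in Proposition~\ref{prop:satisfiable}); the new equality constraints of the form $\tvar \con= \mktvar(\tvar,\tvar)$ and $\mktvar(\tvarA,\tvarB) \con= \instance\tvarA{\dots}$ hold by the very definition of $\SolutionT$ via Theorem~\ref{thm:courcelle}; coherence constraints reduce to checking that related proper type expressions share a topmost constructor, which the $\instance{}{}$ construction preserves; and use-equality constraints among $\omega$'s are trivially satisfied since $\omega + \omega = \omega$.

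The main obstacle, and the step I would spend the most care on, is the bookkeeping around the fresh type variables $\mktvar(\tvarA,\tvarB)$: one must argue that the system of equations defining $\SolutionT$ is genuinely finite (bounded by the square of $|\undefined\Constraints\meq|$, as noted after Definition~\ref{def:completion}), that it is closed — i.e.\ every $\mktvar(\tvarA,\tvarB)$ occurring on a right-hand side also occurs on a left-hand side, so Theorem~\ref{thm:courcelle} applies — and that $\SolutionT$ so obtained really does cover \emph{all} of $\expr(\saturated\Constraints)$, not merely the variables explicitly defined. The closure point follows from clause (3) of Definition~\ref{def:completion}, which is stated as a fixed-point (``least set such that''); I would make explicit that this closure is reached after finitely many steps because each instantiation introduces variables of the form $\mktvar(\tvar,\tvarB)$ drawn from a fixed finite pool. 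Once the finiteness and closure of the equation system are pinned down, the application of Theorem~\ref{thm:courcelle} and the clause-by-clause verification are routine, and I would not belabour them in the final write-up.
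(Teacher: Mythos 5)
Your proposal is correct and follows essentially the same route as the paper: item (2) is the inclusion $\Constraints \subseteq \saturated\Constraints$, and item (1) reuses the $\omega$-everywhere use assignment and the Theorem~\ref{thm:courcelle} equation-system construction from Proposition~\ref{prop:satisfiable}, together with the observation that every expression added by completion is structurally coherent to (and shares its topmost constructor with) an existing one, so the coherence condition is preserved. The paper merely states this last observation and cites Proposition~\ref{prop:satisfiable} as a black box where you inline its construction, but the argument is the same.
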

\begin{proof}
  Each $(\mall)$-equivalence class in $\saturated\Constraints$ contains
  exactly one $(\mall)$-equivalence class in $\Constraints$, for each new type
  expression that has been introduced in $\saturated\Constraints$ is
  structurally coherent to an existing type expression in $\Constraints$.
  Then item~(1) is a consequence of
  Proposition~\ref{prop:satisfiable}, while item~(2) follows from the
  fact that $\Constraints \subseteq \saturated\Constraints$.
\end{proof}

\begin{example}
\label{ex:simple_completion}
Considering the constraint set $\Constraints$ in
Example~\ref{ex:simple_verification}, we have three type variables
requiring instantiation, namely $\tvarA$, $\tvarB_2$, and
$\tvarC_1$. According to Definition~\ref{def:completion}, and using
the same canonical representatives mentioned in
Example~\ref{ex:simple_verification}, we augment the constraint set
with the constraints
\[
\begin{array}[t]{@{}r@{~}c@{~}l@{\qquad}r@{~}c@{~}l@{~}c@{~}l@{}}
  \tvarA & \ceq & \mktvar(\tvarA, \tvarA)
  &
  \mktvar(\tvarA, \tvarA) & \ceq & 
  \instance{\tvarA}{\tvarB_1 \times \tvarB_2} & = &
  \mktvar(\tvarA, \tvarB_1) \times
  \mktvar(\tvarA, \tvarB_2)
  \\
  & & &
  \mktvar(\tvarA, \tvarB_1) & \ceq &
  \instance{\tvarA}{ \tchan\tvarD{1+\uvar_1}{2\uvar_2} } & = &
  \tchan\tvarD{\uvar_5}{\uvar_6}
  \\
  & & &
  \mktvar(\tvarA, \tvarB_2) & \ceq &
  \instance{\tvarA}{ \tchan\tint{2\uvar_3}{1+\uvar_4} } & = &
  \tchan\tint{\uvar_7}{\uvar_8}
  \\
  \tvarB_2 & \ceq & \mktvar(\tvarB_2, \tvarB_2)
  &
  \mktvar(\tvarB_2, \tvarB_2) & \ceq &
  \instance{\tvarB_2}{ \tchan\tint{2\uvar_3}{1+\uvar_4} } & = &
  \tchan\tint{\uvar_9}{\uvar_{10}}
  \\
  \tvarC_1 & \ceq & \mktvar(\tvarC_1, \tvarC_1)
  &
  \mktvar(\tvarC_1, \tvarC_1) & \ceq &
  \instance{\tvarC_1}{ \tchan\tvarD{1+\uvar_1}{2\uvar_2} } & = &
  \tchan\tvarD{\uvar_{11}}{\uvar_{12}}
\end{array}
\]
where the $\uvar_i$ with $i\geq 5$ are all fresh.

Observe that the canonical $(\mall)$-representative of $\tvarB_2$ is
instantiated twice, once for defining $\tvarA$ and once for defining
$\tvarB_2$ itself. We will see in Example~\ref{ex:simple_synthesis} that this
double instantiation is key for inferring that $\Second[\varX]$ in
Example~\ref{ex:simple_generation} is used linearly.
\eoe
\end{example}

\begin{example}
\label{ex:approximation}
In this example we show the potential effects of instantiation on the
\revised{ability of the type reconstruction algorithm to identify
  linear channels}. To this aim, consider the following constraint set
\begin{eqnarray*}
  \tvarA & \con\mall & \tchan\tint\UseExprU\UseExprV \times \tvarA
  \\
  \tvarB & \con\meq & \tchan\tint{0}{1+\uvar_1} \times
                      \tchan\tint{0}{2\uvar_2} \times \tvarB
  \\
  \tvarC & \con\meq & \tchan\tint00 \times \tchan\tint00 \times \tvarC
  \\
  \tvarA & \con\meq & \tvarB + \tvarC
\end{eqnarray*}
where, to limit the number of use variables without defeating the
purpose of the example, we write the constant use $0$ in a few use
slots.
Observe that this constraint set admits the solution $\set{ \tvarA
  \mapsto \TypeT, \tvarB \mapsto \TypeT, \tvarC \mapsto \TypeS,
  \uvar_{1,2} \mapsto 0 }$ where $\TypeT$ and $\TypeS$ are the types
that satisfy the equalities $\TypeT = \tchan\tint01 \times
\tchan\tint00 \times \TypeT$ and $\TypeS = \tchan\tint00 \times
\TypeS$.
Yet, if we instantiate $\tvarA$ following the procedure outlined
above we obtain the constraints
\[
  \tvar \con\meq \mktvar(\tvarA, \tvarA)
  \text{\qquad and\qquad}
  \mktvar(\tvarA,\tvarA) \con\meq
  \tchan\tint{\uvar_3}{\uvar_4}
  \times \mktvar(\tvarA, \tvarA)
\]
and now the two constraints below follow by the congruence
rule~\rulename{c-cong *}:
\begin{eqnarray*}
  \tchan\tint{\uvar_3}{\uvar_4}
  & \con\meq & \tchan\tint{0}{1+\uvar_1} + \tchan\tint00
  \\
  \tchan\tint{\uvar_3}{\uvar_4}
  & \con\meq & \tchan\tint{0}{2\uvar_2} + \tchan\tint00
\end{eqnarray*}
This implies that the use variable $\uvar_4$ must simultaneously satisfy the
constraints
\[
\uvar_4 \ceq 1+\uvar_1
\text{\qquad and\qquad}
\uvar_4 \ceq 2\uvar_2
\]
which is only possible if we assign $\uvar_1$ and $\uvar_2$ to a use
other than $0$ and $\uvar_4$ to $\omega$. In other words, after
completion the only feasible solutions for the constraint set above
have the form $\set{ \tvarA \mapsto \TypeT', \tvarB \mapsto \TypeT',
  \tvarC \mapsto \TypeS, \uvar_{1,2} \mapsto \Use, \uvar_3 \mapsto 0,
  \uvar_4 \mapsto \omega }$ for $1 \leq \Use$ where $\TypeT' =
\tchan\tint0\omega \times \TypeT'$, which are less precise than the one
that we could figure out before the instantiation: \revised{$\TypeT$
  denotes an infinite tuple of channels in which those in odd-indexed
  positions are used for performing exactly one output operation;
  $\TypeT'$ denotes an infinite tuple of channels, each being used for
  an unspecified number of output operations.}
\eoe
\end{example}

\subsection{Solution synthesis}
\label{sec:solution_synthesis}
In this phase, \assignments are found for all the use and type
variables that occur in a (completed) constraint set. We have already
seen that it is always possible to consider a trivial use \assignment
that assigns each use variable to $\omega$. In this phase, however, we
have all the information for finding \revised{a use \assignment that,
  albeit not necessarily optimal because of the approximation during
  the completion phase, is minimal according to the $\prel$ precision
  order on uses of Definition~\ref{def:precision}.}

The first step for computing a use \assignment is to collect the whole
set of constraints concerning use expressions. This is done by
repeatedly applying the rules~\refrule{c-use 1} and~\refrule{c-use 2}
shown in Table~\ref{tab:deduction}. Note that the set of derivable use
constraints is finite and can be computed in finite time because
$\Constraints$ is finite. Also, we are sure to derive \emph{all}
possible use constraints if we apply these two rules to a completed
constraint set.

Once use constraints have been determined, \revised{any particular
  \assignment} for use variables can be found by means of an
exhaustive search over all the possible \assignments: the number of
such \assignments is finite because the number of use variables is
finite and so is the domain $\set{0, 1, \omega}$ on which they
range. Clearly this brute force approach is not practical in general
and in Section~\ref{sec:implementation} we will discuss two techniques
that reduce the search space for use \assignments. \revised{The main
  result of this section is independent of the particular use
  \assignment $\SolutionU$ that has been identified.}

\begin{theorem}[correctness of the constraint solving algorithm]
\label{thm:algorithm}
  Let $\rtp\Process\EnvX\Constraints$. If
\begin{enumerate}
\item $\ded\Constraints{\TypeExprT \con\mall \TypeExprS}$ where
  $\TypeExprT$ and $\TypeExprS$ are proper type expressions implies
  that $\TypeExprT$ and $\TypeExprS$ have the same topmost
  constructor, and

\item $\SolutionU$ is a solution of the use constraints of
  $\saturated\Constraints$, and

\item $\SolutionT$ is the solution of the system $\Upsigma \eqdef
  \set{ \tvar = \SolutionU \crep{\saturated\Constraints}\tvar\meq \mid
    \tvar \in \expr(\saturated\Constraints) }$,
\end{enumerate}
then $\SolutionU \cup \SolutionT$ is a solution for $\Constraints$.
\end{theorem}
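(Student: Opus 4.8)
The plan is to show that $\Solution \eqdef \SolutionU \cup \SolutionT$ satisfies every constraint in $\Constraints$. By Proposition~\ref{prop:completion}(2) it suffices to check that $\Solution$ is a solution for the completed set $\saturated\Constraints$, and by Proposition~\ref{prop:ded}(1) it then suffices to check that $\Solution$ satisfies every constraint in the \emph{deductive closure} of $\saturated\Constraints$ — in particular every constraint deducible via the rules of Table~\ref{tab:deduction}. Before this, one must first verify that $\Solution$ is well-formed and covers $\expr(\Constraints)$: hypothesis~(1) together with the assumption that $\undefined\Constraints\mall = \emptyset$ (in force throughout this section) guarantees that every type variable of $\saturated\Constraints$ is $(\mall)$-defined, and completion (Definition~\ref{def:completion}) makes every type variable $(\meq)$-defined, so the system $\Upsigma$ in hypothesis~(3) has proper right-hand sides; Theorem~\ref{thm:courcelle} then yields a unique regular-tree solution $\SolutionT$. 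Also $\SolutionU$ covers all use variables, because — as in the proof of Proposition~\ref{prop:satisfiable} — from $\rtp\Process\EnvX\Constraints$ all channel-type use slots in $\Constraints$ (hence in $\saturated\Constraints$) contain use variables, and the use constraints derived by \refrule{c-use 1}/\refrule{c-use 2} mention exactly those variables.

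The core of the argument is a coinductive / bisimulation-style verification. Define a candidate relation on types by collecting, for each pair $(\TypeExprT,\TypeExprS)$ with $\ded{\saturated\Constraints}{\TypeExprT \con\meq \TypeExprS}$, the pair $(\Solution\TypeExprT, \Solution\TypeExprS)$, and similarly for $\con\mall$ and for combination constraints $\TypeExprT \con= \TypeExprS_1 + \TypeExprS_2$ the triple $((\Solution\TypeExprS_1,\Solution\TypeExprS_2),\Solution\TypeExprT)$. I would prove: (a) the equality pairs form the identity relation, (b) the coherence pairs are contained in $\mall$, and (c) the combination triples are contained in $\crel$. For (a), note that by construction $\SolutionT(\tvar) = \Solution\bigl(\crep{\saturated\Constraints}\tvar\meq\bigr)$; so if $\TypeExprT \can\meq \TypeExprS$ they have the same canonical representative $\TypeExpr_0$, and $\Solution\TypeExprT = \Solution\TypeExpr_0 = \Solution\TypeExprS$ follows by structural induction on $\TypeExpr_0$ using the definition of substitution application — with the use of $\SolutionU$ closing use subexpressions and congruence rules \refrule{c-cong 2}/\refrule{c-cong 3} / \refrule{c-subst} ensuring the recursion descends correctly. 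For (b) one uses hypothesis~(1): $(\mall)$-related proper expressions have the same topmost constructor, so applying $\Solution$ preserves the head constructor, and one descends componentwise via \refrule{c-cong 2}; at leaves one lands on $\tint\mall\tint$ or on channel types with the same message type (by \refrule{c-cong 1}), which are coherent regardless of uses. For (c) one checks that if $\ded{\saturated\Constraints}{\TypeExprT \con= \TypeExprS_1 + \TypeExprS_2}$ then unfolding one layer: either all three are $\tint$; or all three are channel types $\tchan{\Type'}{\cdot}{\cdot}$ with the \emph{same} message expression (by \refrule{c-cong 1} applied to the coherences extracted via \refrule{c-coh 2}), and the use slots satisfy $\SolutionU\UseExprU_i = \SolutionU\UseExprV_i + \SolutionU\UseExprV_{i+2}$ precisely because $\SolutionU$ solves the use constraints produced by \refrule{c-use 2} (hypothesis~(2)); or all three are $\toper$-composites and one recurses componentwise via \refrule{c-cong 3}. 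In each case the shape matches a clause of Definition~\ref{def:tand}, so the triple lies in $\crel$, i.e. $\Solution\TypeExprT = \Solution\TypeExprS_1 + \Solution\TypeExprS_2$. Finally, use-equality constraints $\UseExprU \con= \UseExprV$ in $\saturated\Constraints$ are satisfied directly by hypothesis~(2).

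The main obstacle I anticipate is making (a) and the coinductive framing rigorous in the presence of \emph{infinite regular} types: the "structural induction on the canonical representative" is not literally an induction, since $\crep{\saturated\Constraints}\tvar\meq$ may contain $\tvar$ itself. The clean way around this is to argue path-by-path — recall (from the discussion after Definition~\ref{def:types}) that a regular tree is a partial function from paths to constructors — and show by induction on the length of a path $\Path$ that $\Solution\TypeExprT$ and $\Solution\TypeExprS$ agree at $\Path$, using that $\TypeExprT \can\meq \TypeExprS$ propagates to corresponding subexpressions along $\Path$ via the congruence and substitution rules, and that the system $\Upsigma$ defining $\SolutionT$ unfolds one constructor layer per step. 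The same path-indexed technique upgrades (b) and (c) from "one-layer" statements to genuine membership in the largest relations $\mall$ and $\crel$. A secondary point to handle carefully is that $\Solution$ must cover $\EnvX$ and $\Constraints$ simultaneously even though completion only added constraints about variables already present (plus the fresh $\mktvar$ and $\uvar$ variables), so no variable of the original $\expr(\Constraints)$ is left unassigned; this is immediate from $\expr(\Constraints) \subseteq \expr(\saturated\Constraints)$ and the construction of $\Upsigma$ over all of $\expr(\saturated\Constraints)$.
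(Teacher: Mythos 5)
Your proposal is correct and follows essentially the same route as the paper: after reducing to the completed set via Proposition~\ref{prop:completion}(2), the paper likewise builds the relation $\set{((\Solution\TypeExprS_1,\Solution\TypeExprS_2),\Solution\TypeExprT) \mid \ded{\saturated\Constraints}{\TypeExprT \ceq \TypeExprS_1 + \TypeExprS_2}}$, reduces to proper representatives via \refrule{c-subst} and the definition of $\Upsigma$, and checks the three clauses of Definition~\ref{def:tand} by coinduction, exactly as in your part~(c). Your parts~(a) and~(b) and the path-indexed treatment of equality of infinite regular trees just spell out what the paper dismisses as ``simpler and/or handled in a similar way,'' so there is no substantive difference.
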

\begin{proof}
  Let $\Solution \eqdef \SolutionU \cup \SolutionT$.
  We have to prove the implications of Definition~\ref{def:solution}
  for $\saturated\Constraints$.
  We focus on constraints of the form $\TypeExprT \ceq \TypeExprS_1 +
  \TypeExprS_2$, the other constraints being simpler and/or handled in
  a similar way.

  Let ${\rrel} \eqdef \set{ ((\Solution\TypeExprS_1,
    \Solution\TypeExprS_2), \Solution\TypeExprT) \mid
    \ded{\saturated\Constraints}{\TypeExprT \ceq \TypeExprS_1 +
      \TypeExprS_2} }$.  It is enough to show that $\rrel$ satisfies
  the conditions of Definition~\ref{def:tand}, since type combination
  is the \emph{largest} relation that satisfies those same conditions.
  Suppose $((\TypeS_1, \TypeS_2), \TypeT) \in {\rrel}$. Then there
  exist $\TypeExprT$, $\TypeExprS_1$, and $\TypeExprS_2$ such that
  $\ded{\saturated\Constraints}{\TypeExprT \ceq \TypeExprS_1 +
    \TypeExprS_2}$ and $\TypeT = \Solution\TypeExprT$ and $\TypeS_i =
  \Solution\TypeExprS_i$ for $i=1,2$.
  \finalrevision{Without loss of generality, we may also assume that
    $\TypeExprT$, $\TypeExprS_1$, and $\TypeExprS_2$ are proper type
    expressions.
    Indeed, suppose that this is not the case and, for instance,}
  $\TypeExprT = \tvar$. Then, from \refrule{c-subst} we have that
  $\ded{\saturated\Constraints}{\crep{\saturated\Constraints}\tvar\meq
    \ceq \TypeExprS_1 + \TypeExprS_2}$ and, since $\Solution$ is a
  solution of $\Upsigma$, we know that $\Solution(\tvar) =
  \Solution\crep{\saturated\Constraints}\tvar\meq$. Therefore, the
  same pair $((\TypeS_1, \TypeS_2), \TypeT) \in {\rrel}$ can also be
  obtained from the triple
  $(\crep{\saturated\Constraints}\tvar\meq,\TypeExprS_1,\TypeExprS_2)$
  whose first component is proper.
  The same argument applies for $\TypeExprS_1$ and $\TypeExprS_2$.

  Now we reason by cases on the structure of $\TypeExprT$,
  $\TypeExprS_1$, and $\TypeExprS_2$, knowing that all these type
  expressions have the same topmost constructor from hypothesis~(1)
  and \refrule{c-coh 2}:
\begin{itemize}
\item If $\TypeExprT = \TypeExprS_1 = \TypeExprS_2 = \tint$, then
  condition~(1) of Definition~\ref{def:tand} is satisfied.

\item If $\TypeExprT = \tchan{\TypeExprT'}{\UseExprU_1}{\UseExprU_2}$
  and $\TypeExprS_i =
  \tchan{\TypeExprS_i'}{\UseExprV_{2i-1}}{\UseExprV_{2i}}$ for
  $i=1,2$, then from \refrule{c-coh 2} and \refrule{c-cong 1} we
  deduce $\ded{\saturated\Constraints}{\TypeExprT' \con\meq
    \TypeExprS_i'}$ and from \refrule{c-use 2} we deduce
  $\ded{\saturated\Constraints}{\UseExprU_i \con\meq \UseExprV_i +
    \UseExprV_{i+2}}$ for $i=1,2$.
  Since $\Solution$ is a solution for the equality constraints in
  $\saturated\Constraints$, we deduce $\Solution\TypeExprT' =
  \Solution\TypeExprS_1 = \Solution\TypeExprS_2$.
  Since $\Solution$ is a solution for the use constraints in
  $\saturated\Constraints$, we conclude $\Solution\UseExprU_i \meq
  \Solution\UseExprV_i + \Solution\UseExprV_{i+2}$ for $i=1,2$.
  Hence, condition~(2) of Definition~\ref{def:tand} is satisfied.

\item If $\TypeExprT = \TypeExprT_1 \odot \TypeExprT_2$ and
  $\TypeExprS_i = \TypeExprS_{i1} \odot \TypeExprS_{i2}$, then from
  \refrule{c-cong 3} we deduce
  $\ded{\saturated\Constraints}{\TypeExprT_i \con\meq \TypeExprS_{i1}
    + \TypeExprS_{i2}}$ for $i=1,2$.
  We conclude $((\Solution\TypeExprS_{i1}, \Solution\TypeExprS_{i2}),
  \Solution\TypeExprT_i) \in {\rrel}$ by definition of $\rrel$, hence
  condition~(3) of Definition~\ref{def:tand} is satisfied.
  \qedhere
\end{itemize}
\end{proof}

\finalrevision{%
\noindent   Note that the statement of Theorem~\ref{thm:algorithm} embeds the
  constraint solving algorithm, which includes a verification phase
  (item~(1)), a constraint completion phase along with an
  (unspecified, but effective) computation of a solution for the use
  constraints (item~(2)), and the computation of a solution for the
  original constraint set in the form of a finite system of equations
  (item~(3)). The conclusion of the theorem states that the algorithm
  is correct.}

\begin{example}
\label{ex:simple_synthesis}
There are three combination constraints in the set
$\saturated\Constraints$ obtained in
Example~\ref{ex:simple_completion}, namely $\tvarA \ceq \tvarA_1 +
\tvarA_2$, $\tvarB_2 \ceq \tvarB_2 + \tvarB_2$, and $\tvarC_1 \ceq
\tvarC_1 + \tvarC_1$.
By performing suitable substitutions with \refrule{c-subst} we obtain
\[
\begin{prooftree}
  \[
    \[
      \justifies
      \ded{\saturated\Constraints}{
        \tvarA \ceq \tvarA_1 + \tvarA_2
      }
      \using\refrule{c-axiom}
    \]
    \Justifies
    \ded{\saturated\Constraints}{
      \mktvar(\tvarA, \tvarB_1) \times \mktvar(\tvarA, \tvarB_2)
      \ceq
      \tvarB_1 \times \tvarB_2
      +
      \tvarC_1 \times \tvarC_2
    }
    \using\text{\refrule{c-subst} (multiple applications)}
  \]
  \justifies
  \ded{\saturated\Constraints}{
    \mktvar(\tvarA, \tvarB_i)
    \ceq
    \tvarB_i + \tvarC_i
  }
  \using\refrule{c-cong 3}
\end{prooftree}
\]
from which we can further derive
\[
\begin{prooftree}
  \[
    \vdots
    \justifies
    \ded{\saturated\Constraints}{
      \mktvar(\tvarA, \tvarB_1)
      \ceq
      \tvarB_1 + \tvarC_1
    }
  \]
  \Justifies
  \ded{\saturated\Constraints}{
    \tchan\tvarD{\uvar_5}{\uvar_6}
    \ceq
    \tchan\tvarD{1+\uvar_1}{2\uvar_2}
    +
    \tchan\tvarD{\uvar_{11}}{\uvar_{12}}
  }
  \using\text{\refrule{c-subst} (multiple applications)}
\end{prooftree}
\]
as well as
\[
\begin{prooftree}
  \[
    \vdots
    \justifies
    \ded{\saturated\Constraints}{
      \mktvar(\tvarA, \tvarB_2)
      \ceq
      \tvarB_2 + \tvarC_2
    }
  \]
  \Justifies
  \ded{\saturated\Constraints}{
    \tchan\tvarD{\uvar_7}{\uvar_8}
    \ceq
    \tchan\tvarD{\uvar_9}{\uvar_{10}}
    +
    \tchan\tvarD{2\uvar_3}{1+\uvar_4}
  }
  \using\text{\refrule{c-subst} (multiple applications)}
\end{prooftree}
\]
Analogous derivations can be found starting from $\tvarB_2 \ceq
\tvarB_2 + \tvarB_2$ and $\tvarC_1 \ceq \tvarC_1 + \tvarC_1$. At this
point, using \refrule{c-use 2}, we derive the following set of use
constraints:
\[
\begin{array}[t]{rcl}
  \uvar_5 & \ceq & 1+\uvar_1+\uvar_{11} \\
  \uvar_6 & \ceq & 2\uvar_2+\uvar_{12} \\
  \uvar_7 & \ceq & \uvar_9+2\uvar_3 \\
  \uvar_8 & \ceq & \uvar_{10}+1+\uvar_4 \\
\end{array}
\qquad
\begin{array}[t]{rcl}
  \uvar_{11} & \ceq & 2\uvar_{11} \\
  \uvar_{12} & \ceq & 2\uvar_{12} \\
  \uvar_9 & \ceq & 2\uvar_9 \\
  \uvar_{10} & \ceq & 2\uvar_{10} \\
\end{array}
\]
for which we find the most precise solution $\set{
  \uvar_{1..4,6,7,9..12} \mapsto 0, \uvar_{5,8} \mapsto 1}$.

From this set of use constraints we can also appreciate the increased accuracy
deriving from distinguishing the instance $\mktvar(\tvarA,\tvarB_2)$ of the
type variable $\tvarB_2$ used for defining $\tvarA$ from the instance
$\mktvar(\tvarB_2,\tvarB_2)$ of the same type variable $\tvarB_2$ for defining
$\tvarB_2$ itself. Had we chosen to generate a unique instance of $\tvarB_2$,
which is equivalent to saying that $\uvar_8$ and $\uvar_{10}$ are the same use
variable, we would be required to satisfy the use constraint
\[
  \uvar_{10}+1+\uvar_4 \ceq 2\uvar_{10}
\]
which is only possible if we take $\uvar_8 = \uvar_{10} = \omega$. But this
assignment fails to recognize that $\Second[\varX]$ is used linearly in the
process of Example~\ref{ex:simple_generation}.
\eoe
\end{example}

\section{Implementation}
\label{sec:implementation}

In this section we cover a few practical aspects concerning the implementation
of the type reconstruction algorithm. 

\subsection{Derived constraints}
The verification phase of the solver algorithm requires finding all the
constraints of the form $\TypeExprT \con\mall \TypeExprS$ that are derivable
from a given constraint set $\Constraints$. Doing so allows the algorithm to
determine whether $\Constraints$ is satisfiable or not
(Proposition~\ref{prop:satisfiable}). In principle, then, one should compute
the whole set of constraints derivable from $\Constraints$.
The particular nature of the $\mall$ relation enables a more efficient
way of handling this phase. The key observation is that there is no
need to ever perform substitutions (with the rule \refrule{c-subst})
in order to find all the $\con\mall$ constraints. This is because
\refrule{c-coh 2} allows one to relate the type expressions in a
combination, since they must all be structurally coherent and $\mall$
is insensitive to the actual content of the use slots in channel
types.  This means that all $\con\mall$ constraints can be computed
efficiently using conventional unification techniques (ignoring the
content of use slots).
In fact, the implementation uses unification also for the constraints of the
form $\TypeExprT \ceq \TypeExprS$. Once all the $\ceq$ constraints have been
found and the constraint set has been completed, substitutions in constraints
expressing combinations can be performed efficiently by mapping each type
variable to its canonical representative.

\subsection{Use constraints resolution}
In Section~\ref{sec:solver} we have refrained from providing any
detail about how use constraints are solved and argued that a
particular use \assignment can always be found given that both the set
of constraints and the domain of use variables are finite. While this
argument suffices for establishing the decidability of this crucial
phase of the reconstruction algorithm, a na\"ive solver based on an
exhaustive search of all the use \assignments would be unusable, since
the number of use variables is typically large, even in small
processes. Incidentally, note that completion contributes
significantly to this number, since it generates fresh use variables
for all the instantiated channel types.

There are two simple yet effective strategies that can be used for
speeding up the search of a particular use \assignment{}
\revised{(both have been implemented in the prototype)}. The first
strategy is based on the observation that, although the set of use
variables can be large, it can often be partitioned into many
independent subsets.
Finding partitions is easy: two variables $\uvar_1$ and $\uvar_2$ are
\emphdef{related} in $\Constraints$ if $\ded\Constraints{\UseExprU
  \ceq \UseExprV}$ and $\uvar_1$, $\uvar_2$ occur in $\UseExprU \ceq
\UseExprV$ (regardless of where $\uvar_1$ and $\uvar_2$ occur
exactly). The dependencies between variables induce a partitioning of
the use constraints such that the use variables occurring in the
constraints of a partition are all related among them, and are not
related with any other use variable occurring in a use constraint
outside the partition. Once the partitioning of use constraints has
been determined, each partition can be solved independently of the
others.

The second strategy is based on the observation that many use
constraints have the form $\uvar \ceq \UseExpr$ where $\uvar$ does not
occur in $\UseExpr$. In this case, the value of $\uvar$ is in fact
determined by $\UseExpr$. So, $\UseExpr$ can be substituted in place
of all the occurrences of $\uvar$ in a given set of use constraints
and, once a \assignment is found for the use variables in the set of
use constraints with the substitution, the \assignment for $\uvar$ can
be determined by simply evaluating $\UseExpr$ under such \assignment.

\Luca{Il metodo iterativo di Igarashi e Kobayashi \`e rapido, ma funzionerebbe
  anche nel nostro contesto?}

\subsection{Pair splitting versus pair projection}
\label{sec:splitting}
It is usually the case that linearly typed languages provide a
dedicated construct for splitting pairs \revised{(a notable exception
  is~\cite{Kobayashi06})}. The language introduced
in~\cite[Chapter~1]{Pierce04}, for example, has an expression form
\[
  \mkkeyword{split}~\ExpressionE~\mkkeyword{as}~\PairX\varX\varY~\mkkeyword{in}~\ExpressionF
\]
that evaluates $\ExpressionE$ to a pair, binds the first and second
component of the pair respectively to the variables $\varX$ and
$\varY$, and then evaluates $\ExpressionF$. At the same time, no pair
projection primitives are usually provided. This is because in most
linear type systems linear values ``contaminate'' with linearity the
composite data structures in which they occur: for example, a pair
containing linear values is itself a linear value and can only be used
once, whereas for extracting \emph{both} components of a pair using
the projections one would have to project the pair \emph{twice}, once
using $\First$ and one more time using $\Second$. For this reason, the
$\mkkeyword{split}$ construct becomes the only way to use linear pairs
without violating linearity, as it grants access to both components of
a pair but accessing the pair only once.

The process language we used in an early version of this
article~\cite{Padovani14A} provided a $\mkkeyword{split}$ construct
for splitting pairs and did not have the projections $\First$ and
$\Second$. In fact, the ability to use $\First$ and $\Second$ without
violating linearity constraints in our type system was pointed out by a
reviewer of~\cite{Padovani14A} and in this article we have decided to promote
projections as the sole mechanism for accessing pair components.
Notwithstanding this, there is a practical point in favor of
$\mkkeyword{split}$ when considering an actual implementation of the type
system.
Indeed, the pair projection rules~\refrule{i-fst} and~\refrule{i-snd} are
among the few that generate constraints of the form $\unlimited\tvar$ for
some type variable $\tvar$. In the case of \refrule{i-fst} and
\refrule{i-snd}, the unlimited type variable stands for the component of the
pair that is discarded by the projection. For instance, we can derive
\[
\begin{prooftree}
  \[
    \rte{
      \var
    }{
      \tvarB_1
    }{
      \var : \tvarB_1
    }{
      \emptyset
    }
    \justifies
    \rte{
      \First[\var]
    }{
      \tvarA_1
    }{
      \var : \tvarB_1
    }{
      \set{ \tvarB_1 \ceq \tvarA_1 \times \tvarC_1, \unlimited{\tvarC_1} }
    }
  \]
  \quad
  \[
    \rte{
      \var
    }{
      \tvarB_2
    }{
      \var : \tvarB_2
    }{
      \emptyset
    }
    \justifies
    \rte{
      \Second[\var]
    }{
      \tvarA_2
    }{
      \var : \tvarB_2
    }{
      \set{ \tvarB_2 \ceq \tvarC_2 \times \tvarA_2, \unlimited{\tvarC_2} }
    }
  \]
  \justifies
  \rte{
    \Pair{\First[\var]}{\Second[\var]}
  }{
    \tvarA_1 \times \tvarA_2
  }{
    \var : \tvar
  }{
    \set{
      \tvar \ceq \tvarB_1 + \tvarB_2,
      \tvarB_1 \ceq \tvarA_1 \times \tvarC_1,
      \tvarB_2 \ceq \tvarC_2 \times \tvarA_2,
      \unlimited{\tvarC_1},
      \unlimited{\tvarC_2}
    }
  }
\end{prooftree}
\]
and we observe that $\tvarC_1$ and $\tvarC_2$ are examples of those type
variables for which only structural information is known, but no definition is
present in the constraint set. Compare this with a hypothetical derivation
concerning a splitting construct (for expressions)
\[
\begin{prooftree}
  \rte{
    \var
  }{
    \tvar
  }{
    \var : \tvar
  }{
    \emptyset
  }
  \quad
  \[
    \rte{\var_1}{\tvar_1}{\var_1 : \tvar_1}{\emptyset}
    \qquad
    \rte{\var_2}{\tvar_2}{\var_2 : \tvar_2}{\emptyset}
    \justifies
    \rte{
      \Pair{\var_1}{\var_2}
    }{
      \tvar_1 \times \tvar_2
    }{
      \var_1 : \tvar_1,
      \var_2 : \tvar_2
    }{
      \emptyset
    }
  \]
  \justifies
  \rte{
    \Split\var{\var_1}{\var_2}{\Pair{\var_1}{\var_2}}
  }{
    \tvar_1 \times \tvar_2
  }{
    \var : \tvar
  }{
    \set{ \tvar \ceq \tvar_1 \times \tvar_2 }
  }
\end{prooftree}
\]
producing a much smaller constraint set which, in addition, is free from
$\unlimited\cdot$ constraints and includes a definition for $\tvar$. The
constraint set obtained from the second derivation is somewhat easier to solve,
if only because it requires no completion, meaning fewer use variables to
generate and fewer chances of stumbling on the approximated solution of use
constraints (Example~\ref{ex:approximation}).

Incidentally we observe, somehow surprisingly, that the two constraint sets
are not exactly equivalent. In particular, the constraint set obtained from
the first derivation admits a solution containing the \assignments
\[
  \set{
    \tvar \mapsto \tchan\tint\omega0 \times \tchan\tint0\omega,
    \tvarA_1 \mapsto \tchan\tint10,
    \tvarA_2 \mapsto \tchan\tint01
  }
\]
whereas in the second derivation, if we fix $\tvar$ as in the \assignment
above, we can only have
\[
  \set{
    \tvar \mapsto \tchan\tint\omega0 \times \tchan\tint0\omega,
    \tvarA_1 \mapsto \tchan\tint\omega0,
    \tvarA_2 \mapsto \tchan\tint0\omega
  }
\]
meaning that, using projections, it is possible to \emph{extract} from a pair
only the needed capabilities, provided that what remains unused has an
unlimited type. On the contrary, $\mkkeyword{split}$ always extracts the full
set of capabilities from each component of the pair.

In conclusion, in spite of the features of the type system we argue that it is
a good idea to provide both pair projections and pair splitting, and that pair
splitting should be preferred whenever convenient to use.

\section{Examples}
\label{sec:examples}

In this section we discuss three more elaborate examples that highlight the
features of our type reconstruction algorithm. For better clarity, in these
examples we extend the language with triples, boolean values, conditional
branching, arithmetic and relational operators, OCaml-like polymorphic
variants~\cite[Chapter 4]{OCAML}, and a more general form of pattern matching.
All these extensions can be easily accommodated or encoded in the language
presented in Section~\ref{sec:language} and are supported by the prototype
implementation of the reconstruction algorithm.

\begin{example}
\label{ex:trees}
\newcommand{\Leaf}{\mathtt{Leaf}}
\newcommand{\Node}{\mathtt{Node}}
\newcommand{\takec}{\mathtt{take}}
\newcommand{\skipc}{\mathtt{skip}}
\newcommand{\TypeTree}{\Type_{\mathit{tree}}}
The purpose of this example is to show the reconstruction algorithm at work on
a fairly complex traversal of a binary tree. The traversal is realized by the
two processes $\takec$ and $\skipc$ below
\[
\begin{array}{cl}
  &
  \bang\receive\takec\var
  \CaseV\var{
    \quad\Leaf & \idle
    \\
    \quad\Node\ttparens{\ChannelC\ttcomma\varY\ttcomma\varZ} &
    \send\ChannelC3
    \parop
    \send\takec\varY
    \parop
    \send\skipc\varZ
  }
  \\
  \parop &
  \bang\receive\skipc\var
  \CaseV\varX{
    \quad\Leaf & \idle
    \\
    \quad\Node\ttparens{\ttunderscore\ttcomma\varY\ttcomma\varZ}
    &
    \send\skipc\varY
    \parop
    \send\takec\varZ
  }
\end{array}
\]
where, as customary, we identify the name of a process with the replicated
channel on which the process waits for invocations.

\tikzstyle{every picture}+=[
  remember picture,
  thick,
  rounded corners=1pt,
  transition/.style={->,semithick,sloped,auto},
  branch/.style={auto},
  trace/.style={
    line width=9pt,
    line cap=round,
    line join=round,
    red!60
  },
  inclusion/.style={thick,dotted},
]

\begin{figure}
\begin{tikzpicture}[xscale=0.3,yscale=0.3]
  \draw[trace] (15, 8) -- (7, 6) -- (3, 4) -- (1, 2) -- (0, 0);
  \draw[trace] (6, 0) -- (6, 0);
  \draw[trace] (10, 0) -- (10, 0);
  \draw[trace] (13, 2) -- (12, 0);
  \draw[trace] (18, 0) -- (18, 0);
  \draw[trace] (21, 2) -- (20, 0);
  \draw[trace] (27, 4) -- (25, 2) -- (24, 0);
  \draw[trace] (30, 0) -- (30, 0);

  \node at (15, 8) {$\bullet$};

  \foreach \x in {0, 2, ..., 30} {
    \node at (\x, 0) {$\bullet$};
  }

  \foreach \x in {1, 5, ..., 30} {
    \node at (\x, 2) {$\bullet$};

    \foreach \y in {-1, 1} {
      \draw (\x, 2) -- ($(\x + \y, 0)$);
    }
  }

  \foreach \x in {3, 11, ..., 30} {
    \node at (\x, 4) {$\bullet$};

    \foreach \y in {-2, 2} {
      \draw (\x, 4) -- ($(\x + \y, 2)$);
    }

  }

  \foreach \x in {7, 23} {
    \node at (\x, 6) {$\bullet$};

    \draw (15, 8) -- (\x, 6);

    \foreach \y in {-4, 4} {
      \draw (\x, 6) -- ($(\x + \y, 4)$);
    }
  }
\end{tikzpicture}
\caption{\label{fig:take} Regions of a complete binary tree used by
  $\takec$.}
\end{figure}
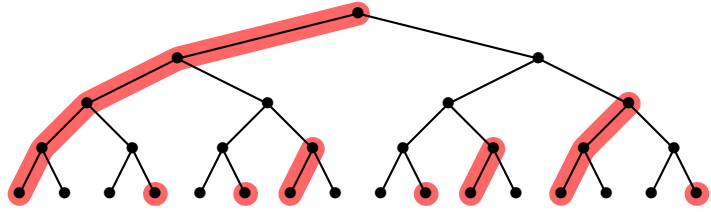

Both $\takec$ and $\skipc$ receive as argument a binary tree $\varX$
and analyze its structure by means of pattern matching. If the tree is
empty, no further operation is performed. When $\takec$ receives a
non-empty tree, it \revised{uses the channel $\ChannelC$} found at the
root of the tree, it recursively visits the left branch $\varY$ and
passes the right branch $\varZ$ to $\skipc$. The process $\skipc$ does
not use the channel found at the root of the tree, but visits the left
branch recursively and passes the right branch to $\takec$.

The types inferred for $\takec$ and $\skipc$ are
\[
  \takec : \tchan\TypeT\omega\omega
  \text{\qquad and\qquad}
  \skipc : \tchan\TypeS\omega\omega
\]
where $\TypeT$ and $\TypeS$ are the types that satisfy the
\revised{equalities}
\begin{eqnarray*}
  \TypeT & = & \Leaf \oplus \Node\ttparens{ \tchan\tint01 \times
    \TypeT \times \TypeS } \\
  \TypeS & = & \Leaf \oplus \Node\ttparens{ \tchan\tint00 \times \TypeS
    \times \TypeT }
\end{eqnarray*}

In words, $\takec$ uses every channel that is found after an even
number of right traversals, whereas $\skipc$ uses every channel that
is found after an odd number of right traversals.
Figure~\ref{fig:take} depicts the regions of a (complete) binary tree
of depth 4 that are used by $\takec$, while the unmarked regions are
those used by $\skipc$. Overall, the invocation
\[
  \send\takec{\mathit{tree}}
  \parop
  \send\skipc{\mathit{tree}}
\]
allows the reconstruction algorithm to infer that \emph{all} the
channels in $\mathit{tree}$ are used, namely that $\mathit{tree}$ has
type $\TypeTree = \Leaf \oplus \Node\ttparens{ \tchan\tint01 \times
  \TypeTree \times \TypeTree} = \TypeT + \TypeS$.
\eoe
\end{example}

\begin{example}
\label{ex:math_server}
\newcommand{\Quit}{\mathtt{Quit}}
\newcommand{\Plus}{\mathtt{Plus}}
\newcommand{\Eq}{\mathtt{Eq}}
\newcommand{\Neg}{\mathtt{Neg}}
\newcommand{\serverc}{\mathtt{server}}
\newcommand{\auxc}{\mathtt{aux}}
\newcommand{\ac}{\mathtt{foo}}
\newcommand{\bc}{\mathtt{bar}}
In this example we show how our type reconstruction algorithm can be
used for inferring \emph{session types}.
\revised{%
  Some familiarity with the related literature and particularly
  with~\cite{DardhaGiachinoSangiorgi12,Dardha14} is assumed.
  Session types~\cite{Honda93,HondaVasconcelosKubo98,GayHole05} are
  protocol specifications describing the sequence of input/output
  operations that are meant to be performed on a (private)
  communication channel. In most presentations, session types $T$,
  $\dots$ include constructs like $\stin\Type\SessionType$ (input a
  message of type $\Type$, then use the channel according to
  $\SessionType$) or $\stout\Type\SessionType$ (output a message of
  type $\Type$, then use the channel according to $\SessionType$) and
  possibly others for describing terminated protocols and protocols
  with branching structure. By considering also recursive session
  types (as done, \eg, in~\cite{GayHole05}), or by taking the regular
  trees over such constructors (as we have done for our type language
  in this paper), it is possible to describe potentially infinite
  protocols. For instance, the infinite regular tree $\SessionType$
  satisfying the equality
\[
  \SessionTypeT = \stout\tint{\stin\tbool\SessionTypeT}
\]
describes the protocol followed by a process that alternates outputs
of integers and inputs of booleans on a session channel, whereas the
infinite regular tree $\SessionTypeS$ satisfying the equality
\[
  \SessionTypeS = \stin\tint{\stout\tbool\SessionTypeS}
\]
describes the protocol followed by a process that alternates inputs of
integers and outputs of booleans. According to the conventional
terminology, $\SessionTypeT$ and $\SessionTypeS$ above are \emph{dual}
of each other: each action described in $\SessionTypeT$ (like the
output of a message of type $\tint$) is matched by a corresponding
co-action in $\SessionTypeS$ (like the input of a message of type
$\tint$). This implies that two processes that respectively follow the
protocols $\SessionTypeT$ and $\SessionTypeS$ when using the
\emph{same} session channel can interact without errors: when one
process sends a message of type $\Type$ on the channel, the other
process is ready to receive a message of the same type from the
channel.
Two such processes are those yielded by the outputs
$\send\ac\ChannelC$ and $\send\bc\ChannelC$ below:
\[
\begin{lines}
  \phantom{{}\parop{}}
  \bang
  \receive\ac\varX
  \send\varX{\mathtt{random}}\ttdot
  \receive\varX{\ttunderscore}
  \send\ac\varX
  \\
  {}\parop
  \bang
  \receive\bc\varY
  \receive\varY{n}
  \send\varY{\ttparens{n \mathbin{\mkkeyword{mod}} 2}}\ttdot
  \send\bc\varY
  \\
  {}\parop
  \new\ChannelC\ttparens{
    \send\ac\ChannelC
    \parop
    \send\bc\ChannelC
  }
\end{lines}
\]
It is easy to trace a correspondence of the actions described by
$\SessionTypeT$ with the operations performed on $\varX$, and of the
actions described by $\SessionTypeS$ with the operations performed on
$\varY$. Given that $\varX$ and $\varY$ are instantiated with the same
channel $\ChannelC$, and given the duality that relates
$\SessionTypeT$ and $\SessionTypeS$, this process exhibits no
communication errors even if the \emph{same} channel $\ChannelC$ is
exchanging messages of \emph{different} types ($\tint$ or $\tbool$).
For this reason, $\ChannelC$ is not a linear channel and the above
process is ill typed \finalrevision{according to our typing
  discipline}. However,} as discussed in~\cite{Kobayashi02b,
DardhaGiachinoSangiorgi12,Dardha14}, binary sessions and binary
session types can be \emph{encoded} in the linear $\pi$-calculus using
a continuation passing style. The key idea of the encoding is that
each communication in a session is performed on a distinct linear
channel, and the exchanged message carries, along with the actual
payload, a continuation channel on which the rest of the conversation
takes place.
\revised{According to this intuition, the process above is encoded in
  the linear $\pi$-calculus as the term:
\[
\begin{lines}
  \phantom{{}\parop{}}
  \bang
  \receive\ac\varX
  \new\ChannelA
  \ttparens{
    \send\varX{\Pair{\mathtt{random}}\ChannelA}
    \parop
    \receive\ChannelA{\ttunderscore\ttcomma\varX'}
    \send\ac{\varX'}
  }
  \\
  {}\parop
  \bang
  \receive\bc\varY
  \receive\varY{n\ttcomma\varY'}
  \new\ChannelB
  \ttparens{
    \send{\varY'}{\Pair{n \mathbin{\mkkeyword{mod}} 2}\ChannelB}
    \parop
    \send\bc\ChannelB
  }
  \\
  {}\parop
  \new\ChannelC\ttparens{
    \send\ac\ChannelC
    \parop
    \send\bc\ChannelC
  }
\end{lines}
\]
where $\ChannelA$, $\ChannelB$, and $\ChannelC$ are all linear
channels (with possibly different types) used for exactly one
communication.
The encoding of processes using (binary) sessions into the linear
$\pi$-calculus induces a corresponding encoding of session types into
linear channel types. In particular, input and output session types
are encoded according to the laws
\begin{equation}
\label{eq:enc}
\begin{array}{rcl}
  \enc{\stin\Type\SessionTypeT} & = & \tchan{ \Type \times \enc\SessionTypeT }10
  \\
  \enc{\stout\Type\SessionTypeT} & = & \tchan{ \Type \times \enc{\co\SessionTypeT} }01
\end{array}
\end{equation}
where we use $\co\SessionTypeT$ to denote the dual protocol of
$\SessionTypeT$. Such encoding is nothing but the coinductive
extension of the one described in~\cite{DardhaGiachinoSangiorgi12} to
infinite protocols.
Note that in $\enc{\stout\Type\SessionTypeT}$, the type of the
continuation channel is the encoding of the \emph{dual} of
$\SessionTypeT$. This is because the transmitted continuation will be
used \emph{by the receiver process} in a complementary fashion with
respect to $\SessionTypeT$, which instead describes the continuation
of the protocol from the viewpoint of the sender.
As an example, the protocols $\SessionTypeT$ and $\SessionTypeS$ above
can be respectively encoded as the types $\TypeT$ and $\TypeS$ that
satisfy the equalities
\[
  \TypeT =
  \tchan{\tint \times \tchan{\tbool \times \TypeS}01}01
  \qquad
  \TypeS =
  \tchan{\tint \times \tchan{\tbool \times \TypeS}01}10
\]
It turns out that these are the types that our type reconstruction
algorithm associates with $\varX$ and $\varY$. This is not a
coincidence, for essentially three reasons: (1) the encoding of a
well-typed process making use of binary sessions is always a
well-typed process in the linear
$\pi$-calculus~\cite{DardhaGiachinoSangiorgi12,Dardha14}, (2) our type
reconstruction algorithm is complete
(Theorem~\ref{thm:completeness_processes}), and (3) it can identify a
channel as linear when it is used for one communication only
(Section~\ref{sec:solution_synthesis}).
The upshot is that, once the types $\TypeT$ and $\TypeS$ have been
reconstructed, the protocols $\SessionTypeT$ and $\SessionTypeS$ can
be obtained by a straightforward procedure that ``decodes'' $\TypeT$
and $\TypeS$ using the inverse of the transformation sketched by the
equations~\eqref{eq:enc}.
There is a technical advantage of such rather indirect way of
performing session type reconstruction.  Duality accounts for a good
share of the complexity of algorithms that reconstruct session types
directly~\cite{Mezzina08}. However, as the authors
of~\cite{DardhaGiachinoSangiorgi12} point out, the notion of duality
that relates $\SessionTypeT$ and $\SessionTypeS$ -- and that
\emph{globally} affects their structure -- boils down to a
\emph{local} swapping of uses in the topmost channel types in $\TypeT$
and $\TypeS$. This is a general property of the encoding that has
important practical implications: the hard work is carried over during
type reconstruction for the linear $\pi$-calculus, where there is no
duality to worry about; once such phase is completed, session types
can be obtained from linear channel types with little effort.

We have equipped the prototype implementation of the type
reconstruction algorithm with a flag that decodes linear channel types
into session types (the decoding procedure accounts for a handful of
lines of code).  In this way, the tool can be used for inferring the
communication protocol of processes encoded in the linear
$\pi$-calculus. Since the type reconstruction algorithm supports
infinite and disjoint sum types, both infinite protocols and protocols
with branches can be inferred. Examples of such processes, like for
instance the server for mathematical operations described
in~\cite{GayHole05}, are illustrated on the home page of the tool and
in its source archive.}
\eoe

\end{example}

\begin{example}
\label{ex:filter}
\newcommand{\filterc}{\mathtt{filter}}
In this example we motivate the requirement expressed in the rules
\refrule{t-new} and \refrule{i-new} imposing that the type of
restricted channels should have the same use in its input/output use
slots. To this aim, consider the process below
\[
  \bang\receive\filterc{\PairX\ChannelA\ChannelB}
  \receive\ChannelA{\PairX{n}\ChannelC}
  \If{
    n \geq 0
    \begin{lines}
  }{
    \new\ChannelD\ttparens{
      \send\ChannelB{\Pair{n}{\ChannelD_1}}
      \parop
      \send\filterc{\Pair\ChannelC{\ChannelD_2}}
    }
    \\
  }{
    \send\filterc{\Pair\ChannelC\ChannelB}
    \end{lines}
  }
\]
which filters numbers received from channel $\ChannelA$ and forwards
the non-negative ones on channel $\ChannelB$. Each number $n$ comes
along with a continuation channel $\ChannelC$ from which the next
number in the stream will be received. Symmetrically, any message sent
on $\ChannelB$ includes a continuation $\ChannelD$ on which the next
non-negative number will be sent. For convenience, we distinguish
$\ChannelD$ bound by $\mkkeyword{new}$ from the two rightmost
occurrences $\ChannelD_1$ and $\ChannelD_2$ of $\ChannelD$.

For this process the reconstruction algorithm infers the type
\begin{equation}
\label{eq:filter_right}
  \filterc :
  \tchan{
    \Type \times \tchan{\tint \times \Type}01
  }\omega\omega
\end{equation}
where $\Type$ is the type that satisfies the equality $\Type =
\tchan{\tint \times \Type}10$ meaning that $\ChannelD_1$ and
$\ChannelD_2$ are respectively assigned the types $\Type$ and
$\tchan{\tint \times \Type}01$ and overall $\ChannelD$ has type $\Type
+ \tchan{\tint \times \Type}01 = \tchan{\tint \times \Type}10 +
\tchan{\tint \times \Type}01 = \tchan{\tint \times \Type}11$.
The reason why $\ChannelD_2$ has type $\tchan{\tint \times \Type}01$, namely
that $\ChannelD_2$ is used for an output operation, is clear, since
$\ChannelD_2$ must have the same type as $\ChannelB$ and $\ChannelB$ is indeed
used for an output operation in the body of $\filterc$. However, in the whole
process there is no explicit evidence that $\ChannelD_1$ will be used for an
input operation, and the input use $1$ in its type $\Type = \tchan{\tint
  \times \Type}10$ is deduced ``by subtraction'', as we have discussed in the
informal overview at the beginning of Section~\ref{sec:solver}.

If we do \emph{not} impose the constraint that restricted (linear)
channel should have the same input/output use, we can find \revised{a
  more precise solution that determines for $\filterc$ the type}
\begin{equation}
\label{eq:filter_wrong}
\filterc :
  \tchan{
    \Type \times \tchan{\tint \times \TypeS}01
  }\omega\omega
\end{equation}
where $\TypeS$ is the type that satisfies the equality $\TypeS =
\tchan{\tint \times \TypeS}00$.
According to \eqref{eq:filter_wrong}, $\ChannelD_1$ is assigned the
type $\TypeS$ saying that no operation will ever be performed on it.
This phenomenon is a consequence of the fact that, when we apply the
type reconstruction algorithm on an isolated process, like $\filterc$
above, which is never invoked, the reconstruction algorithm has only a
partial view of the behavior of the process on the channel it
creates. For extruded channels like $\ChannelD$, in particular, the
algorithm is unable to infer any direct use.
We argue that the typing \eqref{eq:filter_wrong} renders $\filterc$ a
useless process from which it is not possible to receive any message,
unless $\filterc$ is typed along with the rest of the program that
invokes it. But this latter strategy prevents \emph{de facto} the
modular application of the reconstruction algorithm to the separate
constituents of a program.

The typing \eqref{eq:filter_right} is made possible by the completion
phase (Section~\ref{sec:solver}), which is an original feature of our
type reconstruction algorithm. The prototype implementation of the
algorithm provides a flag that disables the constraint on equal uses
in \refrule{i-new} allowing experimentation of the behavior of the
algorithm on examples like this one.
\eoe
\end{example}

\section{Concluding Remarks}
\label{sec:conclusion}

Previous works on the linear $\pi$-calculus either \revised{do not
  treat} composite
types~\cite{KobayashiPierceTurner99,IgarashiKobayashi00} or are based
on an interpretation of linearity that limits data sharing and
parallelism~\cite{Igarashi97,IgarashiKobayashi97}. \finalrevision{Type
  reconstruction for} recursive \revised{or, somewhat equivalently,
  infinite} types has also been neglected, despite the key role played
by these types for describing structured data (lists, trees, etc.) and
structured interactions~\cite{Dardha14}.
In this work we have extended the linear $\pi$-calculus with both
composite and \revised{infinite} types and have adopted a more relaxed
attitude towards linearity that fosters data sharing and parallelism
while maintaining the availability of a type reconstruction algorithm.
The extension is a very natural one, as witnessed by the fact that our
type system uses essentially the same rules of previous works, the
main novelty being a different type combination operator.  This small
change has nonetheless non-trivial consequences on the reconstruction
algorithm, which must reconcile the propagation of constraints across
composite types \revised{and} the impossibility to rely on plain type
unification: different occurrences of the same identifier may be
assigned different types and types \revised{may be infinite}.
Our extension also gives renewed relevance to types like
$\tchan\Type{0}{0}$. In previous works these types were admitted but
essentially useless: channels with such types could only be passed
around in messages without actually ever being used. That is, they
could be erased without affecting processes. In our type system, it is
the existence of these types that enables the sharing of structured
data (see the decomposition of $\TypeList$ into $\TypeEven$ and
$\TypeOdd$ in Section~\ref{sec:introduction}).

Binary sessions~\cite{Honda93,HondaVasconcelosKubo98} can be encoded
into the linear
$\pi$-calculus~\cite{Kobayashi02b,DardhaGiachinoSangiorgi12}. Thus, we
indirectly provide a complete reconstruction algorithm for
\revised{possibly infinite}, \finalrevision{higher-order,} binary
session types. As shown in~\cite{Mezzina08}, direct session type
reconstruction poses two major technical challenges: on the one hand,
the necessity to deal with dual types; on the other hand, the fact
that subtyping must be taken into account for that is the only way to
properly handle selections in conditionals. Interestingly, both
complications disappear when session types are encoded in the linear
$\pi$-calculus: duality simply turns into swapping the input/output
use annotations in channel types~\cite{DardhaGiachinoSangiorgi12},
whereas selections become outputs of variant data types which can be
dealt with using conventional techniques based on
unification~\cite{OCAML}.

To assess the feasibility of the approach, we have implemented the
type reconstruction algorithm in a tool for the static analysis of
$\pi$-calculus processes. Given that even simple processes generate
large constraint sets, the prototype has been invaluable for testing
the algorithm at work on non-trivial examples. The reconstruction
described in this article is only the first step for more advanced
forms of analysis, such as those for reasoning on deadlocks and
locks~\cite{Padovani14B}. We have extended the tool in such a way that
subsequent analyses can be plugged on top of the reconstruction
algorithm for linear channels~\cite{PadovaniChenTosatto15}.

Structural subtyping and polymorphism are two natural developments of
our work. The former has already been considered
in~\cite{IgarashiKobayashi97}, but it is necessary to understand how
it integrates with our notion of type combination and how it affects
constraint generation and resolution. Polymorphism makes sense for
unlimited channels only (there is little point in having polymorphic
linear channels, since they can only be used once
anyway). Nevertheless, support for polymorphism is not entirely
trivial, since some type variables may need to be restricted to
unlimited types. For example, the channel $\mathtt{first}$ in the
process
$\bang\receive{\mathtt{first}}{\PairX\varX\varY}\send\varY{\First[\varX]}$
would have type
$\forall\tvarA.\forall\tvarB.\unlimited\tvarB \Rightarrow
\tchan{(\tvarA \times \tvarB) \times \tchan\tvarA01}\omega0$.

\subsection*{Acknowledgements.}
The author is grateful to the anonymous reviewers whose numerous
questions, detailed comments and suggestions have significantly
contributed to improving both content and presentation of this
article.
The author is also grateful to Naoki Kobayashi for his comments on an
earlier version of the article.

\bibliographystyle{abbrv}
\bibliography{main}

\appendix

\section{Supplement to Section~\ref{sec:types}}
\label{sec:extra_types}

To prove Theorem~\ref{thm:sr} we need a series of standard auxiliary
results, including weakening (Lemma~\ref{lem:weak}) and substitution
(Lemma~\ref{lem:subst}) for both expressions and processes.

\begin{lemma}[weakening]
\label{lem:weak}
The following properties hold:
\begin{enumerate}
\item If $\wte\Env\Expression\Type$ and $\unlimited{\Env'}$ and $\Env
  + \Env'$ is defined, then $\wte{\Env + \Env'}{\Expression}{\Type}$.

\item If $\wtp\Env\Process$ and $\unlimited{\Env'}$ and $\Env \tand
  \Env'$ is defined, then $\wtp{\Env \tand \Env'}\Process$.
\end{enumerate}
\end{lemma}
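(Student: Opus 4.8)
The plan is to prove the two statements by induction on the structure of the given typing derivation, carrying out part~(1) first — it is self-contained, since the rules for expressions have only expression premises — and then part~(2), which appeals to part~(1) through the expression premises of \refrule{t-in}, \refrule{t-out} and \refrule{t-case}. Before starting the induction I would isolate two auxiliary facts. The first is that combination preserves unlimitedness: if $\unlimited\Env$, $\unlimited{\Env'}$ and $\Env + \Env'$ is defined, then $\unlimited{(\Env + \Env')}$; this reduces pointwise to the statement that $\unlimited\TypeT$, $\unlimited\TypeS$ and $\TypeT \mall \TypeS$ imply $\unlimited{(\TypeT + \TypeS)}$, which is a short coinductive argument using Definition~\ref{def:tand} and the observation that $\set{0,\omega}$ is closed under use combination. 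The second is the elementary algebra of $+$ on type environments — commutativity, associativity, and the fact that if $\Env_1 + \Env_2$ and $(\Env_1 + \Env_2) + \Env_3$ are defined then so are $\Env_2 + \Env_3$ and $\Env_1 + (\Env_2 + \Env_3)$, with equal results — all of which descend from associativity of $+$ on types and from $\mall$ being an equivalence relation (Definition~\ref{def:mall}).

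With these facts in hand the case analysis is routine. In the leaf cases \refrule{t-int}, \refrule{t-name} and \refrule{t-idle} the environment of the conclusion is (up to the single association introduced by \refrule{t-name}, which $\Env'$ leaves untouched) unlimited, so by the first auxiliary fact $\Env + \Env'$ is again of the shape demanded by the rule and the rule reapplies. In the one-premise cases \refrule{t-inl}, \refrule{t-inr}, \refrule{t-fst}, \refrule{t-snd}, \refrule{t-rep} and \refrule{t-new} the environment of the conclusion coincides with that of the premise — modulo, for \refrule{t-new}, the bound channel $\Channel$, which we may $\alpha$-rename so as to keep it outside $\dom(\Env')$; we apply the induction hypothesis to the premise against $\Env'$ and reapply the rule, invoking the first auxiliary fact once more for the $\unlimited{}$ side condition of \refrule{t-rep}. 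In the two-premise cases \refrule{t-pair}, \refrule{t-in}, \refrule{t-out}, \refrule{t-par} and \refrule{t-case}, whose conclusion carries an environment of the form $\Env_1 + \Env_2$, we use associativity to rewrite $(\Env_1 + \Env_2) + \Env'$ as $\Env_1 + (\Env_2 + \Env')$, apply the induction hypothesis to the second premise with $\Env'$, and reapply the rule with $\Env_1$ and $\Env_2 + \Env'$; for \refrule{t-in} and \refrule{t-case} we first $\alpha$-rename the bound names out of $\dom(\Env')$.

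I expect the only real work to be bookkeeping rather than insight: at every inductive step one must check that the environment combinations fed to the induction hypothesis and to the concluding rule are defined, which is precisely where the second auxiliary fact is used, and one must keep the bound names of \refrule{t-new}, \refrule{t-in} and \refrule{t-case} disjoint from $\dom(\Env')$ to avoid capture (harmless, since processes are taken up to renaming of bound names). The main — though entirely standard — obstacle is therefore stating and proving the unlimitedness-closure and re-association sublemmas cleanly and then threading the definedness side-conditions through the induction; no individual case needs anything beyond unfolding the relevant rule together with the definitions of $+$ and $\unlimited{}$.
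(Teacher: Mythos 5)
Your proof follows the same route as the paper's, which dispatches both items with ``a standard induction on the typing derivation'' plus the observation that bound names may be $\alpha$-renamed so as to avoid $\dom(\Env')$ --- a point you also make. The two auxiliary facts you isolate (closure of unlimitedness under $+$, and re-association of environment combination) are exactly the bookkeeping that ``standard'' hides, and your treatment of the one- and two-premise cases is right: push $\Env'$ into one premise, invoke the induction hypothesis, reapply the rule.

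The one place where your sketch is not airtight is the parenthetical in the \refrule{t-name} case: you assert that $\Env'$ ``leaves untouched'' the association $\Name:\Type$, but nothing in the hypotheses forbids $\Name\in\dom(\Env')$. If $\Env'(\Name)=\TypeS$ with $\unlimited\TypeS$ but $\Type+\TypeS\neq\Type$ (for instance $\Type=\tchan\tint01$ and $\TypeS=\tchan\tint0\omega$, which is unlimited), then $(\Env+\Env')(\Name)=\Type+\TypeS$, and re-applying \refrule{t-name} yields $\wte{\Env+\Env'}{\Name}{\Type+\TypeS}$ rather than the required $\wte{\Env+\Env'}{\Name}{\Type}$; since \refrule{t-name} is the only rule for names and there is no subsumption, the stated conclusion is not derivable in that situation. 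So the induction must either carry the extra invariant that $\Env'$ assigns neutral (all-zero-use) types to the names relevant to the conclusion, or item~(1) must be read with $\dom(\Env')$ suitably restricted. To be fair, the paper's own two-line proof is equally silent here; but this is the only step of the induction that can actually fail, so it is precisely the step that should not be bracketed away as routine.
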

\begin{proof}
  Both items are proved by a standard induction on the typing
  derivation. In case~(2) we assume, without loss of generality, that
  $\bn(\Process) \cap \dom(\Env) = \EmptyEnv$ (recall that we identify
  processes modulo renaming of bound names).
\end{proof}

\begin{lemma}[substitution]
\label{lem:subst}
Let $\wte{\Env_1}{\Value}{\Type}$. The following properties hold:
\begin{enumerate}
\item If $\wte{\Env_2, \var : \Type}{\Expression}{\TypeS}$ and $\Env_1
  + \Env_2$ is defined, then $\wte{\Env_1 +
    \Env_2}{\Expression\subst\Value\var}{\TypeS}$.

\item If $\wtp{\Env_2, \var : \Type}{\Process}$ and $\Env_1 \tand
  \Env_2$ is defined, then $\wtp{\Env_1 \tand
    \Env_2}{\Process\subst\Value\var}$.
\end{enumerate}
\end{lemma}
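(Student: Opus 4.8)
The plan is to prove items (1) and (2) simultaneously, by induction on the derivation of $\wte{\Env_2,\var:\Type}{\Expression}{\TypeS}$ for item (1) and of $\wtp{\Env_2,\var:\Type}{\Process}$ for item (2), the expression case being invoked inside the process case whenever a process rule has an expression in a premise (namely \refrule{t-in}, \refrule{t-out}, \refrule{t-case}). Besides weakening (Lemma~\ref{lem:weak}) I would use two routine auxiliary facts: (i) if $\wte\Env\Expression\Type$ (resp.\ $\wtp\Env\Process$) then $\fn(\Expression)\subseteq\dom(\Env)$ (resp.\ $\fn(\Process)\subseteq\dom(\Env)$), so that a substitution is vacuous on a subterm whose environment does not mention $\var$; and (ii) if $\wte\Env\Value{\Type'}$ with $\unlimited{\Type'}$ then $\unlimited\Env$ (an easy induction on the structure of $\Value$), needed for \refrule{t-idle} and \refrule{t-rep} where the environment must stay unlimited after substitution, and for the sub-case $\Name\ne\var$ of \refrule{t-name} where $\var:\Type$ necessarily sits in the unlimited part of the environment. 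I would also establish a companion \emph{value-splitting} property: if $\wte\Env\Value{\Type_1 \tand \Type_2}$ then $\Env=\Env_1\tand\Env_2$ for some $\Env_1,\Env_2$ with $\wte{\Env_i}\Value{\Type_i}$; this is shown by induction on $\Value$, inverting the last typing rule and splitting the (now smaller) type componentwise.

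For the base cases, \refrule{t-int} and \refrule{t-idle} are immediate since $\Env_2$ is unlimited and $\Env_1$ is unlimited by (ii), hence so is $\Env_1\tand\Env_2$. The interesting base case is \refrule{t-name}: if $\Expression=\var$ then $\TypeS=\Type$, $\Env_2$ is unlimited and $\Expression\subst\Value\var=\Value$, so $\wte{\Env_1\tand\Env_2}\Value\Type$ follows from $\wte{\Env_1}\Value\Type$ by weakening; if $\Expression=\Name\ne\var$ then $\Expression\subst\Value\var=\Name$, $\Type$ is unlimited (so $\Env_1$ is unlimited by (ii)), and one re-derives $\wte{\Env_1\tand\Env_2}\Name\TypeS$ from the residual unlimited environments, checking that the type assigned to $\Name$ in $\Env_1\tand\Env_2$ is the appropriate combination. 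For the inductive step, the rules that combine environments with $\tand$ --- \refrule{t-pair}, \refrule{t-in}, \refrule{t-out}, \refrule{t-par}, \refrule{t-case} --- are all handled alike. Writing the premise environments as $\EnvX_1$ and $\EnvX_2$ with $\EnvX_1\tand\EnvX_2=\Env_2,\var:\Type$, I distinguish whether $\var$ occurs in $\dom(\EnvX_1)$ only, in $\dom(\EnvX_2)$ only, or in both: in the first two cases the induction hypothesis is applied to the side containing $\var$ and the substitution is vacuous on the other side by (i); in the third case $\Type=\EnvX_1(\var)\tand\EnvX_2(\var)$ and value-splitting provides the decomposition of $\Env_1$ needed to feed the induction hypothesis to each side. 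Each sub-case is then reassembled using commutativity and associativity of $\tand$ on type environments (which reduce to the corresponding properties of $+$ on uses) together with the fact that $\tand$ is defined on any regrouping once it is defined on one. The remaining rules \refrule{t-inl}, \refrule{t-inr}, \refrule{t-fst}, \refrule{t-snd}, \refrule{t-rep}, \refrule{t-new} are direct: the substitution is pushed into the unique premise, invoking the induction hypothesis and, for \refrule{t-rep}, fact (ii) to preserve unlimitedness; for \refrule{t-new} one first renames the bound channel $\Channel$ so that $\Channel\notin\fn(\Value)$, making the substitution commute with the restriction.

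The main obstacle is the combination bookkeeping in the $\tand$-splitting cases, together with the companion value-splitting lemma: one must verify that the environments produced by the induction hypotheses recombine to \emph{exactly} $\Env_1\tand\Env_2$ and that all intermediate combinations are defined, which is precisely where the elementary algebra of $+$ on uses (commutativity, associativity, and the shape-preservation underlying $\mall$) does the real work. The \refrule{t-name} case is the other delicate point, since there the type of the substituted term is pinned down exactly and one has to check that the uses recorded in $\Env_1$ are absorbed into the residual environment without perturbing $\TypeS$.
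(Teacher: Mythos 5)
Your proposal is correct and follows essentially the same route as the paper, which simply declares the proof ``standard'' apart from one non-standard ingredient --- namely that $\unlimited\Type$ implies $\unlimited{\Env_1}$ whenever $\wte{\Env_1}{\Value}{\Type}$, which is exactly your auxiliary fact (ii). The extra machinery you make explicit (the value-splitting property for the case where $\var$ occurs in both halves of a $\tand$-split, and the vacuous-substitution argument via free names) is precisely the standard bookkeeping the paper leaves implicit, so there is no substantive divergence.
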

\begin{proof}
  The proofs are standard, except for the following property of the type
  system: $\unlimited\Type$ implies $\unlimited{\Env_1}$, which can be easily
  proved by induction on the derivation of $\wte{\Env_1}{\Value}{\Type}$.
\end{proof}

Next is type preservation under structural pre-congruence.

\begin{lemma}
\label{lem:struct}
If $\wtp\Env\ProcessP$ and $\ProcessP \sle \ProcessQ$, then
$\wtp\Env\ProcessQ$.
\end{lemma}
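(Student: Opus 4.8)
The plan is to proceed by induction on the derivation of $\ProcessP \sle \ProcessQ$, recalling that $\sle$ is the least relation closed under the axioms of Table~\ref{tab:congruence}, reflexivity, transitivity, and the process-forming operations. The inductive cases (reflexivity, transitivity, and closure under each context) are routine: for instance, if $\ProcessP \parop \ProcessR \sle \ProcessQ \parop \ProcessR$ follows from $\ProcessP \sle \ProcessQ$, then from $\wtp\Env{\ProcessP \parop \ProcessR}$ we get $\Env = \Env_1 \combine \Env_2$ with $\wtp{\Env_1}\ProcessP$ and $\wtp{\Env_2}\ProcessR$ by \refrule{t-par}, the induction hypothesis gives $\wtp{\Env_1}\ProcessQ$, and \refrule{t-par} concludes $\wtp\Env{\ProcessQ \parop \ProcessR}$; the remaining contexts (replication via \refrule{t-rep}, restriction via \refrule{t-new}, the prefixes via \refrule{t-in}/\refrule{t-out}, the branches of $\mkkeyword{case}$ via \refrule{t-case}) and transitivity are handled the same way. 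So the real work is confined to the axioms.

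For these I would first record a few elementary facts about environment combination: $\combine$ is commutative and associative (because use and type combination are, by \eqref{eq:comb} and Definition~\ref{def:tand}), and $\unlimited\Env$ implies $\Env \combine \Env = \Env$ (immediate from Definition~\ref{def:un} applied pointwise through \eqref{eq:combine}). I would also invoke weakening (Lemma~\ref{lem:weak}) and the convention that bound names may be renamed so as to be disjoint from any ambient environment. With these in hand the symmetric axioms are easy: \refrule{s-par 2} and \refrule{s-par 3} follow from \refrule{t-par} together with commutativity/associativity of $\combine$; \refrule{s-res 1} is a reordering of two associations of the shape $\Channel : \tchan\Type\Use\Use$ introduced by two nested applications of \refrule{t-new}; for \refrule{s-par 1}, the direction $\wtp\Env{\idle \parop \Process} \Rightarrow \wtp\Env\Process$ uses that \refrule{t-idle} forces the $\idle$-component to be typed in an unlimited environment and then weakening, while the converse types $\idle$ in $\EmptyEnv$ and combines; for \refrule{s-res 2}, after renaming $\ChannelA$ to be fresh for the environment typing $\ProcessQ$ (legitimate since $\ChannelA \notin \fn(\ProcessQ)$), the two directions amount to a rebracketing of \refrule{t-par} and \refrule{t-new}, using that $\ChannelA$ does not occur in the environment of $\ProcessQ$.

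The key case — and the only place where the asymmetry of $\sle$ matters — is \refrule{s-rep}, where we must show $\wtp\Env{\bang\Process} \Rightarrow \wtp\Env{\bang\Process \parop \Process}$; note we are \emph{not} asked for the converse, which is exactly what would fail (cf.\ Remark~\ref{rem:sle}). Here $\wtp\Env{\bang\Process}$ can only come from \refrule{t-rep}, whose premises yield $\wtp\Env\Process$ and $\unlimited\Env$; since $\unlimited\Env$ we have $\Env \combine \Env = \Env$, so \refrule{t-par} applied to $\wtp\Env{\bang\Process}$ and $\wtp\Env\Process$ gives $\wtp{\Env \combine \Env}{\bang\Process \parop \Process} = \wtp\Env{\bang\Process \parop \Process}$. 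I expect this to be the only subtle point; everything else is bookkeeping with $\combine$, weakening, and bound-name renaming.
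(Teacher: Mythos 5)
Your proposal is correct and follows essentially the same route as the paper: the paper's proof likewise dismisses all cases except \refrule{s-rep} as standard, and for that case argues exactly as you do, extracting $\wtp\Env\Process$ and $\unlimited\Env$ from \refrule{t-rep}, using $\Env = \Env + \Env$, and concluding with \refrule{t-par}. Your additional bookkeeping for the symmetric axioms (commutativity/associativity of $\combine$, weakening for \refrule{s-par 1}, renaming for \refrule{s-res 2}) is sound and merely spells out what the paper leaves implicit.
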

\begin{proof}
  We only show the case in which a replicated process is expanded.
  Assume $\ProcessP = \bang\ProcessP' \sle \bang\ProcessP' \parop
  \ProcessP' = \ProcessQ$.
  From the hypothesis $\wtp\Env\ProcessP$ and \refrule{t-rep} we
  deduce $\wtp\Env{\ProcessP'}$ and $\unlimited\Env$.
  By definition of unlimited environment (see Definition~\ref{def:un})
  we have $\Env = \Env + \Env$.
  We conclude $\wtp\Env\ProcessQ$ with an application of
  \refrule{t-par}.
\end{proof}

\begin{lemma}
\label{lem:env}
If $\Env \lred[\Label] \Env'$ and $\Env \tand \Env''$ is defined, then
$\Env \tand \Env'' \lred[\Label] \Env' \tand \Env''$.
\end{lemma}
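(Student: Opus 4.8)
The plan is to prove the lemma by a case analysis on the reduction $\Env \lred[\Label] \Env'$, using that the reduction relation on type environments is, by definition, the least relation containing all pairs of the form $\Env_0 \lred[\tau] \Env_0$ and of the form $\Env_0 \tand (\Channel : \tchan\Type11) \lred[\Channel] \Env_0$. Since there are no closure conditions beyond these two schemas, every pair in the relation is an instance of one of them: if $\Label = \tau$ the pair must come from the first schema, and if $\Label$ is a channel $\Channel$ it must come from the second, so in each case the shape of $\Env$ and $\Env'$ is completely forced.

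In the case $\Label = \tau$ we have $\Env' = \Env$, hence $\Env' \tand \Env'' = \Env \tand \Env''$, and the required $\Env \tand \Env'' \lred[\tau] \Env \tand \Env''$ is again an instance of the first schema. In the case $\Label = \Channel$ there are $\Env_1$ and $\Type$ with $\Env = \Env_1 \tand (\Channel : \tchan\Type11)$ and $\Env' = \Env_1$. By hypothesis $(\Env_1 \tand (\Channel : \tchan\Type11)) \tand \Env''$ is defined; appealing to the commutativity and associativity of environment combination — which descend component-wise from the corresponding (partial) laws for $+$ on uses, \eqref{eq:comb}, and on types, Definition~\ref{def:tand} — this environment equals $(\Env_1 \tand \Env'') \tand (\Channel : \tchan\Type11)$ and is therefore also defined. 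Instantiating the second schema with $\Env_1 \tand \Env''$ in place of $\Env_0$ then gives $(\Env_1 \tand \Env'') \tand (\Channel : \tchan\Type11) \lred[\Channel] \Env_1 \tand \Env''$, i.e.\ $\Env \tand \Env'' \lred[\Channel] \Env' \tand \Env''$, which is what we want.

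The only delicate point is the rearrangement in the channel case: since $+$ on types — and hence on environments, via \eqref{eq:combine} — is a genuinely partial operation, one has to make sure that moving $\Channel : \tchan\Type11$ past $\Env''$ preserves definedness, in particular when $\Channel$ already occurs in $\dom(\Env_1) \cup \dom(\Env'')$, in which case the combination at $\Channel$ is computed pointwise on the types involved. This is exactly what associativity and commutativity of the partial operation $+$ deliver, so nothing beyond invoking those properties is needed, and I do not expect any genuine obstacle. In essence the lemma states that a ``frame'' environment $\Env''$ can be carried unchanged through a reduction step on type environments, which is precisely what makes it serviceable in the \refrule{r-par} and \refrule{r-new 1}/\refrule{r-new 2} cases of the proof of Theorem~\ref{thm:sr}.
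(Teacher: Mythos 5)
Your proof is correct and is simply the fully spelled-out version of what the paper dismisses as an ``easy consequence of the definition of $\lred[\Label]$ on type environments'': the $\tau$ case is immediate, and the channel case reduces to commutativity and associativity of the partial combination $+$, which you rightly identify as the only point needing care.
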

\begin{proof}
  Easy consequences of the definition of $\lred[\Label]$ on type
  environments.
\end{proof}

\revised{We conclude with type preservation for expressions and
  subject reduction for processes.}

\begin{lemma}
\label{lem:sr_expr}
Let $\wte\Env\Expression\Type$ and $\Expression \eval \Value$.  Then
$\wte\Env\Value\Type$.
\end{lemma}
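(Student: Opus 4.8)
The plan is to prove the statement by induction on the derivation of $\Expression \eval \Value$, with a case analysis on the last rule applied. The base cases \refrule{e-int} and \refrule{e-chan} are immediate, for there $\Expression = \Value$ and $\wte\Env\Value\Type$ holds by hypothesis. In the ``congruence'' cases \refrule{e-pair}, \refrule{e-inl}, and \refrule{e-inr}, the typing of the compound expression can only have been derived by the correspondingly shaped rule (\refrule{t-pair}, \refrule{t-inl}, and \refrule{t-inr} respectively, these being the only expression rules with a conclusion of the right form); I would invert that last typing step to expose the typings of the immediate sub-expressions, apply the induction hypothesis to the corresponding evaluation sub-derivations, and re-apply the same typing rule. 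For instance, in case \refrule{e-pair} we have $\Pair{\Expression_1}{\Expression_2} \eval \Pair{\Value_1}{\Value_2}$ with $\Expression_i \eval \Value_i$; inverting \refrule{t-pair} gives $\Env = \Env_1 \tand \Env_2$, $\Type = \TypeT_1 \times \TypeT_2$, and $\wte{\Env_i}{\Expression_i}{\TypeT_i}$, so the induction hypothesis yields $\wte{\Env_i}{\Value_i}{\TypeT_i}$ and \refrule{t-pair} concludes $\wte{\Env_1 \tand \Env_2}{\Pair{\Value_1}{\Value_2}}{\TypeT_1 \times \TypeT_2}$. The injection cases are analogous and even simpler.

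The cases that require a little more care are the projections \refrule{e-fst} and \refrule{e-snd}. Consider \refrule{e-fst}: the expression of the statement has the form $\First[\Expression]$, it evaluates to $\ValueV$ because $\Expression \eval \Pair\ValueV\ValueW$, and its typing $\wte\Env{\First[\Expression]}\Type$ must come from \refrule{t-fst}, so $\wte\Env\Expression{\Type \times \TypeS}$ for some $\TypeS$ with $\unlimited\TypeS$. The induction hypothesis applied to the sub-derivation $\Expression \eval \Pair\ValueV\ValueW$ gives $\wte\Env{\Pair\ValueV\ValueW}{\Type \times \TypeS}$, which can only have been obtained by \refrule{t-pair}; hence $\Env = \Env_1 \tand \Env_2$ with $\wte{\Env_1}{\ValueV}{\Type}$ and $\wte{\Env_2}{\ValueW}{\TypeS}$, and $\ValueW$ is a value, being a component of the value $\Pair\ValueV\ValueW$. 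Since $\unlimited\TypeS$ and $\ValueW$ is a value, the auxiliary property observed in the proof of Lemma~\ref{lem:subst} (a value of unlimited type is typed in an unlimited environment) gives $\unlimited{\Env_2}$. As $\Env_1 \tand \Env_2 = \Env$ is defined, weakening (Lemma~\ref{lem:weak}(1)) lifts $\wte{\Env_1}{\ValueV}{\Type}$ to $\wte{\Env_1 \tand \Env_2}{\ValueV}{\Type}$, that is $\wte\Env\ValueV\Type$, as required. The case \refrule{e-snd} is symmetric, this time relying on the unlimitedness of the discarded first component of the pair.

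I expect the projection cases to be the main---and essentially the only---obstacle, the delicate point being the environment bookkeeping: one must observe that the discarded component is a value (so that the ``unlimited type implies unlimited environment'' property applies) and that the relevant environment combination is defined (so that weakening is applicable). Everything else reduces to routine inversion of typing derivations followed by re-application of the same rule; no case involves channel restriction, substitution, or structural congruence, since the judgment concerns expressions rather than processes.
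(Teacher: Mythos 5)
Your proof is correct and follows exactly the route the paper takes: the paper's proof is the one-liner ``by induction on $\Expression \eval \Value$ using the hypothesis that $\Expression$ is well typed,'' and your case analysis is a faithful expansion of it. In particular, your treatment of the projection cases---inverting \refrule{t-pair} on the result of the induction hypothesis, invoking the ``unlimited type implies unlimited environment'' property from the proof of Lemma~\ref{lem:subst} on the discarded component, and closing with Lemma~\ref{lem:weak}(1)---is precisely the bookkeeping the paper leaves implicit.
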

\begin{proof}
  By induction on $\Expression \eval \Value$ using the hypothesis that
  $\Expression$ is well typed.
\end{proof}

\begin{reptheorem}{thm:sr}
  Let $\wtp\Env\ProcessP$ and $\ProcessP \lred[\Label]
  \ProcessQ$. Then $\wtp{\Env'}\ProcessQ$ for some $\Env'$ such that
  $\Env \lred[\Label] \Env'$.
\end{reptheorem}
\begin{proof}
  By induction on the derivation of $\ProcessP \lred[\Label]
  \ProcessQ$ and by cases on the last rule applied. We only show a few
  interesting cases; the others are either similar or simpler.

\rproofcase{r-comm}
Then $\ProcessP = \send{\ExpressionE_1}\ExpressionF \parop
\receive{\ExpressionE_2}\var\ProcessR$ and $\ExpressionE_i \eval
\Channel$ for every $i=1,2$ and $\ExpressionF \eval \Value$ and
$\Label = \Channel$ and $\ProcessQ = \ProcessR\subst\Value\var$.
From \refrule{t-par} we deduce $\Env = \Env_1 \tand \Env_2$ where
$\wtp{\Env_1}{\send{\ExpressionE_1}\ExpressionF}$ and
$\wtp{\Env_2}{\receive{\ExpressionE_2}\var\ProcessR}$.
From \refrule{t-out} we deduce $\Env_1 = \Env_{11} + \Env_{12}$ and
$\wte{\Env_{11}}{\ExpressionE_1}{\tchan\TypeT{2\Use_1}{1+\Use_2}}$ and
$\wte{\Env_{12}}{\ExpressionF}{\TypeT}$.
From \refrule{t-in} we deduce $\Env_2 = \Env_{21} + \Env_{22}$ and
$\wte{\Env_{21}}{\ExpressionE_2}{\tchan\TypeS{1+\Use_3}{2\Use_4}}$ and
$\wtp{\Env_{22}, \var : \TypeS}{\ProcessR}$.
From Lemma~\ref{lem:sr_expr} we have
$\wte{\Env_{11}}\Channel{\tchan\TypeT{2\Use_1}{1+\Use_2}}$ and
$\wte{\Env_{12}}\Value\Type$ and
$\wte{\Env_{21}}\Channel{\tchan\TypeS{1+\Use_3}{2\Use_4}}$.
Also, since $\Env_{11} + \Env_{21}$ is defined, it must be the case
that $\TypeT = \TypeS$.
Note that $1 + \Use_2 = 1 + 2\Use_2$ and $1 + \Use_3 = 1 + 2\Use_3$.
Hence, from \refrule{t-name} we deduce that $\Env_{11} = \Env_{11}',
\Channel : \tchan\Type{2\Use_1}{1+\Use_2} = (\Env_{11}', \Channel :
\tchan\Type{2\Use_1}{2\Use_2}) + \Channel : \tchan\Type01$ and
$\Env_{21} = \Env_{21}', \Channel : \tchan\Type{1+\Use_3}{2\Use_4} =
(\Env_{21}', \Channel : \tchan\Type{2\Use_3}{2\Use_4}) + \Channel :
\tchan\Type10$ for some unlimited $\Env_{11}'$ and $\Env_{21}'$.
Let $\Env_{11}'' \eqdef \Env_{11}', \Channel :
\tchan\Type{2\Use_1}{2\Use_2}$ and $\Env_{12}'' \eqdef \Env_{21}',
\Channel : \tchan\Type{2\Use_3}{2\Use_4}$ and observe that
$\Env_{11}''$ and $\Env_{21}''$ are also unlimited.
From Lemma~\ref{lem:subst} we deduce $\wtp{\Env_{12} +
  \Env_{22}}{\ProcessR\subst\Value\var}$.
Take $\Env' = \Env_{11}'' + \Env_{12} + \Env_{21}'' + \Env_{22}$.
From Lemma~\ref{lem:weak} we deduce $\wtp{\Env'}\ProcessQ$ and we
conclude by observing that $\Env \lred[\Channel] \Env'$ thanks to
Lemma~\ref{lem:env}.

\rproofcase{r-case}
Then $\ProcessP = \CaseShort\Expression{i}\var\Process$ and
$\Expression \eval \Inject{k}\Value$ for some $k\in\set{\Left,\Right}$
and $\Label = \tau$ and $\ProcessQ = \Process_k\subst\Value{\var_k}$.
From \refrule{t-case} we deduce that $\Env = \Env_1 + \Env_2$ and
$\wte{\Env_1}{\Expression}{\Type_{\Left} \oplus \Type_{\Right}}$ and
$\wtp{\Env_2, \var : \Type_k}{\Process_k}$.
From Lemma~\ref{lem:sr_expr} and either \refrule{t-inl} or
\refrule{t-inr} we deduce $\wte{\Env_1}{\Value}{\Type_k}$.
We conclude $\wtp{\Env}{\ProcessP_k\subst\Value{\var_k}}$ by
Lemma~\ref{lem:subst}.

\rproofcase{r-par}
Then $\ProcessP = \ProcessP_1 \parop \ProcessP_2$ and $\ProcessP_1
\lred[\Label] \ProcessP_1'$ and $\ProcessQ = \ProcessP_1' \parop
\ProcessP_2$.
From \refrule{t-par} we deduce $\Env = \Env_1 \tand \Env_2$ and
$\wtp{\Env_i}{\Process_i}$.
By induction hypothesis we deduce $\wtp{\Env_1'}{\Process_1'}$ for
some $\Env_1'$ such that $\Env_1 \lred[\Label] \Env_1'$.
By Proposition~\ref{lem:env} we deduce that $\Env \lred[\Label]
\Env_1' \tand \Env_2$.
We conclude $\wtp{\Env'}{\ProcessQ}$ by taking $\Env' = \Env_1' \tand
\Env_2$.
\end{proof}

\section{Supplement to Section~\ref{sec:generator}}
\label{sec:extra_generator}

First of all we prove two technical lemmas that explain the relationship
between the operators $\combineop$ and $\mergeop$ used by the constraint
generation rules (Table~\ref{tab:generation}) and type environment combination
$+$ and equality used in the type rules (Table~\ref{tab:type_system}).

\begin{lemma}
  \label{lem:combine}
  If $\combinenv{\EnvX_1}{\EnvX_2}{\EnvX}{\Constraints}$ and $\Solution$ is a
  solution for $\Constraints$ covering $\EnvX$, then $\Solution\EnvX =
  \Solution\EnvX_1 \combine \Solution\EnvX_2$.
\end{lemma}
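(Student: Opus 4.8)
The plan is to prove the statement by induction on the derivation of $\combinenv{\EnvX_1}{\EnvX_2}{\EnvX}{\Constraints}$, which by Table~\ref{tab:env} is built using exactly the two rules \refrule{c-env 1} and \refrule{c-env 2}, so the proof splits into one base case and one inductive case. Before diving in, I would record two facts used throughout: (i) application of $\Solution$ to an environment is pointwise, so $\dom(\Solution\EnvX_i) = \dom(\EnvX_i)$ and disjointness of domains is preserved; and (ii) since $\Solution$ is a solution for $\Constraints$ (hence covers every $\TypeExpr \in \expr(\Constraints)$, by Definition~\ref{def:solution}) and also covers $\EnvX$, all of the applications $\Solution\EnvX_1$, $\Solution\EnvX_2$, $\Solution\TypeExprT$, $\Solution\TypeExprS$ appearing below are well defined, and the relevant sub-environments and constraint subsets are again covered/solved by $\Solution$.

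For the base case \refrule{c-env 1} we have $\dom(\EnvX_1) \cap \dom(\EnvX_2) = \emptyset$, $\EnvX = \EnvX_1, \EnvX_2$, and $\Constraints = \emptyset$; then $\Solution\EnvX = \Solution\EnvX_1, \Solution\EnvX_2$, and by fact (i) the first clause of \eqref{eq:combine} applies and yields $\Solution\EnvX_1 \combine \Solution\EnvX_2 = \Solution\EnvX_1, \Solution\EnvX_2 = \Solution\EnvX$. For the inductive case \refrule{c-env 2} we have $\EnvX_1 = \EnvX_1', \Name : \TypeExprT$, $\EnvX_2 = \EnvX_2', \Name : \TypeExprS$, $\EnvX = \EnvX', \Name : \tvar$ (with $\tvar$ fresh), $\combinenv{\EnvX_1'}{\EnvX_2'}{\EnvX'}{\Constraints'}$, and $\Constraints = \Constraints' \cup \set{\tvar \con\meq \TypeExprT + \TypeExprS}$. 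The induction hypothesis, applied to the sub-derivation (with the same $\Solution$, which is still a solution for $\Constraints' \subseteq \Constraints$ covering $\EnvX' \subseteq \EnvX$), gives $\Solution\EnvX' = \Solution\EnvX_1' \combine \Solution\EnvX_2'$, so in particular this combination is defined; and the satisfaction of $\tvar \con\meq \TypeExprT + \TypeExprS$ gives $\Solution\tvar = \Solution\TypeExprT + \Solution\TypeExprS$, so this type combination is defined too. Since $\Name$ is a name shared by $\Solution\EnvX_1$ and $\Solution\EnvX_2$ that does not occur in $\dom(\Solution\EnvX_1') \cup \dom(\Solution\EnvX_2')$, the second clause of \eqref{eq:combine} applies with $\Name$ as the name being extracted and gives $\Solution\EnvX_1 \combine \Solution\EnvX_2 = (\Solution\EnvX_1' \combine \Solution\EnvX_2'), \Name : \Solution\TypeExprT + \Solution\TypeExprS = \Solution\EnvX', \Name : \Solution\tvar = \Solution\EnvX$.

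The proof is essentially bookkeeping and I do not expect a serious obstacle; the one spot calling for care is reconciling the inductive case with the way $\combine$ in \eqref{eq:combine} is itself defined one shared name at a time. There I would make explicit that, because type environments are identified modulo the order of their associations, we may freely choose $\Name$ as the name to extract in the second clause of \eqref{eq:combine}, and that the partiality of $\combine$ causes no trouble: the induction hypothesis supplies definedness of the environment combination on the $\EnvX'$-part, while the satisfied combination constraint supplies definedness of the type combination at $\Name$. I would also keep the covering side conditions visibly tracked, so that $\Solution\EnvX_1$ and $\Solution\EnvX_2$ are genuine type environments (ranging over types), which is what makes $\Solution\EnvX_1 \combine \Solution\EnvX_2$ meaningful in the first place.
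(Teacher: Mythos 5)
Your proposal is correct and follows essentially the same route as the paper's own proof: induction on the derivation of $\combinenv{\EnvX_1}{\EnvX_2}{\EnvX}{\Constraints}$ with one case per rule of Table~\ref{tab:env}, using the first clause of \eqref{eq:combine} in the disjoint-domain case and, in the shared-name case, combining the induction hypothesis with the fact that $\Solution$ satisfies $\tvar \ceq \TypeExprT + \TypeExprS$ to conclude $\Solution\EnvX = \Solution\EnvX', \Name:\Solution\TypeExprT + \Solution\TypeExprS = \Solution\EnvX_1 \combine \Solution\EnvX_2$. The extra bookkeeping you make explicit (coverage, definedness, identification of environments up to reordering) is left implicit in the paper but is harmless and accurate.
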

\begin{proof}
  By induction on the derivation of
  $\combinenv{\EnvX_1}{\EnvX_2}{\EnvX}{\Constraints}$ and by cases on
  the last rule applied. We have two cases:
  \proofcase{$\dom(\EnvX_1) \cap \dom(\EnvX_2) = \emptyset$}
  Then $\EnvX = \EnvX_1, \EnvX_2$ and we conclude $\Solution\EnvX =
  \Solution\EnvX_1,\Solution\EnvX_2 = \Solution\EnvX_1 \combine
  \Solution\EnvX_2$.

  \proofcase{$\EnvX_1 = \EnvX_1', \Name : \TypeExprT$ and $\EnvX_2 =
    \EnvX_2', \Name : \TypeExprS$}
  Then $\combinenv{\EnvX_1'}{\EnvX_2'}{\EnvX'}{\Constraints'}$ and
  $\EnvX = \EnvX', \Name : \tvar$ and $\Constraints = \Constraints'
  \cup \set{ \tvar \ceq \TypeExprT + \TypeExprS }$ for some $\tvar$.
  Since $\Solution$ is a solution for $\Constraints$, we deduce
  $\Solution(\tvar) = \Solution\TypeExprT + \Solution\TypeExprS$.
  By induction hypothesis we deduce $\Solution\EnvX' =
  \Solution\EnvX_1' \combine \Solution\EnvX_2'$.
  We conclude $\Solution\EnvX = \Solution\EnvX',
  \Name:\Solution(\tvar) = \Solution\EnvX', \Name:\Solution\TypeExprT
  + \Solution\TypeExprS = (\Solution\EnvX_1' \combine
  \Solution\EnvX_2'), \Name:\Solution\TypeExprT + \Solution\TypeExprS
  = \Solution\EnvX_1 \combine \Solution\EnvX_2$.
\end{proof}

\begin{lemma}
\label{lem:merge}
If $\mergenv{\EnvX_1}{\EnvX_2}{\EnvX}{\Constraints}$ and $\Solution$ is a
solution for $\Constraints$ covering $\EnvX$, then $\Solution\EnvX =
\Solution\EnvX_1 = \Solution\EnvX_2$.
\end{lemma}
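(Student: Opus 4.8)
The plan is to prove Lemma~\ref{lem:merge} by induction on the derivation of $\mergenv{\EnvX_1}{\EnvX_2}{\EnvX}{\Constraints}$, in close parallel with the proof of Lemma~\ref{lem:combine}. Since $\mergeop$ is defined by exactly two rules (Table~\ref{tab:env}), there are only two cases to consider, and in both the conclusion $\Solution\EnvX = \Solution\EnvX_1 = \Solution\EnvX_2$ follows almost immediately.

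First I would dispatch the base case \refrule{m-env 1}: here $\EnvX_1 = \EnvX_2 = \EnvX = \EmptyEnv$ and $\Constraints = \emptyset$, so all three environments have empty image under $\Solution$ and the equality is trivial. For the inductive case \refrule{m-env 2}, I would unfold the rule: $\EnvX_1 = \EnvX_1', \Name : \TypeExprT$, $\EnvX_2 = \EnvX_2', \Name : \TypeExprS$, $\EnvX = \EnvX', \Name : \TypeExprT$, with a shorter derivation $\mergenv{\EnvX_1'}{\EnvX_2'}{\EnvX'}{\Constraints'}$ and $\Constraints = \Constraints' \cup \set{ \TypeExprT \con= \TypeExprS }$. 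Because $\Solution$ is a solution for $\Constraints$, it is a fortiori a solution for the subset $\Constraints'$, and from the added constraint $\TypeExprT \con= \TypeExprS$ we get $\Solution\TypeExprT = \Solution\TypeExprS$; moreover, since $\Solution$ covers $\EnvX = \EnvX', \Name : \TypeExprT$ it covers $\EnvX'$, so the induction hypothesis applies and yields $\Solution\EnvX' = \Solution\EnvX_1' = \Solution\EnvX_2'$. Then
\[
\Solution\EnvX = \Solution\EnvX', \Name : \Solution\TypeExprT = \Solution\EnvX_1', \Name : \Solution\TypeExprT = \Solution\EnvX_1,
\]
and, using $\Solution\TypeExprT = \Solution\TypeExprS$,
\[
\Solution\EnvX = \Solution\EnvX', \Name : \Solution\TypeExprT = \Solution\EnvX_2', \Name : \Solution\TypeExprS = \Solution\EnvX_2,
\]
which closes the case.

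I do not expect a genuine obstacle here; the proof is entirely routine. The only points needing (minor) care are the two monotonicity-type observations used in the inductive step — that being a solution for $\Constraints$ entails being a solution for the smaller set $\Constraints'$ and satisfying the newly introduced equality, and that covering $\EnvX$ entails covering $\EnvX'$ so the induction hypothesis is applicable. The domain condition $\dom(\EnvX_1) = \dom(\EnvX_2)$ implicit in $\mergeop$ being defined is preserved automatically by the two rules, so no separate bookkeeping about domains is needed. This lemma, together with Lemma~\ref{lem:combine}, supplies the environment-level facts needed for the correctness and completeness results (Theorems~\ref{thm:correctness_processes} and~\ref{thm:completeness_processes}), in particular for the \refrule{i-case} rule, where $\mergeop$ is used.
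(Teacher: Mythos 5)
Your proof is correct and is essentially the argument the paper has in mind: the paper dismisses this lemma as a ``straightforward consequence of the definition'' of $\mergeop$, and your two-case induction on the derivation is precisely the routine unfolding that remark elides. The only details worth the care you gave them are the ones you flagged — that a solution for $\Constraints$ is a solution for $\Constraints' \subseteq \Constraints$ and also satisfies the added constraint $\TypeExprT \con= \TypeExprS$ (note $\Solution\TypeExprS$ is defined because $\TypeExprS \in \expr(\Constraints)$), and that covering $\EnvX$ gives covering of $\EnvX'$ — so nothing further is needed.
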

\begin{proof}
  Straightforward consequence of the definition of
  $\mergenv{\EnvX_1}{\EnvX_2}{\EnvX}{\Constraints}$.
\end{proof}

The correctness of constraint generation is proved by the next two results.

\begin{lemma}
\label{lem:correctness_expressions}
If $\rte\Expression\TypeExpr\EnvX\Constraints$ and $\Solution$ is a solution
for $\Constraints$ covering $\EnvX$, then
$\wte{\Solution\EnvX}\Expression{\Solution\TypeExpr}$.
\end{lemma}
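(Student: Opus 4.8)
The proof is a routine induction on the derivation of $\rte\Expression\TypeExpr\EnvX\Constraints$, with a case analysis on the last rule applied — necessarily one of the expression rules \refrule{i-int}, \refrule{i-name}, \refrule{i-inl}, \refrule{i-inr}, \refrule{i-pair}, \refrule{i-fst}, \refrule{i-snd} of Table~\ref{tab:generation}. Two small bookkeeping facts are used throughout. First, if $\Constraints = \Constraints_1 \cup \dots \cup \Constraints_k$ and $\Solution$ is a solution for $\Constraints$, then $\Solution$ is a solution for each $\Constraints_i$: this is immediate from Definition~\ref{def:solution}, since $\expr(\Constraints_i) \subseteq \expr(\Constraints)$ and each satisfaction clause is inherited by subsets of the constraint set. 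Second, I assume without loss of generality that $\Solution$ also covers $\TypeExpr$: the finitely many type/use variables occurring in $\TypeExpr$ but neither in $\EnvX$ nor in $\Constraints$ — e.g.\ the fresh $\tvar$ introduced by \refrule{i-inl} — can be added to $\Solution$ with arbitrary values without affecting whether it is a solution for $\Constraints$ or whether it covers $\EnvX$.

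The base cases are immediate. For \refrule{i-int}, $\EnvX = \EmptyEnv$, so $\Solution\EnvX = \EmptyEnv$ is unlimited and \refrule{t-int} gives $\wte{\EmptyEnv}{n}{\tint}$. For \refrule{i-name}, $\EnvX = \Name : \tvar$; since $\Solution$ covers $\EnvX$, $\Solution(\tvar)$ is a type, and \refrule{t-name} applied with the empty (hence unlimited) environment yields $\wte{\Name : \Solution\tvar}{\Name}{\Solution\tvar}$. The cases \refrule{i-inl} and \refrule{i-inr} follow directly from the induction hypothesis applied to the (identical) premise constraint set and environment, using that $\Solution$ distributes over $\tsum$ and choosing the ``guessed'' summand to be $\Solution$ of the corresponding fresh variable, then applying \refrule{t-inl} or \refrule{t-inr}.

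The only genuinely interesting case is \refrule{i-pair}. Its premises are $\rte{\Expression_i}{\TypeExpr_i}{\EnvX_i}{\Constraints_i}$ for $i = 1, 2$, together with $\combinenv{\EnvX_1}{\EnvX_2}{\EnvX}{\Constraints_3}$ and $\Constraints = \Constraints_1 \cup \Constraints_2 \cup \Constraints_3$. By the first bookkeeping fact $\Solution$ solves each $\Constraints_i$; moreover $\Solution$ covers each $\EnvX_i$, because by the definition of $\combineop$ (rules \refrule{c-env 1} and \refrule{c-env 2}) each entry $\EnvX_i(\Name)$ is either literally an entry of $\EnvX$ or appears inside a combination constraint $\tvar \ceq \EnvX_1(\Name) + \EnvX_2(\Name) \in \Constraints_3 \subseteq \Constraints$, hence lies in $\expr(\Constraints)$, which $\Solution$ covers. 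The induction hypothesis then yields $\wte{\Solution\EnvX_i}{\Expression_i}{\Solution\TypeExpr_i}$, Lemma~\ref{lem:combine} gives $\Solution\EnvX = \Solution\EnvX_1 \combine \Solution\EnvX_2$, and \refrule{t-pair} delivers $\wte{\Solution\EnvX}{\Pair{\Expression_1}{\Expression_2}}{\Solution\TypeExpr_1 \times \Solution\TypeExpr_2}$, as required. For \refrule{i-fst} (and symmetrically \refrule{i-snd}) the induction hypothesis on the premise gives $\wte{\Solution\EnvX}{\Expression}{\Solution\TypeExpr'}$ where $\TypeExpr'$ is the premise type; the added constraints $\TypeExpr' \ceq \tvarA \times \tvarB$ and $\unlimited\tvarB$ force $\Solution\TypeExpr' = \Solution\tvarA \times \Solution\tvarB$ and, via Definition~\ref{def:un}, $\unlimited{\Solution\tvarB}$, so \refrule{t-fst} gives $\wte{\Solution\EnvX}{\First[\Expression]}{\Solution\tvarA}$.

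There is no real obstacle: every step is unfolding of definitions except \refrule{i-pair}, which is discharged by Lemma~\ref{lem:combine}. The points I would spell out carefully are exactly the two bookkeeping facts above — in particular the observation that $\Solution$ covers the sub-environments $\EnvX_i$ produced by $\combineop$, which depends on the precise form of the $\combineop$ rules — since these same facts (and the dual lemma for $\mergeop$) recur in the companion statement for processes and feed into Theorem~\ref{thm:correctness_processes}.
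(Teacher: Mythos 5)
Your proof is correct and follows essentially the same route as the paper's: induction on the derivation of $\rte\Expression\TypeExpr\EnvX\Constraints$ with a case analysis on the last rule, discharging the \refrule{i-pair} case via Lemma~\ref{lem:combine} and reading the remaining cases off the generated constraints. The only difference is that you make explicit the coverage bookkeeping (that $\Solution$ covers the sub-environments produced by $\combineop$, and, after a harmless extension, the fresh variables introduced by \refrule{i-inl}/\refrule{i-inr}), which the paper leaves implicit.
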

\begin{proof}
  By induction on the derivation of
  $\rte\Expression\TypeExpr\EnvX\Constraints$ and by cases on the last
  rule applied.
  We only show two significant cases.

\rproofcase{i-name}
Then $\Expression = \Name$ and $\TypeExpr = \tvar$ fresh and $\EnvX =
\Name : \tvar$ and $\Constraints = \emptyset$.
We have $\Solution\EnvX = \Name : \Solution(\tvar)$ and
$\Solution\TypeExpr = \Solution(\tvar)$, hence we conclude
$\wte{\Solution\EnvX}\Expression{\Solution\TypeExpr}$.

\rproofcase{i-pair}
Then $\Expression = \Pair{\Expression_1}{\Expression_2}$ and
$\TypeExpr = \TypeExpr_1 \times \TypeExpr_2$ and $\Constraints =
\Constraints_1 \cup \Constraints_2 \cup \Constraints_3$ where
$\combinenv{\EnvX_1}{\EnvX_2}{\EnvX}{\Constraints_3}$ and
$\rte{\Expression_i}{\TypeExpr_i}{\EnvX_i}{\Constraints_i}$ for
$i=1,2$.
We know that $\Solution$ is a solution for $\Constraints_i$ for all
$i=1,2,3$.
By induction hypothesis we deduce
$\wte{\Solution\EnvX_i}{\Expression}{\Solution\TypeExpr_i}$ for
$i=1,2$.
From Lemma~\ref{lem:combine} we obtain $\Solution\EnvX =
\Solution\EnvX_1 \combine \Solution\EnvX_2$.
We conclude with an application of \refrule{t-pair}.
\end{proof}

\begin{reptheorem}{thm:correctness_processes}
  If $\rtp\Process\EnvX\Constraints$ and $\Solution$ is a solution for
  $\Constraints$ that covers $\EnvX$, then $\wtp{\Solution\EnvX}\Process$.
\end{reptheorem}
\begin{proof}
  By induction on the derivation of $\rtp\Process\EnvX\Constraints$ and
  by cases on the last rule applied.

\rproofcase{i-idle}
Then $\Process = \idle$ and $\EnvX = \EmptyEnv$ and $\Constraints =
\emptyset$. We conclude with an application of \refrule{t-idle}.

\rproofcase{i-in}
Then $\ProcessP = \receive\Expression\var\ProcessQ$ and
$\rte\Expression\TypeExprT{\EnvX_1}{\Constraints_1}$ and
$\rtp\ProcessQ{\EnvX_2, \var : \TypeExprS}{\Constraints_2}$ and
$\combinenv{\EnvX_1}{\EnvX_2}{\EnvX}{\Constraints_3}$ and
$\Constraints = \Constraints_1 \cup \Constraints_2 \cup \Constraints_3
\cup \set{ \TypeExprT \ceq \tchan\TypeExprS{1+\uvar_1}{2\uvar_2}}$.
By Lemma~\ref{lem:correctness_expressions} we deduce
$\wte{\Solution\EnvX_1}{\Expression}{\Solution\TypeExprT}$.
By induction hypothesis we deduce $\wtp{\Solution\EnvX_2, \var :
  \Solution\TypeExprS}{\ProcessQ}$.
By Lemma~\ref{lem:combine} we deduce $\Solution\EnvX =
\Solution\EnvX_1 + \Solution\EnvX_2$.
From the hypothesis that $\Solution$ is a solution for $\Constraints$
we know $\Solution\TypeExpr =
\tchan{\Solution\TypeExprS}{1+\Solution(\uvar_1)}{2\Solution(\uvar_2)}$.
We conclude with an application of \refrule{t-in}.

\rproofcase{i-out}
Then $\ProcessP = \send\ExpressionE\ExpressionF$ and
$\rte\ExpressionE\TypeExprT{\EnvX_1}{\Constraints_1}$ and
$\rte\ExpressionF\TypeExprS{\EnvX_2}{\Constraints_2}$ and
$\combinenv{\EnvX_1}{\EnvX_2}{\EnvX}{\Constraints_3}$ and
$\Constraints = \Constraints_1 \cup \Constraints_2 \cup \Constraints_3
\cup \set{ \TypeExprT \ceq \tchan\TypeExprS{2\uvar_1}{1+\uvar_2}}$.
By Lemma~\ref{lem:correctness_expressions} we deduce
$\wte{\Solution\EnvX_1}{\ExpressionE}{\Solution\TypeExprT}$ and
$\wte{\Solution\EnvX_2}{\ExpressionF}{\Solution\TypeExprS}$.
By Lemma~\ref{lem:combine} we deduce $\Solution\EnvX =
\Solution\EnvX_1 + \Solution\EnvX_2$.
From the hypothesis that $\Solution$ is a solution for $\Constraints$
we know $\Solution\TypeExprT =
\tchan{\Solution\TypeExprS}{2\Solution(\uvar_1)}{1+\Solution(\uvar_2)}$.
We conclude with an application of \refrule{t-out}.

\rproofcase{i-par}
Then $\ProcessP = \Process_1 \parop \Process_2$ and
$\rtp{\Process_i}{\EnvX_i}{\Constraints_i}$ for $i=1,2$ and
$\combinenv{\EnvX_1}{\EnvX_2}{\EnvX}{\Constraints_3}$ and $\Constraints =
\Constraints_1 \cup \Constraints_2 \cup \Constraints_3$.
By induction hypothesis we deduce $\wtp{\Solution\EnvX_i}{\Process_i}$
for $i=1,2$.
By Lemma~\ref{lem:combine} we deduce $\Solution\EnvX =
\Solution\EnvX_1 \combine \Solution\EnvX_2$.
We conclude with an application of \refrule{t-par}.

\rproofcase{i-rep}
Then $\ProcessP = \bang\ProcessQ$ and
$\rtp\ProcessQ{\EnvX'}{\Constraints_1}$ and
$\combinenv{\EnvX'}{\EnvX'}{\EnvX}{\Constraints_2}$ and $\Constraints
= \Constraints_1 \cup \Constraints_2$.
By induction hypothesis we deduce $\wtp{\Solution\EnvX'}\ProcessQ$.
By Lemma~\ref{lem:combine} we deduce $\Solution\EnvX = \Solution\EnvX'
+ \Solution\EnvX'$.
By Definition~\ref{def:un} we know that $\unlimited{\Solution\EnvX}$
holds. Furthermore, $\Solution\EnvX' + \Solution\EnvX$ is defined.
By Lemma~\ref{lem:weak} and Definition~\ref{def:tand} we deduce
$\wtp{\Solution\EnvX}\ProcessQ$.
We conclude with an application of \refrule{t-rep}.\Luca{Bisognerebbe
  discutere cosa succede se si riformula \refrule{i-rep} con i vincoli
  unlimited. Rispecchia pi\`u fedelmente il type system, ma produce
  usi meno precisi.}

\rproofcase{i-new}
Then $\ProcessP = \new\Channel\ProcessQ$ and
$\rtp\ProcessQ{\EnvX,\Channel:\TypeExpr}{\Constraints'}$ and
$\Constraints = \Constraints' \cup \set{ \TypeExpr \ceq
  \tchan\tvar\uvar\uvar }$.
By induction hypothesis we deduce $\wtp{\Solution\EnvX,
  \Channel:\Solution\TypeExpr}\ProcessQ$.
Since $\Solution$ is a solution for $\Constraints'$ we know that
$\Solution\TypeExpr =
\tchan{\Solution(\tvar)}{\Solution(\uvar)}{\Solution(\uvar)}$.
We conclude with an application of \refrule{t-new}.

\rproofcase{i-case}
Then $\ProcessP = \CaseShort\Expression{i}{\var}{\Process}$ and
$\rte{\Expression}{\Type}{\EnvX_1}{\Constraints_1}$ and
$\rtp{\Process_i}{\EnvX_i,\var_i:\TypeExpr_i}{\Constraints_i}$ for
$i=\Left,\Right$ and
$\mergenv{\EnvX_{\Left}}{\EnvX_{\Right}}{\EnvX_2}{\Constraints_2}$ and
$\combinenv{\EnvX_1}{\EnvX_2}{\EnvX}{\Constraints_3}$ and
$\Constraints = \Constraints_1 \cup \Constraints_2 \cup \Constraints_3
\cup \Constraints_{\Left} \cup \Constraints_{\Right} \cup \set{
  \TypeExpr \ceq \TypeExpr_{\Left} \tsum \TypeExpr_{\Right}}$.
By Lemma~\ref{lem:correctness_expressions} we deduce
$\wte{\Solution\EnvX_1}{\Expression}{\Solution\TypeExpr}$.
By induction hypothesis we deduce $\wtp{\Solution\EnvX_i}{\Process_i}$
for $i=\Left,\Right$.
By Lemma~\ref{lem:merge} we deduce $\Solution\EnvX_{\Left} =
\Solution\EnvX_{\Right} = \Solution\EnvX_2$.
By Lemma~\ref{lem:combine} we deduce $\Solution\EnvX =
\Solution\EnvX_1 + \Solution\EnvX_2$.
Since $\Solution$ is a solution for $\Constraints$, we have
$\Solution\TypeExpr = \Solution\TypeExpr_{\Left} \tsum
\Solution\TypeExpr_{\Right}$.
We conclude with an application of \refrule{t-case}.

\rproofcase{i-weak}
Then $\EnvX = \EnvX', \Name : \tvar$ and $\Constraints = \Constraints'
\cup \set{ \unlimited\tvar }$ where $\tvar$ is fresh and
$\rtp\ProcessP{\EnvX'}{\Constraints'}$.
By induction hypothesis we deduce $\wtp{\Solution\EnvX'}\Process$.
Since $\Solution$ is a solution for $\Constraints'$ we know that
$\unlimited{\Solution(\tvar)}$ holds.
Since $\Name \not\in \dom(\EnvX')$ we know that $\EnvX'\Solution +
\Name : \Solution(\tvar)$ is defined.
By Lemma~\ref{lem:weak}(2) we conclude $\wtp{\Solution\EnvX', \Name :
  \Solution(\tvar)}{\ProcessP}$.
\end{proof}

The next lemma relates once more $\combineop$ and type environment combination
$+$. It is, in a sense, the inverse of Lemma~\ref{lem:combine}.

\begin{lemma}
  \label{lem:combine_inverse}
  If $\Solution\EnvX_1 + \Solution\EnvX_2$ is defined, then there exist
  $\EnvX$, $\Constraints$, and $\Solution' \supseteq \Solution$ such that
  $\combinenv{\EnvX_1}{\EnvX_2}{\EnvX}{\Constraints}$ and $\Solution'$ is a
  solution for $\Constraints$ that covers $\EnvX$.
\end{lemma}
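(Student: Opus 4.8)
The statement is precisely the converse of Lemma~\ref{lem:combine}, so the plan is to reconstruct, from the hypothesis that $\Solution\EnvX_1 \combine \Solution\EnvX_2$ is defined, both a derivation of the $\combineop$ relation and a compatible extension of $\Solution$. I will proceed by induction on $n \eqdef |\dom(\EnvX_1) \cap \dom(\EnvX_2)|$, the number of names the two environments share; this is the right measure because the two rules of Table~\ref{tab:env} defining $\combineop$ are laid out exactly along this count: \refrule{c-env 1} applies when $n = 0$, while \refrule{c-env 2} strips one common name and recurses on environments with $n - 1$ common names. I will first record the elementary observation that, for $\Solution\EnvX_1 \combine \Solution\EnvX_2$ to be defined at all, $\Solution$ must cover both $\EnvX_1$ and $\EnvX_2$ (the type-environment combination of \eqref{eq:combine} is only defined on genuine type environments), and that for every $\Name \in \dom(\EnvX_1) \cap \dom(\EnvX_2)$ the type $\Solution\EnvX_1(\Name) + \Solution\EnvX_2(\Name)$ exists.

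\textbf{Base case.} When $n = 0$ the domains are disjoint, so \refrule{c-env 1} yields $\combinenv{\EnvX_1}{\EnvX_2}{\EnvX_1, \EnvX_2}{\emptyset}$, and I take $\EnvX \eqdef \EnvX_1, \EnvX_2$, $\Constraints \eqdef \emptyset$, and $\Solution' \eqdef \Solution$. The empty constraint set is trivially solved, and $\Solution$ covers $\EnvX$ since it covers $\EnvX_1$ and $\EnvX_2$.

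\textbf{Inductive step.} When $n > 0$, I pick some $\Name \in \dom(\EnvX_1) \cap \dom(\EnvX_2)$ and write $\EnvX_i = \EnvX_i', \Name : \TypeExpr_i$. Unfolding \eqref{eq:combine}, the hypothesis splits into: $\Solution\EnvX_1' \combine \Solution\EnvX_2'$ is defined, and $\Solution\TypeExpr_1 + \Solution\TypeExpr_2$ is defined. I apply the induction hypothesis to $\EnvX_1', \EnvX_2'$ (which share $n-1$ names), obtaining $\EnvX'$, $\Constraints'$, and $\Solution'' \supseteq \Solution$ with $\combinenv{\EnvX_1'}{\EnvX_2'}{\EnvX'}{\Constraints'}$, where $\Solution''$ solves $\Constraints'$ and covers $\EnvX'$. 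I then choose a type variable $\tvar$ fresh for $\Constraints'$, $\EnvX'$, $\TypeExpr_1$, $\TypeExpr_2$ and $\dom(\Solution'')$, apply \refrule{c-env 2} to get $\combinenv{\EnvX_1}{\EnvX_2}{\EnvX', \Name : \tvar}{\Constraints' \cup \set{\tvar \ceq \TypeExpr_1 + \TypeExpr_2}}$, and set $\Solution' \eqdef \Solution'' \cup \set{\tvar \mapsto \Solution\TypeExpr_1 + \Solution\TypeExpr_2}$. Checking that this $\Solution'$ works is routine: $\Solution' \supseteq \Solution$ holds by construction; since $\Solution$ already grounds $\TypeExpr_1, \TypeExpr_2$, the values $\Solution'\TypeExpr_i$ equal $\Solution\TypeExpr_i$, so $\Solution'$ satisfies the new constraint; since $\tvar$ is fresh for $\Constraints'$, adding the binding for $\tvar$ preserves $\Solution''$ being a solution of $\Constraints'$; and $\Solution'$ covers $\EnvX', \Name : \tvar$ because $\Solution''$ covers $\EnvX'$ and $\tvar \in \dom(\Solution')$.

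\textbf{Expected obstacle.} I do not expect anything conceptually hard here — the content is essentially Lemma~\ref{lem:combine} read in reverse. The one place that needs care is freshness hygiene: each invocation of \refrule{c-env 2} introduces a new variable $\tvar$ that must be kept disjoint from everything produced so far, in particular from $\dom(\Solution'')$, so that extending $\Solution''$ with a binding for $\tvar$ can neither disturb its coverage of $\EnvX'$ nor invalidate an already-satisfied constraint in $\Constraints'$. Making this precise is straightforward, consistent with the ``fresh variables'' convention adopted throughout Section~\ref{sec:generator}, but it is the one step the write-up must not gloss over.
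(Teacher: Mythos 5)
Your proof is correct and follows essentially the same route as the paper's: the same two-case analysis (disjoint domains via \refrule{c-env 1}, shared name peeled off via \refrule{c-env 2}), the same use of the induction hypothesis on the smaller environments, and the same extension of the solution with a fresh $\tvar \mapsto \Solution\TypeExpr_1 + \Solution\TypeExpr_2$. The only cosmetic difference is the induction measure (number of shared names rather than the maximum size of the environments), and your explicit attention to freshness hygiene makes precise a point the paper leaves implicit.
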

\begin{proof}
  By induction on the maximum size of $\EnvX_1$ and $\EnvX_2$. We
  distinguish two cases.

  \proofcase{$\dom(\EnvX_1) \cap \dom(\EnvX_2) = \emptyset$}
  We conclude by taking $\EnvX \eqdef \EnvX_1,\EnvX_2$ and $\Constraints
  \eqdef \emptyset$ and $\Solution' \eqdef \Solution$ and observing that
  $\combinenv{\EnvX_1}{\EnvX_2}{\EnvX}{\emptyset}$.

  \proofcase{$\EnvX_1 = \EnvX_1', \Name : \TypeExprT$ and $\EnvX_2 =
    \EnvX_2', \Name : \TypeExprS$}
  Since $\Solution\EnvX_1 + \Solution\EnvX_2$ is defined, we know that
  $\Solution\EnvX_1' + \Solution\EnvX_2'$ is defined as well and
  furthermore that $(\Solution\EnvX_1 + \Solution\EnvX_2)(\Name) =
  \Solution\TypeExprT + \Solution\TypeExprS$.
  By induction hypothesis we deduce that there exist $\EnvX'$,
  $\Constraints'$, and $\Solution'' \supseteq \Solution$ such that
  $\combinenv{\EnvX_1'}{\EnvX_2'}{\EnvX'}{\Constraints'}$ and $\Solution''$ is
  a solution for $\Constraints'$ that covers $\EnvX'$.
  Take $\EnvX \eqdef \EnvX', \Name : \tvar$ where $\tvar$ is fresh,
  $\Constraints \eqdef \Constraints' \cup \set{ \tvar \ceq \TypeExprT
    + \TypeExprS }$ and $\Solution' \eqdef \Solution'' \cup \set{
    \tvar \mapsto \Solution\TypeExprT + \Solution\TypeExprS }$.
  We conclude observing that $\Solution'$ is a solution for $\Constraints$
  that covers $\EnvX$.
\end{proof}

In order to prove the completeness of type reconstruction for
expressions, we extend the reconstruction algorithm with one more
weakening rule for expressions:
\[
\inferrule[\defrule{i-weak expr}]{
  \rte\Expression\TypeExpr\EnvX\Constraints
}{
  \rte\Expression\TypeExpr{\EnvX, \Name:\tvar}{\Constraints \cup \unlimited\tvar}
}
\]
This rule is unnecessary as far as completeness is concerned, because
there is already a weakening rule \refrule{i-weak} for processes that
can be used to subsume it. However, \refrule{i-weak expr} simplifies
both the proofs and the statements of the results that follow.

\begin{lemma}
\label{lem:completeness_expressions}
If $\wte\Env\Expression\Type$, then there exist $\TypeExpr$, $\EnvX$,
$\Constraints$, and $\Solution$ such that
$\rte\Expression\TypeExpr\EnvX\Constraints$ and $\Solution$ is a
solution for $\Constraints$ and $\Env = \Solution\EnvX$ and $\Type =
\Solution\TypeExpr$.
\end{lemma}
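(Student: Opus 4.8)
Looking at Lemma~\ref{lem:completeness_expressions}, the statement is a completeness result for constraint generation on expressions: given a typing derivation $\wte\Env\Expression\Type$, recover a generation judgment plus a solution that reproduces exactly that typing. This is the mirror image of Lemma~\ref{lem:correctness_expressions}, and the natural proof is by induction on the derivation of $\wte\Env\Expression\Type$, analysing the last typing rule applied (Table~\ref{tab:type_system}, the expression rules \refrule{t-int}, \refrule{t-name}, \refrule{t-inl}, \refrule{t-inr}, \refrule{t-pair}, \refrule{t-fst}, \refrule{t-snd}).

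The plan is as follows. First I would fix the invariant carried by the induction: for each case I must exhibit $\TypeExpr$, $\EnvX$, $\Constraints$, $\Solution$ with $\rte\Expression\TypeExpr\EnvX\Constraints$ derivable (now using the set of rules in Table~\ref{tab:generation} augmented with \refrule{i-weak expr}), $\Solution$ a solution for $\Constraints$, $\Env = \Solution\EnvX$, and $\Type = \Solution\TypeExpr$. For the base cases \refrule{t-int} and \refrule{t-name}: \refrule{t-int} gives $\Expression = n$, $\Type = \tint$, $\unlimited\Env$; apply \refrule{i-int} to get $\EnvX = \EmptyEnv$, then use \refrule{i-weak expr} repeatedly to add each $\Name : \tvar_\Name$ for $\Name \in \dom(\Env)$, and define $\Solution$ to map each such $\tvar_\Name$ to $\Env(\Name)$, which satisfies $\unlimited{\tvar_\Name}$ because $\unlimited\Env$. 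Similarly \refrule{t-name} gives $\Env = \Env', \Name : \Type$ with $\unlimited{\Env'}$; apply \refrule{i-name} then weaken with \refrule{i-weak expr} over $\dom(\Env')$. For the injection cases \refrule{t-inl}/\refrule{t-inr}, apply the induction hypothesis to the premise, then \refrule{i-inl}/\refrule{i-inr} introduces a fresh summand variable $\tvar$; extend $\Solution$ to send $\tvar$ to the summand type that was guessed in the typing derivation.

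The pair case \refrule{t-pair} is where the real work is. There $\Env = \Env_1 + \Env_2$ with $\wte{\Env_i}{\Expression_i}{\TypeT_i}$. Apply the induction hypothesis to each premise, obtaining $\rte{\Expression_i}{\TypeExpr_i}{\EnvX_i}{\Constraints_i}$ and solutions $\Solution_i$ with $\Env_i = \Solution_i\EnvX_i$ and $\TypeT_i = \Solution_i\TypeExpr_i$. One subtlety: the two generation derivations use disjoint pools of fresh variables (by the freshness convention built into Table~\ref{tab:generation}), so $\Solution_1$ and $\Solution_2$ have disjoint relevant domains and their union $\Solution_0 \eqdef \Solution_1 \cup \Solution_2$ is a solution for $\Constraints_1 \cup \Constraints_2$ with $\Env_i = \Solution_0\EnvX_i$. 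Now $\Solution_0\EnvX_1 + \Solution_0\EnvX_2 = \Env_1 + \Env_2$ is defined, so by Lemma~\ref{lem:combine_inverse} there exist $\EnvX$, $\Constraints_3$, and $\Solution \supseteq \Solution_0$ with $\combinenv{\EnvX_1}{\EnvX_2}{\EnvX}{\Constraints_3}$ and $\Solution$ a solution for $\Constraints_3$ covering $\EnvX$; moreover Lemma~\ref{lem:combine} gives $\Solution\EnvX = \Solution\EnvX_1 + \Solution\EnvX_2 = \Env$. Apply \refrule{i-pair} to get $\rte{\Pair{\Expression_1}{\Expression_2}}{\TypeExpr_1 \times \TypeExpr_2}{\EnvX}{\Constraints_1 \cup \Constraints_2 \cup \Constraints_3}$, and $\Solution$ solves the whole union and sends $\TypeExpr_1 \times \TypeExpr_2$ to $\TypeT_1 \times \TypeT_2 = \Type$. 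The projection cases \refrule{t-fst}/\refrule{t-snd} are similar: the typing rule has $\wte\Env\Expression{\TypeT \times \TypeS}$ with $\unlimited\TypeS$ (resp.\ $\unlimited\TypeT$) and concludes $\wte\Env{\First[\Expression]}\TypeT$; apply the induction hypothesis, then \refrule{i-fst} introduces fresh $\tvarA, \tvarB$ and the constraints $\TypeExpr \con\meq \tvarA \times \tvarB$ and $\unlimited\tvarB$; extend $\Solution$ by $\tvarA \mapsto \TypeT$, $\tvarB \mapsto \TypeS$, and check both constraints are satisfied (the coherence/equality constraint because $\Solution\TypeExpr = \TypeT \times \TypeS$, and $\unlimited\tvarB$ because $\unlimited\TypeS$).

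The main obstacle I anticipate is bookkeeping around freshness and the domains of substitutions: one must be careful that the solutions returned by different invocations of the induction hypothesis can be merged without conflict, and that ``$\Solution$ is a solution for $\Constraints$'' is preserved under the unions produced by \refrule{i-pair}, \refrule{i-fst}, etc. This is standard but tedious; the freshness convention of Table~\ref{tab:generation} is what makes it go through, and Lemmas~\ref{lem:combine} and~\ref{lem:combine_inverse} package the $\combineop$-versus-$+$ correspondence so that nothing genuinely new is needed. The role of \refrule{i-weak expr} is precisely to absorb the ``$\unlimited\Env$'' side conditions of \refrule{t-int} and \refrule{t-name} and any spurious domain entries, which is why it was introduced just before this lemma; the exact same idea (using \refrule{i-weak}) will recur in the proof of Theorem~\ref{thm:completeness_processes}.
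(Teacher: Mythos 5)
Your proposal is correct and follows essentially the same route as the paper's proof: induction on the typing derivation, using \refrule{i-weak expr} to absorb the $\unlimited\Env$ side conditions in the base cases, and for \refrule{t-pair} taking the disjoint union of the two inductively obtained solutions (justified by freshness) and invoking Lemma~\ref{lem:combine_inverse} (together with Lemma~\ref{lem:combine}) to reconcile $\combineop$ with $+$. The paper shows only the \refrule{t-name} and \refrule{t-pair} cases, and your treatment of both matches it.
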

\begin{proof}
  By induction on the derivation of $\wte\Env\Expression\Type$ and by
  cases on the last rule applied. We only show two representative
  cases.

\rproofcase{t-name}
Then $\Expression = \Name$ and $\Env = \Env', \Name : \Type$ and
$\unlimited{\Env'}$. Let $\Env' = \set{ \Name_i : \Type_i }_{i\in I}$.
Take $\TypeExpr \eqdef \tvar$ and $\EnvX \eqdef \set{ \Name_i :
  \tvar_i }_{i\in I}, \Name : \tvar$ and $\Constraints \eqdef \set{
  \unlimited{\tvar_i} \mid i \in I }$ and $\Solution \eqdef \set{
  \tvar_i \mapsto \Type_i }_{i\in I} \cup \set{ \tvar \mapsto \Type}$
where $\tvar$ and the $\tvar_i$'s are all fresh type variables.
Observe that $\rte\Expression\TypeExpr\EnvX\Constraints$ by means of
one application of \refrule{i-name} and as many applications of
\refrule{i-weak expr} as the cardinality of $I$.
We conclude observing that $\Solution$ is a solution for $\Constraints$ and
$\Env = \Solution\EnvX$ and $\Type = \Solution\TypeExpr$ by definition of
$\Solution$.

\rproofcase{t-pair}
Then $\Expression = \Pair{\Expression_1}{\Expression_2}$ and $\Env =
\Env_1 + \Env_2$ and $\Type = \Type_1 \times \Type_2$ and
$\wte{\Env_i}{\Expression_i}{\Type_i}$ for $i=1,2$.
By induction hypothesis we deduce that there exist $\TypeExpr_i$,
$\EnvX_i$, $\Constraints_i$, and $\Solution_i$ solution for
$\Constraints_i$ such that
$\rte{\Expression_i}{\TypeExpr_i}{\EnvX_i}{\Constraints_i}$ and
$\Env_i = \Solution_i\EnvX_i$ and $\Type_i = \Solution_i\TypeExpr_i$
for $i=1,2$.
Since the reconstruction algorithm always chooses fresh type
variables, we also know that $\dom(\Solution_1) \cap \dom(\Solution_2)
= \emptyset$.
Take $\Solution' \eqdef \Solution_1 \cup \Solution_2$. We have that
$\Solution'\EnvX_1 + \Solution'\EnvX_2 = \Env_1 + \Env_2$ is
defined. Therefore, by Lemma~\ref{lem:combine_inverse}, we deduce that
there exist $\EnvX$, $\Constraints_3$, and $\Solution \supseteq
\Solution'$ such that
$\combinenv{\EnvX_1}{\EnvX_2}\EnvX{\Constraints_3}$ and $\Solution$
is a solution for $\Constraints$ that covers $\EnvX$.
We conclude with an application of \refrule{i-pair} and taking
$\TypeExpr \eqdef \TypeExpr_1 \times \TypeExpr_2$ and $\Constraints
\eqdef \Constraints_1 \cup \Constraints_2 \cup \Constraints_3$.
\end{proof}

\begin{reptheorem}{thm:completeness_processes}
  \revised{If $\wtp{\Env}{\ProcessP}$, then there exist $\EnvX$,
    $\Constraints$, and $\Solution$ such that
    $\rtp\ProcessP{\EnvX}{\Constraints}$ and $\Solution$ is a solution
    for $\Constraints$ that covers $\EnvX$ and $\Env =
    \Solution\EnvX$.}
\end{reptheorem}
\begin{proof}
  By induction on the derivation of $\wtp{\Env}{\ProcessP}$ and by cases on
  the last rule applied. We only show a few cases, the others being analogous.

\rproofcase{t-idle}
Then $\Process = \idle$ and $\unlimited\Env$. Let $\Env = \set{
  \Name_i : \Type_i }_{i\in I}$.
Take $\EnvX \eqdef \set{ \Name_i : \tvar_i }_{i\in I}$ and
$\Constraints \eqdef \set{ \unlimited{\tvar_i} }_{i\in I}$ and
$\Solution \eqdef \set{ \tvar_i \mapsto \Type_i }_{i\in I}$ where the
$\tvar_i$'s are all fresh type variables.
By repeated applications of \refrule{i-weak} and one application of
\refrule{i-idle} we derive $\rtp\idle\EnvX\Constraints$.
We conclude observing that $\Solution$ is a solution for $\Constraints$ and
$\Env = \Solution\EnvX$.

\rproofcase{t-in}
Then $\Process = \receive\Expression\var\ProcessQ$ and $\Env = \Env_1
+ \Env_2$ and
$\wte{\Env_1}\Expression{\tchan\Type{1+\Use_1}{2\Use_2}}$ and
$\wtp{\Env_2,\var : \Type}{\ProcessQ}$.
By Lemma~\ref{lem:completeness_expressions} we deduce that there exist
$\TypeExpr$, $\EnvX_1$, $\Constraints_1$, and $\Solution_1$ solution
for $\Constraints_1$ such that
$\rte\Expression\TypeExpr{\EnvX_1}{\Constraints_1}$ and $\Env_1 =
\Solution_1\EnvX_1$ and $\tchan\Type{1+\Use_1}{2\Use_2} =
\Solution_1\TypeExpr$.
By induction hypothesis we deduce that there exist $\EnvX_2'$,
$\Constraints_2$, and $\Solution_2$ solution for $\Constraints_2$ such
that $\Env_2, \var : \Type = \Solution_2\EnvX_2'$.
Then it must be the case that $\EnvX_2' = \EnvX_2, \var : \TypeExprS$
for some $\EnvX_2$ and $\TypeExprS$ such that $\Env_2 =
\Solution_2\EnvX_2$ and $\Type = \Solution_2\TypeExprS$.
Since all type variables chosen by the type reconstruction algorithm
are fresh, we know that $\dom(\Solution_1) \cap \dom(\Solution_2) =
\emptyset$. Take $\Solution' \eqdef \Solution_1 \cup \Solution_2 \cup
\set{ \uvar_1 \mapsto \Use_1, \uvar_2 \mapsto \Use_2 }$.
Observe that $\Solution'\EnvX_1 + \Solution'\EnvX_2 = \Env_1 + \Env_2$ which
is defined. By Lemma~\ref{lem:combine_inverse} we deduce that there exist
$\EnvX$, $\Constraints_3$, and $\Solution \supseteq \Solution'$ such that
$\combinenv{\EnvX_1}{\EnvX_2}{\EnvX}{\Constraints_3}$ and $\Solution$ is a
solution for $\Constraints_3$ that covers $\EnvX$.
Take $\Constraints \eqdef \Constraints_1 \cup \Constraints_2 \cup
\Constraints_3 \cup \set{ \TypeExprT \ceq
  \tchan\TypeExprS{1+\uvar_1}{2\uvar_2}}$.
Then $\Solution$ is a solution for $\Constraints$, because
$\Solution\TypeExprT = \tchan\Type{1+\Use_1}{2\Use_2} =
\tchan{\Solution\TypeExprS}{1+\Solution(\uvar_1)}{2\Solution(\uvar_2)}
= \Solution\tchan\TypeExprS{1+\uvar_1}{2\uvar_2}$.
Also, by Lemma~\ref{lem:combine} we have $\Solution\EnvX =
\Solution\EnvX_1 + \Solution\EnvX_2 = \Env_1 + \Env_2 = \Env$.
We conclude $\rtp\ProcessP\EnvX\Constraints$ with an application of
\refrule{i-in}.

\rproofcase{t-par}
Then $\Process = \Process_1 \parop \Process_2$ and $\Env = \Env_1 +
\Env_2$ and $\wtp{\Env_i}{\Process_i}$ for $i=1,2$.
By induction hypothesis we deduce that, for every $i=1,2$, there exist
$\EnvX_i$, $\Constraints_i$, and $\Solution_i$ solution for
$\Constraints_i$ such that $\rtp{\Process_i}{\EnvX_i}{\Constraints_i}$
and $\Env_i = \Solution_i\EnvX_i$.
We also know that $\dom(\Solution_1) \cap \dom(\Solution_2) =
\emptyset$ because type/use variables are always chosen fresh.
Take $\Solution' \eqdef \Solution_1 \cup \Solution_2$.
By Lemma~\ref{lem:combine_inverse} we deduce that there exist $\EnvX$,
$\Constraints_3$, and $\Solution \supseteq \Solution'$ such that
$\combinenv{\EnvX_1}{\EnvX_2}{\EnvX}{\Constraints_3}$ and $\Solution$ is a
solution for $\Constraints_3$ that covers $\EnvX$.
By Lemma~\ref{lem:combine} we also deduce that $\Solution\EnvX =
\Solution\EnvX_1 + \Solution\EnvX_2 = \Env_1 + \Env_2 = \Env$.
We conclude by taking $\Constraints \eqdef \Constraints_1 \cup
\Constraints_2 \cup \Constraints_3$ with an application of
\refrule{i-par}.

\rproofcase{t-rep}
Then $\Process = \bang\ProcessQ$ and $\wtp\Env\ProcessQ$ and
$\unlimited\Env$.
By induction hypothesis we deduce that there exist $\EnvX'$,
$\Constraints'$, and $\Solution'$ solution for $\Constraints'$ such
that $\rtp\ProcessQ{\EnvX'}{\Constraints'}$ and $\Env =
\Solution'\EnvX'$.
Obviously $\Solution'\EnvX' + \Solution'\EnvX'$ is defined, hence by
Lemma~\ref{lem:combine_inverse} we deduce that there exist $\EnvX$,
$\Constraints''$, and $\Solution \supseteq \Solution'$ such that
$\combinenv{\EnvX'}{\EnvX'}{\EnvX}{\Constraints''}$ and $\Solution$ is
a solution for $\Constraints''$.
By Lemma~\ref{lem:combine} we deduce $\Solution\EnvX = \Solution\EnvX'
+ \Solution\EnvX' = \Env + \Env = \Env$, where the last equality
follows from the hypothesis $\unlimited\Env$ and
Definition~\ref{def:un}.
We conclude $\rtp\Process\EnvX\Constraints$ with an application of
\refrule{i-rep} by taking $\Constraints \eqdef \Constraints' \cup
\Constraints''$.
\end{proof}

\section{Supplement to Section~\ref{sec:solver}}
\label{sec:extra_solver}

\revised{Below is the derivation showing the reconstruction algorithm
  at work on the process~\eqref{eq:discuss3}.
\[
\begin{prooftree}
  \pt{
    \pt{
      \rte{
        \Channel
      }{
        \tvar_1
      }{
        \Channel : \tvar_1
      }{
        \emptyset
      }
      \quad
      \rte{
        3
      }{
        \tint
      }{
        \emptyset
      }{
        \emptyset
      }
      \justifies
      \rtp{
        \send\Channel{3}
      }{
        \Channel : \tvar_1
      }{
        \set{ \tvar_1 \ceq \tchan\tint{2\uvar_1}{1+\uvar_2} }
      }
      \using\refrule{i-out}
    }
    \quad
    \pt{
      \rte{
        \ChannelB
      }{
        \tvarB
      }{
        \ChannelB : \tvarB
      }{
        \emptyset
      }
      \quad
      \rte{
        \ChannelA
      }{
        \tvar_2
      }{
        \Channel : \tvar_2
      }{
        \emptyset
      }
      \justifies
      \rtp{
        \send\ChannelB\ChannelA
      }{
        \ChannelA : \tvar_2,
        \ChannelB : \tvarB
      }{
        \set{
          \tvarB \ceq \tchan{\tvar_2}{2\uvar_3}{1+\uvar_4}
        }
      }
      \using\refrule{i-out}
    }
    \justifies
    \rtp{
      \send\Channel{3}
      \parop
      \send\ChannelB\ChannelA
    }{
      \ChannelA : \tvarA,
      \ChannelB : \tvarB
    }{
      \set{
        \tvar \ceq \tvar_1 + \tvar_2,
        \tvar_1 \ceq \tchan\tint{2\uvar_1}{1+\uvar_2},
        \tvarB \ceq \tchan{\tvar_2}{2\uvar_3}{1+\uvar_4}
      }
    }
    \using\refrule{i-par}
  }
  \justifies
  \rtp{
    \new\Channel\ttparens{
      \send\Channel{3}
      \parop
      \send\ChannelB\ChannelA
    }
  }{
    \ChannelB : \tvarB
  }{
    \set{
      \tvar \ceq \tchan\tvarD{\uvar_5}{\uvar_5},
      \tvar \ceq \tvar_1 + \tvar_2,
      \dots
    }
  }
  \using\refrule{i-new}
\end{prooftree}
\]
\noindent
Below is the derivation showing the reconstruction algorithm at work
on the process~\eqref{eq:discuss4}. Only the relevant differences with
respect to the derivation above are shown.
\[
\begin{prooftree}
  \pt{
    \vdots
    \quad
    \pt{
      \pt{
        \vdots
        \qquad
        \vdots
        \justifies
        \rtp{
          \send\ChannelB\ChannelA
        }{
          \ChannelA : \tvarA_2,
          \ChannelB : \tvarB
        }{
          \set{
            \tvarB \ceq \tchan{\tvar_2}{2\uvar_3}{1+\uvar_4}
          }
        }
      }
      \quad
      \pt{
        \vdots
        \qquad
        \vdots
        \justifies
        \rtp{
          \send\ChannelC\ChannelA
        }{
          \ChannelA : \tvarA_3,
          \ChannelC : \tvarC
        }{
          \set{
            \tvarC \ceq \tchan{\tvar_3}{2\uvar_5}{1+\uvar_6}
          }
        }
      }
      \justifies
      \rtp{
        \send\ChannelB\ChannelA
        \parop
        \send\ChannelC\ChannelA
      }{
        \ChannelA : \tvar_{23},
        \ChannelB : \tvarB,
        \ChannelC : \tvarC
      }{
        \set{
          \tvar_{23} \ceq \tvar_2 + \tvar_3,
          \dots
        }
      }
      \using\refrule{i-par}
    }
    \justifies
    \rtp{
      \send\Channel{3}
      \parop
      \send\ChannelB\ChannelA
      \parop
      \send\ChannelC\ChannelA
    }{
      \ChannelA : \tvarA,
      \ChannelB : \tvarB,
      \ChannelC : \tvarC
    }{
      \set{
        \tvar \ceq \tvar_1 + \tvar_{23},
        \dots
      }
    }
    \using\refrule{i-par}
  }
  \justifies
  \rtp{
    \new\Channel\ttparens{
      \send\Channel{3}
      \parop
      \send\ChannelB\ChannelA
      \parop
      \send\ChannelC\ChannelA
    }
  }{
    \ChannelB : \tvarB,
    \ChannelC : \tvarC
  }{
    \set{
      \tvar \ceq \tchan\tvarD{\uvar_5}{\uvar_5},
      \tvar \ceq \tvar_1 + \tvar_{23},
      \dots
    }
  }
  \using\refrule{i-new}
\end{prooftree}
\]
\noindent
Below is the derivation showing the reconstruction algorithm at work
on the process~\eqref{eq:discuss5}.
\[
\begin{prooftree}
  \pt{
    \rte{
      \ChannelB
    }{
      \tvarB
    }{
      \ChannelB : \tvarB
    }{
      \emptyset
    }
    \qquad
    \rte{
      \var
    }{
      \tvarC
    }{
      \var : \tvarC
    }{
      \emptyset
    }
    \justifies
    \rtp{
      \send\ChannelB\var
    }{
      \ChannelB : \tvarB,
      \var : \tvarC
    }{
      \set{
        \tvarB \ceq \tchan\tvarC{2\uvar_1}{1+\uvar_2}
      }
    }
    \using\refrule{i-out}
  }
  \justifies
  \rtp{
    \receive\ChannelA\var
    \send\ChannelB\var
  }{
    \ChannelA : \tvarA,
    \ChannelB : \tvarB
  }{
    \set{
      \tvarA \ceq \tchan\tvarC{1+\uvar_3}{2\uvar_4},
      \tvarB \ceq \tchan\tvarC{2\uvar_1}{1+\uvar_2}
    }
  }
  \using\refrule{i-in}
\end{prooftree}
\]
}

\end{document}